\newcommand{\rright}{\right}
\newcommand{\lleft}{\left}
\newcommand{\R}{\mathbb{R}}
\def\llambda{{\lambda\hspace{-5pt}\lambda}}
\newtheorem{theorem}{Theorem}[section]
\newtheorem{corollary}[theorem]{Corollary}
\newtheorem{lemma}[theorem]{Lemma}
\def\th{\vartheta}
\def\eps{\varepsilon}
\def\e{\mathrm{e}}
\newcommand{\eqref}[1]{(\ref{#1})}
\renewcommand{\emptyset}{\varnothing}
\newcommand{\Cov}{\operatorname{Cov}}
\def\sfrac#1#2{#1/#2}
\def\vfrac#1#2{(#1)/#2}
\newcolumntype{d}[1]{D{.}{.}{#1}}
\newcommand{\bigtimes}{\mathop{\!\mbox{\parbox[c][9pt][b]{18pt}{\fontsize{18}{18}\selectfont{$\times$}}}\!\!\!\!}}
\begin{document}
\begin{frontmatter}

\title{Estimating failure probabilities}
\runtitle{Estimating failure probabilities}

\begin{aug}
\author[A]{\inits{H.}\fnms{Holger} \snm{Drees}\corref{}\thanksref{A}\ead[label=e1]{holger.drees@math.uni-hamburg.de}} \and
\author[B]{\inits{L.}\fnms{Laurens} \snm{de Haan}\thanksref{B}\ead[label=e2]{ldehaan@ese.eur.nl}}
\address[A]{University of Hamburg, Department of Mathematics,
Bundesstr. 55, 20146 Hamburg, Germany.\\
\printead{e1}}
\address[B]{Erasmus University Rotterdam, Department of Economics,
P.O. Box 1738, 3000 DR Rotterdam, The Netherlands.
\printead{e2}}
\end{aug}

\received{\smonth{3} \syear{2012}}
\revised{\smonth{11} \syear{2013}}

%
\begin{abstract}
In risk management, often the probability must be estimated that a
random vector falls
into an extreme failure set. In the framework of bivariate
extreme value theory, we construct an estimator for such failure
probabilities and analyze its asymptotic properties under natural
conditions. It turns out that the estimation error is mainly determined by
the accuracy of the statistical analysis of the marginal
distributions if the extreme value approximation to the dependence
structure is at least as accurate as the generalized Pareto
approximation to the marginal distributions. Moreover, we establish
confidence intervals and
briefly discuss generalizations to higher dimensions and issues
arising in practical applications as well.
\end{abstract}

%
\begin{keyword}
\kwd{asymptotic normality}
\kwd{exceedance probability}
\kwd{failure set}
\kwd{homogeneity}
\kwd{multivariate extremes}
\kwd{out of sample extrapolation}
\kwd{peaks over threshold}
\end{keyword}

\end{frontmatter}

\section{Introduction}\label{intro}

\subsection{Motivation}

Suppose an insurance company has contracts in two related lines of
business with all customers of an insurance portfolio (e.g., fire
insurance and business interruption insurance for industrial
customers). On top of quota reinsurances for both lines of business
(possibly with different quotas) the remaining total loss from each
incidence is covered by an excess of loss reinsurance (CAT-XL) that
pays for the part of the total loss which exceeds a given high
retention level $R$. If $X$ and $Y$ denote the original losses from
a fire in both lines of business and $1-\alpha_X$ and $1-\alpha_Y$
the corresponding quotas, then a claim occurs in the XL-reinsurance
if $\alpha_XX+\alpha_YY$ exceeds~$R$. For the purpose of risk
management, the reinsurer might be interested in the probability that
the insurance company will file a claim in case of a fire. If the
retention level is high, then the claim probability cannot be
estimated using simple empirical estimates, because in the past the
retention has rarely (or never) been exceeded.

In this paper, a more general setting is considered. We are
interested in estimating the probability that a pair of random
variables $(X,Y)$ will take on a value in some given ``extreme''
set. Similar problems arise naturally in many fields. For example, a
coastal dike may fail if the vector build from the still water level
and the wave heights lie in a certain failure set $D$ (cf. Coles and Tawn \cite{ct94}, Bruun and Tawn \cite{bt98}, and de
Haan and de Ronde \cite{hr98}).
A financial option (like a down-and-out-put) may become
worthless if the price vector of underlyings enters such a ``failure
set''. Finally, (part of) the principal of a catastrophe bond gets
lost for the investors if a vector of triggers becomes too extreme.

As there are insufficiently many observations available in the
extreme failure set $D$ to use standard statistical methods, extreme
value theory is needed to estimate the failure probability
$P\{(X,Y)\in D\}$.

\subsection{Extreme value approximations}
The basic idea of multivariate extreme value
theory is to assume that the suitably standardized componentwise
maxima of the observed random vectors converge to a non-degenerate
limit distribution. This assumption is
equivalent to the convergence of suitably standardized quantile
functions of all marginal distributions and a condition on the
dependence structure in extreme regions.

To be more precise, denote the marginal distribution functions of
$X$ and $Y$ by $F_1$ and $F_2$, respectively, and let $U_i(t):=
F_i^\leftarrow(1-1/t)$ with $H^\leftarrow(s):=\inf\{x\in\R\mid
H(x)\ge s\}$ denoting the generalized inverse of an increasing
function $H$. We assume that there exist real constants $\gamma_i$,
positive functions $a_i$ and real functions $b_i$ such that for
$x>0$ and $i\in\{1,2\}$
%
\begin{equation}
\label{eq:marginapprox} \lim_{t\to\infty} \frac{U_i(tx)-b_i(t)}{a_i(t)} =
\frac{x^{\gamma_i}-1}{\gamma_i}.
\end{equation}
For $\gamma_i=0$ read the right-hand side as $\log x$. Note that
the right-hand side is the $U$-function of the generalized Pareto
distribution (GPD) with distribution function
$1-(1+\gamma_ix)^{-1/\gamma_i}$ for $1+\gamma_ix>0$, that is to be
interpreted as the standard exponential distribution function for
$\gamma_i=0$. The parameter $\gamma_i$ is the so-called extreme
value index of the $i$th marginal. If it is positive, then the
support of $F_i$ is unbounded from above and $1-F_i(t)$ roughly
decays like the power function with exponent $1/\gamma_i$, while for
$\gamma_i<0$ the right endpoint $x_i^*:=F_i^\leftarrow(1)$ of the
support is finite and $1-F_i(x)$ roughly behaves like a multiple of
$(x_i^*-x)^{-1/\gamma_i}$ as $x\uparrow x_i^*$.

The aforementioned extremal dependence condition can be given in
terms of the standardized random variables $1-F_1(X)$ and
$1-F_2(Y)$, that are uniformly distributed on $[0,1]$ if the
marginal distributions are continuous. More precisely, we assume the
existence of a measure $\nu$ such that for $\nu$-continuous Borel
sets $B\subset[0,\infty)^2$ bounded away from the origin
%
\begin{equation}
\label{eq:nudef} \lim_{t\to\infty} t P\bigl\{(X,Y)\in U(tB)\bigr\} =
\nu(B).
\end{equation}
Here and in what follows, for functions $h_1$,
$h_2$ which are defined on subsets of the reals, we define a
function $h$ on a subset of $\R^2$ by
$h(x_1,x_2):=(h_1(x_1),h_2(x_2))$. The so-called exponent measure
$\nu$ describes the asymptotic dependence structure between extreme
observations $X$ and~$Y$. Its homogeneity property
%
\begin{equation}
\label{eq:homog} \nu(tB) = t^{-1} \nu(B),
\end{equation}
which holds for all Borel sets $B\subset[0,\infty)^2$ and all $t>0$,
will be pivotal for the construction of our estimator of the failure
probability. (Seen from a different angle, we assume an approximate
scaling law for the joint distribution of $U^\leftarrow(X,Y)$; cf.
Anderson \cite{a94}.) In addition, we need certain smoothness
assumptions
to ensure that $\nu$ does not have mass on the coordinate axes and not
too much mass in their neighborhoods (cf. condition (D2) in
Section~\ref{sec2.2}). Further details about the extreme value
assumptions can be found in de Haan and Ferreira \cite{hf06},
Sections~1.2 and 6.1, or Beirlant \textit{et al}. \cite{bgstwf04}, Chapters 2~and~8.\looseness=1

\subsection{Construction of estimators of extreme failure probabilities}
We are interested in the situation that at most a few observations
lie in the extreme failure set $D$ which implies that in our
mathematical framework the failure set $D=D_n$ must depend on the
sample size $n$ such that the failure probability
\[
p_n:=P\bigl\{(X,Y)\in D_n\bigr\}
\]
tends to 0. To motivate an estimator of $p_n$ based on independent
copies $(X_i,Y_i)$, $1\le i\le n$, of $(X,Y)$ first note that from
\eqref{eq:nudef} we obtain the approximation
%
\begin{equation}
\label{eq:nuapprox} \frac{n}k P \biggl\{ \frac{k}n
U^\leftarrow(X,Y) \in B \biggr\} \approx \nu(B)
\end{equation}
for any sequence $k=k_n\to\infty$ such that $k/n\to0$. To estimate
$p_n$ using this approximation, we must replace $U^\leftarrow$ and
$\nu$ with suitable estimators.

According to \eqref{eq:marginapprox}, we may approximate
$U_i((n/k)x)$ for sufficiently large $n$ by
%
\begin{equation}
\label{eq:Tnidef} T_{n,i}(x) := a_i(n/k)
\frac{x^{\gamma_i}-1}{\gamma_i}+b_i(n/k)
\end{equation}
and estimate it by
%
\begin{equation}
\label{eq:Tnihatdef} \hat T_{n,i}(x) := \hat a_i(n/k)
\frac{x^{\hat\gamma_i}-1}{\hat
\gamma_i}+\hat b_i(n/k),
\end{equation}
where $\hat a_i(n/k), \hat b_i(n/k)$ and $\hat\gamma_i$ are suitable
estimators for $a_i(n/k), b_i(n/k)$ and $\gamma_i$, respectively.
Likewise, the
generalized inverse functions $(k/n)U_i^\leftarrow(x)$ can be
estimated by
%
\begin{equation}
\label{eq:Tniinvhatdef} \hat T_{n,i}^\leftarrow(x) := \biggl(1+\hat
\gamma_i \frac{x-\hat b_i(n/k)}{\hat a_i(n/k)} \biggr)^{1/\hat\gamma_i}.
\end{equation}
Here and in the sequel, $(1+\gamma y)^{1/\gamma}$ is defined as
$\e^y$ if $\gamma=0$. For $1+\gamma y<0$ (or $1+\gamma y=0$ and
$\gamma<0$) the term $(1+\gamma y)^{1/\gamma}$ is not well defined.
If $\gamma$ is positive and $y<-1/\gamma$, then it may be
interpreted as 0, while for $\gamma<0$ and $y>-1/\gamma$ it may be
defined to be $\infty$. However, we will see that the precise
definition of $(1+\gamma y)^{1/\gamma}$ for very small and for
negative values of $1+\gamma y$ is not important in the present
setting (provided it is taken to be a non-decreasing function of
$y$), because the sets on which $\hat T_{n,i}^\leftarrow$,
$i\in\{1,2\}$, are not well defined are asymptotically negligible.

If, in \eqref{eq:nuapprox}, we substitute $\hat
T_n^\leftarrow(x_1,x_2):= (\hat T_{n,1}^\leftarrow(x_1), \hat
T_{n,2}^\leftarrow(x_2) )$ for the marginal transformation
$(k/n)U^\leftarrow$ and replace the probability in the left-hand
side of \eqref{eq:nuapprox} by its empirical counterpart, we arrive
at the following estimator of $\nu$
%
\begin{equation}
\label{eq:hatnundef} \hat\nu_n(B) := \frac{1}k \sum
_{i=1}^n \eps_{\hat
T_n^\leftarrow(X_i,Y_i)}(B),
\end{equation}
with $\eps_x$ denoting the Dirac measure with mass 1 at $x$.

Now, again interpreting convergence \eqref{eq:nudef} (for $t=e_n$)
as an approximation, we may estimate the failure probability as
follows:
%
\begin{eqnarray}
p_n & = & P\bigl\{(X,Y)\in D_n\bigr\}
\nonumber
\\
& = & P \bigl\{(X,Y)\in U\bigl(e_n\cdot e_n^{-1}U^\leftarrow(D_n)
\bigr) \bigr\}
\nonumber
\\
\label{eq:pnapprox1}& \approx& \frac{1}{e_n} \nu \bigl( e_n^{-1}
U^\leftarrow(D_n) \bigr)
\\
& \approx& \frac{1}{e_n} \nu \biggl( \frac{n}{ke_n} \hat
T_n^\leftarrow(D_n) \biggr)
\nonumber
\\
& \approx& \frac{1}{e_n} \hat\nu_n \biggl( \frac{n}{ke_n}
\hat T_n^\leftarrow(D_n) \biggr)
\nonumber
\\
\label{eq:pnhatdef}& =: & \hat p_n.
\end{eqnarray}

The basic idea of this estimator is to blow up the failure set,
after a standardization of the marginals, such that it contains
sufficiently many observations to allow the estimation of its
probability by an empirical probability. Note that, for given marginal
transformations, the estimator $\hat p_n$ depends on the tuning
parameters $k$ and $e_n$ only via their product $ke_n$, which controls
the factor by which the transformed failure set is blown up; see
Section~\ref{subsect:ken} for a detailed discussion. This factor
should be chosen by the statistician such that two contrary effects are
balanced. On the one hand,
$ke_n$ must not be too small, such that the inflated standardized
failure set $n/(ke_n) \hat T_n^\leftarrow(D_n)$ contains
sufficiently many marginally transformed observations $\hat
T_n^\leftarrow(X_i,Y_i)$, and thus the empirical probability
$\hat\nu_n ( n/(ke_n) \hat
T_n^\leftarrow(D_n) )$ is an accurate estimate of its expectation. On
the other hand, the set $e_n^{-1} U^\leftarrow(D_n)$ must be
sufficiently extreme to justify approximation \eqref{eq:pnapprox1}.
In Section~\ref{subsect:ken}, we discuss a heuristic tool to ensure
this balance.

\subsection{Alternative approaches} \label{subsect:alternatives}

An estimator related to $\hat p_n$ has been suggested and analyzed
by de Haan and Sinha \cite{hs99} in a much more restrictive
framework. In
particular, specific estimators for the marginal parameters have been
considered which use the same number $k_n$ of
largest order statistics for both marginal fits, which is
inefficient if the GPD approximation (cf. \eqref{eq:secordunif}
below) is less accurate for one of the marginal distributions.
Likewise, the flexibility of the estimator is increased in the
present paper by allowing that the blow-up factor $e_n$ deviates
from the unknown model constant $d_n$ defined below, while de Haan and Sinha \cite{hs99} used a consistent estimator of $d_n$ that was made
identifiable in a quite arbitrary way by fixing some point on the
boundary of some set $S$, which together with the factor $d_n$
determines the failure set $D_n$ (see \eqref{eq:Sdef}). In our
simulation study, it turns out that the inferior performance of the
estimator proposed by de Haan and Sinha is mainly caused by this often
inappropriate choice of $e_n$.

Moreover, the shape of the failure set considered by de Haan and
Sinha is restricted. For example, the case $q(\infty)=0$ (in our notation;
cf. condition (Q2) below) is ruled out by condition (2.9) of that
paper. The model assumption
\[
D_n:= \bigl\{(s,t)\mid f(s/x_n,t/y_n)\ge1
\bigr\}
\]
for some function $f$ and sequences of normalizing constants $x_n$
and $y_n$ seems quite restrictive and unnatural, because it allows
the failure set to tend towards the ``north-east'' only by a linear
scaling of both marginals. This parametrization does not fit well to
extreme value theory if the extreme value indices are not positive,
which is usually the case in environmetrics, one of the most important
fields of application of our theory besides financial risk management.

Even more troublesome
is the fact that by assumption (1.5) of de Haan and Sinha \cite{hs99} the
failure set is described in terms of the number $k_n$ of largest
order statistics that is picked by the statistician. Hence, the model
parametrization depends on the statistical procedure used to analyze
the model, which makes it extremely difficult to interpret.

Finally, while the influence of each marginal transformation is
clearly separated in the description of the limiting distribution in
our main Theorem~\ref{theo:main}, in Theorem~4.1 of de Haan and Sinha
\cite{hs99} the marginal parameters are seemingly intermingled.
Therefore, the generalization of the present results to higher
dimension is much more straightforward than those of de Haan and
Sinha (see the discussion in Section~\ref{subsect:multi}).

An alternative to our genuinely multivariate estimator can be
constructed by the so-called structural variable approach if the
failure set is of the form $D_n=\{(s,t)\mid h(s,t)\ge t_n\}$ for
some known function $h$ and threshold $t_n$. Then one may apply
techniques from univariate extreme value theory to the
pseudo-observations $h(X_i,Y_i)$, $1\le i\le n$ (cf. Coles
\cite{c01},
Chapter 8.2.4 and page 156, or Bruun and Tawn \cite{bt98}). However,
even for
this class of failure sets, an analysis of the dependence structure
between the two components of the observed vectors is of independent
interest, and it seems more natural to use the same approach for
model fitting and for the estimation of quantities like failure
probabilities. Moreover, often one wants to estimate the failure
probability for several different sets (e.g., to find the cheapest
construction to ensure a certain level of safety); in this case it
is both more efficient and more natural to use estimators in a
unified framework as considered in the present paper.

In the multivariate approach, Coles and Tawn \cite{ct94} and
Bruun and Tawn \cite{bt98} used parametric models for the
dependence structure in
the closely related problem to estimate a parameter defining a
failure set such that the corresponding failure probability equals a
given value. However, usually there is no physical reason for such
parametric models. By using them nevertheless, one trades a modeling
error, which is difficult to assess, for an estimation error, which
can be quantified at least asymptotically (see Theorem~\ref{theo:main} below). Having said this, it may be sensible to use a
parametric estimator of the failure probability if experience strongly
suggest that a simple model describes the data well. In that case, our
approach may be used as a countercheck of the model assumptions.

Note that our assumptions rule out that the exponent measure $\nu$
puts mass
on the coordinate axes. In particular, $X$ and $Y$ are assumed
asymptotically dependent in the sense of multivariate extreme value
theory in that $\lim_{t\to\infty} P(X>U_1(t)\mid Y>U_2(t))>0$. In
the case of asymptotic independent coordinates $X$ and
$Y$, consistency of an analogous estimator for the failure
probability was proved by Draisma \textit{et al}. \cite{ddfh04}, while its
asymptotic normality was established by M\"{u}ller \cite{m08}.

\subsection{Outline}

The paper is organized as follows: In Section~\ref{sect:main}, we
first introduce
and discuss in detail the framework in which we then prove
asymptotic normality of our estimator of the failure probability.
Moreover, we propose a consistent estimator of the limiting variance,
derive an asymptotic confidence interval, discuss the role of $ke_n$
and propose a
heuristic approach for choosing this factor. In Section~\ref
{sect:data}, we apply
the theory to the motivating example given at the beginning, while the
finite sample performance of the estimator is investigated in
Section~\ref{sect:simus}. All proofs are
collected in Section~\ref{sect:proofs}.

\section{Main results} \label{sect:main}

\subsection{Analysis of the estimation error}

The main goal of the present paper is to establish the asymptotic
normality of the estimator $\hat p_n$ under conditions on the
underlying distribution and the failure set which are easy to
interpret and relatively simple to verify. To achieve this objective,
we first decompose the estimation error into 6 parts. Loosely speaking,
the one that usually dominates the others (term $\mathit{IV}$ in equation \eqref
{eq:esterror} below) is due to the marginal fitting, two terms ($\mathit{II}$
and $\mathit{III}$) are related to the bias and the random error of the
estimator of the exponent measure, respectively, term $\mathit{VI}$ stems from
the approximation error in \eqref{eq:nuapprox}, while the remaining
two are related to a technical truncation argument.

To derive this decomposition,
recall that, in our asymptotic framework, the failure set $D_n$ must
become more extreme in the sense that it moves in the north-east
direction as the sample size $n$ increases to ensure that it
contains at most a few observations. To make both coordinates
comparable, we standardize the marginals using $U^\leftarrow$ and
assume that $U^\leftarrow(D_n)$ is essentially an increasing
multiple of a fixed set $S$. That way we ensure that none of the
coordinates dominates the other. More precisely, we assume that for
different sample sizes the failure sets are of the type
%
\begin{equation}
\label{eq:Sdef} D_n = U(d_nS) \cap\R^2 =
\bigl\{ \bigl(U_1(d_nx), U_2(d_ny)
\bigr)\mid(x,y)\in S\bigr\}\cap\R^2
\end{equation}
for a fixed
set $S\subset[0,\infty)^2$ and constants $d_n>0$
tending to $\infty$.
Note that from the analog to \eqref{eq:pnapprox1} where $e_n$ is
replaced with $d_n$ one obtains $d_n\approx\nu(S)/p_n$ (see Lemma~\ref{lemma:termVI} for a precise proof of the assertion $p_n d_n\to
\nu(S)$). Hence, the model constants $d_n$ determine at which rate
the failure probabilities tend to 0.

The crucial idea in the analysis of the asymptotic behavior of $\hat
p_n$ is to approximate the estimator by the empirical measure of a
\emph{random transformation} $H_n(S)$ of the set $S$ (with $H_n$
defined in \eqref{eq:Hndef} below) under the following analog to
$\hat\nu_n$ (defined in \eqref{eq:hatnundef}) with the fitted GPDs
replaced by the ``true'' ones:
\[
\nu_n(B) := \frac{1}k \sum_{i=1}^n
\eps_{
T_n^\leftarrow(X_i,Y_i)}(B).
\]
Since the GPD approximation of the marginals is accurate only in the
upper tail (and to avoid the aforementioned problem with the
definition of $T_n^\leftarrow$), we must first show that
asymptotically it does not matter if we replace $S$ with a suitably
defined subset $S_n^*$ that is bounded away from the coordinate
axes. For this set, we may use the approximation
\[
\hat p_n \approx\frac{1}{e_n}\nu_n \biggl(
\frac{d_n}{e_n}H_n\bigl(S_n^*\bigr) \biggr),
\]
where the random transformation $H_n$ of the marginals is defined
by
%
\begin{equation}
\label{eq:Hndef} H_n(x) := \frac{e_n}{d_n} T_n^\leftarrow
\circ\hat T_n \circ \bigl(\hat T_n^{(c)}
\bigr)^\leftarrow\circ U(d_nx)
\end{equation}
with
%
\begin{equation}
\label{eq:cndef} c = c_n := \frac{k}n e_n
\end{equation}
and
%
\begin{equation}
\label{eq:hatTncdef} \hat T_n^{(c)}(x,y)=\hat
T_n(c_nx,c_ny).
\end{equation}
Check that by \eqref{eq:marginapprox} one has $H_n(x)\approx
(e_n/d_n)(T_n^{(c)})^\leftarrow\circ U(d_nx)\approx
(e_n/d_n)(T_n^{(c)})^\leftarrow\circ T_n((k/\allowbreak n)d_nx)\approx x$ (cf.
Lemma~\ref{lemma:marginalapprox}).

Now, using the homogeneity of $\nu$,
we may break the estimation error into 6 parts as
follows:
%
\begin{eqnarray}\label{eq:esterror}
\hat p_n-p_n & = & \hat p_n -
\frac{1}{e_n} \nu_n \biggl(\frac{d_n}{e_n}H_n
\bigl(S_n^*\bigr) \biggr)
\nonumber
\\
& & { } + \frac{1}{e_n} \bigl( \nu_n(B)-E\nu_n(B)
\bigr)|_{B=(d_n/e_n)H_n(S_n^*)}
\nonumber
\\
& & { } + \frac{1}{e_n} \bigl( E\nu_n(B)-\nu(B)
\bigr)|_{B=(d_n/e_n)H_n(S_n^*)}
\nonumber
\\
& & { } + \frac{1}{d_n} \bigl( \nu\bigl(H_n
\bigl(S_n^*\bigr)\bigr) - \nu\bigl(S_n^*\bigr) \bigr)
\\
& & { } + \frac{1}{d_n} \bigl(\nu\bigl(S_n^*\bigr) - \nu(S)
\bigr)
\nonumber
\\
& & { } + \nu(d_n S)-p_n
\nonumber
\\
& =: & I+\mathit{II}+\mathit{III}+\mathit{IV}+V+\mathit{VI}.
\nonumber
\end{eqnarray}
It will turn out that, under suitable conditions, part $\mathit{IV}$ dominates
all the other terms. Its asymptotic behavior is largely determined
by the asymptotics of the marginal estimators if $\nu$ is
sufficiently smooth.

Under very weak conditions on the set $S$, we will show that the
terms $I$ and $V$ are negligible, if $S_n^*$ is defined suitably. If
$d_n/e_n$ is bounded and bounded away from 0, then using methods
from empirical process theory the second term can be shown to be
asymptotically negligible. Part $\mathit{VI}$ is a bias term which is
negligible if $d_n$ is sufficiently large (depending on the rate of
convergence in \eqref{eq:nudef}). Similarly, the term $\mathit{III}$, which
equals $ ((n/k)P\{T_n^\leftarrow(X,Y)\in\tilde B\}-\nu(\tilde
B) )/d_n$ for
$\tilde B=H_n(S_n^*)$, describes a bias term which is asymptotically
negligible if both the approximation \eqref{eq:nuapprox} and the
marginal approximation $U((n/k)B)\approx T_n(B)$ are sufficiently
accurate.

\subsection{Conditions for asymptotic normality}\label{sec2.2}

We will make the following assumptions about the marginal
distributions and the estimators of the marginal parameters:
\begin{enumerate}[(M3)]
\item[(M1)] There exist constants $x_i^0<F_i^\leftarrow(1)$ such that
$F_i$ is continuous and strictly increasing on
$[x_i^0,F_i^\leftarrow(1)]\cap\R$ for $i\in\{1,2\}$.

\item[(M2)] For all $i\in\{1,2\}$, there exist normalizing functions
$a_i>0$, $b_i\in\R$ and $A_i\ne0$ and constants $\rho_i<0$ such
that for all $x>0$
\[
\lim_{t\to\infty} \frac{%
\vfrac{U_i(tx)-b_i(t)}{a_i(t)} -
\vfrac{x^{\gamma_i}-1}{\gamma_i}}{A_i(t)} = \bar\psi_{\gamma_i,\rho_i}(x)
:= \lleft\{ %
\begin{array} {l@{\qquad}l} \displaystyle \frac{x^{\gamma_i+\rho_i}}{\gamma_i+\rho_i}, &\hspace*{-2.8pt}
\gamma_i+\rho_i\ne0,
\\
\log x, &\hspace*{-2.8pt} \gamma_i+\rho_i= 0. \end{array} %
\rright.
\]

\item[(M3)]
\[
k^{1/2} \biggl( \frac{\hat a_i(n/k)}{a_i(n/k)}-1,\frac{\hat
b_i(n/k)-b_i(n/k)}{a_i(n/k)},\hat
\gamma_i-\gamma_i \biggr)_{1\le i\le
2}
\longrightarrow(\alpha_i,\beta_i,\Gamma_i)_{1\le i\le
2}
\]
weakly.
\end{enumerate}

Condition (M1) is not crucial, but it is assumed to simplify the
proofs and the formulation of some technical results (cf. de
Haan and Ferreira \cite{hf06}, Theorem B.3.13).

(M2) is the usual second order condition with the additional
restriction that the second order parameters $\rho_i$ are negative.
Again, one may drop the latter assumption at the cost of additional
technical complications. According to Corollary~2.3.7 of de Haan and Ferreira \cite{hf06} we may and will assume that the normalizing
constants are chosen such that the following uniform version holds:
For all $\eps,\delta>0$ there exists $t_0$ such that
%
\begin{eqnarray}
\label{eq:secordunif} \biggl|\frac{%
\vfrac{U_i(tx)-b_i(t)}{a_i(t)} -
\vfrac{x^{\gamma_i}-1}{\gamma_i}}{A_i(t)} - \bar\psi_{\gamma_i,\rho_i}(x) \biggr| &\le&\delta
x^{\gamma_i+\rho_i}\max\bigl(x^\eps,x^{-\eps}\bigr)\nonumber\\[-8pt]\\[-8pt]
&=:& \delta
x^{\gamma_i+\rho_i\pm\eps}\nonumber
\end{eqnarray}
provided $t,tx>t_0$. In fact, the main results hold under the
following weaker assumption:
%
\begin{equation}
\label{eq:secordunif2} \biggl| \frac{U_i(tx)-b_i(t)}{a_i(t)} - \frac{x^{\gamma_i}-1}{\gamma_i} \biggr|=\mathrm{O}
\bigl(A_i(t) x^{\gamma_i+\rho_i\pm\eps} \bigr)
\end{equation}
as $t\to\infty$ uniformly for $x\ge t_0/t$. Under
condition (M2), $A_i$ is regularly varying with index $\rho_i$.

Condition (M3) gives a lower bound on the rate at which the
marginal estimators converge. Here some of the limiting
random variables may be equal to 0 almost surely. In particular,
this will usually be the case, if the $i$th marginal estimators use
$k_i$ largest order statistics and $k=\mathrm{o}(k_i)$. However, typically at least
some of the limiting random variables are non-degenerate and
jointly normally distributed. In the sequel, we
will choose versions such that the convergence in (M3) holds in
probability.

The failure set $D_n$ has to
satisfy the following conditions.
\begin{enumerate}[(Q2)]
\item[(Q1)] There exists a set
\[
S= \bigl\{(x,y)\subset[0,\infty)^2 \mid y\ge q(x)\ \forall x\in[0,
\infty) \bigr\} \subset[0,\infty)^2
\]
and constants $d_n>0$ tending to
$\infty$ such that
\[
D_n = U(d_nS)\cap\R^2 = \bigl\{
\bigl(U_1(d_nx), U_2(d_ny)\bigr)
\mid(x,y)\in S\bigr\} \cap\R^2.
\]
Here the function
$q\dvtx [0,\infty)\to[0,\infty]$, which describes the boundary of the
``archetypal failure set'' $S$,
is assumed
monotonically decreasing and continuous from the right with $q(0)>0$.

\item[(Q2)]
\begin{eqnarray*}
\begin{array}{rcl@{\qquad}l}
x^{(1-\gamma_1)/2}|\log x|&=&\mathrm{O}\bigl(q(x)\bigr) &\mbox{as } x\downarrow
x_l:=\inf\bigl\{ x\ge0\mid q(x)<\infty\bigr\},
\\\noalign{\vspace*{2pt}}
y^{(1-\gamma_2)/2}|\log y|&=&\mathrm{O}\bigl(q^\leftarrow(y)\bigr) &\mbox{as }
y\downarrow q(\infty):=\displaystyle \lim_{x\to\infty} q(x).
\end{array}
\end{eqnarray*}
\end{enumerate}
In particular, condition (Q1) ensures that one may define the
generalized (right-continuous) inverse function (of a decreasing
function) in the usual way:
\[
q^\leftarrow(v) := \inf\bigl\{ x>0\mid q(x)\le v\bigr\}
\]
with the convention $\inf\emptyset=\infty$. Roughly speaking, the
conditions (Q2) ensure that the archetypal failure set $S$ does not
have too much mass in a neighborhood of the axes where the estimated
marginal transformation often perform poorly. It is
always fulfilled if $\gamma_1\le1$ or $x_l>0$, resp., if
$\gamma_2\le1$ or $q(\infty)>0$.

Moreover, we need some conditions on the extremal dependence between
$X$ and $Y$ which is asymptotically
described by the exponent measure $\nu$ defined in \eqref{eq:nudef}.
In view of \eqref{eq:marginapprox}, one may replace the
standardization by $U$ with a standardization using $T_n$. To bound
the bias terms $\mathit{III}$ and $\mathit{VI}$ in \eqref{eq:esterror}, we must specify
the rate of the resulting convergence towards $\nu$:

(D1) There exist an exponent measure $\nu$ on $[0,\infty)^2$
and a function $A_0(t)>0$ converging to 0 as $t$ tends to $\infty$
such that
\[
t_n P \biggl\{ \biggl( \biggl(1+\gamma_1
\frac{X-b_1(t_n)}{a_1(t_n)} \biggr)^{1/\gamma_1}, \biggl(1+\gamma_2
\frac{Y-b_2(t_n)}{a_2(t_n)} \biggr)^{1/\gamma_2} \biggr)\in B \biggr\}-\nu(B)=\mathrm{O}
\bigl(A_0(t_n)\bigr)
\]
uniformly for all sets $B\in\mathcal{B}_{t_n,M}$ for $t_n=n/k$ and
for $t_n=d_n$ and arbitrary $M>0$.

Here, $\mathcal{B}_{t_n,M}$ consists of all sets of the form $ \{
(\tilde
H_{\th_i,\chi_i,\xi_i}^{(n,i)}(x_i))_{i\in\{1,2\}}\mid
(x_1,x_2)\in C \}$\vspace*{1.5pt} with $C=S\cap[u,\infty)\times[v,\infty)$ or
$C=[x_l,u)\times[q(u-),\infty)$ or
$C=[q^\leftarrow(v),\infty)\times[q(\infty),v)$ for some $u,v>0$
and some $\th_i,\chi_i,\xi_i\in[-M,M]$
if $t_n=n/k$, and $\mathcal{B}_{t_n,M}$ comprises all sets of the form
$ \{(
(1+\gamma_i(U_i(d_n x_i)-b_i(d_n))/a_i(d_n))^{1/\gamma_i})_{i\in\{
1,2\}}\mid
(x_1,x_2)\in C \}$ with $C=[x_l,u)\times[q(u-),\infty)$ or
$C=[q^\leftarrow(v),\infty)\times[q(\infty),v)$ for some $u,v>0$
if $t_n=d_n$. Here, for $i\in\{1,2\}$,
%
\begin{eqnarray}
\label{eq_tildeHnidef}  \tilde H_{\th_i,\chi_i,\xi_i}^{(n,i)}(x)
&:=& \biggl[ 1+\gamma_i \biggl( \frac{c_n^{-(\gamma_i-k^{-1/2}\th_i)}-1}{\gamma_i-k^{-1/2}\th
_i}
\bigl(1+k^{-1/2}\xi_i\bigr)\nonumber\\[-8pt]\\[-8pt]
&&\hphantom{ \biggl[}{} +c_n^{-(\gamma_i+k^{-1/2}\chi_i)}
\frac
{U_i(d_nx)-b_i(n/k)}{a_i(n/k)} \biggr) \biggr]^{1/\gamma_i}.
\nonumber
\end{eqnarray}

In (D1) the rectangles can also be replaced with the subsets $S\cap
((0,u)\times(0,\infty) )$, resp. $S\cap
((0,\infty)\times(0,v) )$. It is easy to see that condition
(D1) is met if $(X,Y)$ has a density $f$ and $\nu$ a density $\eta$ which
satisfy the following approximation
\begin{eqnarray*}
&&\sup_{(x,y)\in(0,\infty)^2,x\vee y\ge1} \frac{1}{w(x,y)} \biggl| t a_1(t)
a_2(t) x^{\gamma_1-1}y^{\gamma_2-1}
\\
& &\hphantom{\sup_{(x,y)\in(0,\infty)^2,x\vee y\ge1} \frac{1}{w(x,y)} \biggl|} { }\times f \biggl( a_1(t)\frac{x^{\gamma_1}-1}{\gamma_1}+b_1(t),a_2(t)
\frac{y^{\gamma
_2}-1}{\gamma_1}+b_2(t) \biggr)-\eta(x,y) \biggr| \\
&&\quad = \mathrm{O}
\bigl(A_0(t)\bigr)
\end{eqnarray*}
for some weight function $w$ which is Lebesgue-integrable on
$\{(x,y)\in(0,\infty)^2,x\vee y\ge1\}$. This sufficient condition
applies, for example, to the bivariate Cauchy distribution restricted to
$(0,\infty)^2$ and to densities of the form
$f(x,y)=1/(1+x^\alpha+y^\beta)$ with $\alpha,\beta>1$ such that
$1/\alpha+1/\beta<1$.

The
dependence may also be described by the pertaining spectral measure
$\Phi$ on $[0,\uppi/2]$ defined by
\[
\Phi\bigl([0,\th]\bigr)=\nu \biggl\{(x,y)\in[0,\infty)^2 \mid
x^2+y^2>1, \arctan\frac{y}x\le\th \biggr\},\qquad  \th
\in[0,\uppi/2].
\]

(D2) The spectral measure has a continuous
Lebesgue
density $\varphi$ on $[0,\uppi/2]$ such that
$\inf_{\delta\le t\le\uppi/2-\delta}\varphi(t)$ $>0$ for all
$\delta>0$ and
%
\begin{equation}
\label{eq:phivar} \lim_{\lambda\to1} \limsup_{t\downarrow0} \biggl|
\frac{\varphi
(\lambda t)}{\varphi(t)}-1 \biggr|+ \biggl|\frac{\varphi(\uppi/2-\lambda
t)}{\varphi(\uppi/2-t)}-1 \biggr|=0.
\end{equation}
This assumption rules out that the spectral measure (and hence the
exponent measure) puts mass on the coordinate axes. In particular, $X$ and
$Y$ must not be asymptotically independent (in the sense of multivariate
extreme value theory), because then the spectral measure is
concentrated on $\{0,\uppi/2\}$. Condition \eqref{eq:phivar} is
satisfied if $\varphi$ is
extended regularly varying at 0 and at $\uppi/2$ (cf. Bingham \textit{et al}.,
\cite{bgt87}, Section~2.0) or
if the function $\log\circ\,\varphi\circ\exp$ has a bounded derivative.

Condition (D2) is somewhat restrictive in that it requires the spectral
density to be bounded. Thus,
the exponent measure has a Lebesgue density $\eta$ given by
%
\begin{equation}
\label{eq:etanurel} \eta(x,y)=\bigl(x^2+y^2
\bigr)^{-3/2}\varphi \biggl(\arctan\frac{y}x \biggr),\qquad  x,y>0,
\end{equation}
which tends to 0 at the rate $(|x|+|y|)^{-3}$ as $|x|\to\infty$ or
$|y|\to\infty$.

Finally, we impose the following conditions on the sequences
$d_n,e_n$ and $k=k_n$:
\begin{enumerate}[(S3)]
\item[(S1)] $k\to\infty$, $n=\mathrm{O}(e_n)$ (so that $k=\mathrm{O}(c_n)$ with $c_n =
e_nk/n\to\infty$), $d_n\asymp e_n$ (i.e., $0<\liminf d_n/e_n\le
\limsup_{n\to\infty} d_n/e_n<\infty$),
and $w_n(\gamma_i)=\mathrm{o}(k^{1/2})$ for $i\in\{1,2\}$ with
\[
w_n(\gamma_i):= \lleft\{ %
\begin{array}
{l@{\qquad}l} \log c_n, & \gamma_i>0,
\\
 \frac{1}{2} \log^2 c_n, & \gamma_i=0,
\\\noalign{\vspace*{2pt}}
(d_nk/n )^{-\gamma_i}, & \gamma_i<0. \end{array}
\rright.
\]

\item[(S2)] $A_i(n/k)=\mathrm{o}(k^{-1/2} w_n(\gamma_i))$ for
$i\in\{1,2\}$ and $A_0(n/k)=\mathrm{o} (k^{-1/2} \max(w_n(\gamma
_1),\break w_n(\gamma_2)) )$

\item[(S3)]
 \begin{eqnarray*}
 \begin{array}{l@{\qquad}l}
k^{1/2}=\mathrm{O}\bigl(c_n\vee c_n^{\gamma_i}\bigr)&\mbox{if }\gamma_i\ge0\mbox{ for }i\in\{1,2\},\\\noalign{\vspace*{3pt}}
k^{1/2}=\mathrm{o}\bigl(c_n^{1-\gamma_1}\bigr)&\mbox{if }\gamma_1<0\mbox{ and }x_l=0,\mbox{ and}\\\noalign{\vspace*{3pt}}
k^{1/2}=\mathrm{o}\bigl(c_n^{1-\gamma_2}\bigr)&\mbox{if }\gamma_2<0\mbox{ and }q(\infty)=0.
\end{array}
\end{eqnarray*}
\end{enumerate}

Recall that $d_n$ is a constant determined by the model, which
describes the rate at which the failure probability $p_n$
tends to 0, while $e_n$ is chosen by the statistician such that the
inflated failure set contains sufficiently many observations. It
seems natural to choose $e_n$ of the same order as $d_n$, because
this way one compensates for the shrinkage of $D_n$. More precisely,
$d_n\asymp e_n$ if and only if the expected number of transformed
observations in the inflated transformed failure set is of the same
order as $k$, which can easily be checked in practical applications.
To see this, note that by \eqref{eq:nuapprox}, \eqref{eq:homog} and
\eqref{eq:Sdef} this expected number equals
%
\begin{eqnarray}
n P \biggl\{ \hat T_n^\leftarrow(X,Y)\in\frac{n}{ke_n} \hat
T_n^\leftarrow(D_n) \biggr\} & \approx& n P \biggl\{
\frac{k}n U^\leftarrow(X,Y) \in\frac{1}{e_n}U^\leftarrow(D_n)
\biggr\}
\nonumber
\\[-8pt]\\[-8pt]
& \approx& k \nu \biggl(\frac{d_n}{e_n} S \biggr) = k \frac{e_n}{d_n}
\nu(S).\nonumber
\end{eqnarray}
However, the condition $d_n\asymp e_n$ can be
substantially weakened at the price that one needs different
conditions for different combinations of signs of $\gamma_1$ and
$\gamma_2$.

The first condition of (S1) ensures that the expected number of
marginally standardized observations in the \emph{inflated}
standardized failure region tends to $\infty$, whereas the second
condition means that the expected number of observations in the
failure region remains bounded as $n\to\infty$. The last condition
of (S1) is needed to ensure consistency of the estimator in the
sense that $\hat p_n/p_n\to1$. It can only be satisfied
if $\min(\gamma_1,\gamma_2)>-1/2$. This restriction on the extreme
value indices usually arises if one wants to prove asymptotic
normality for estimators of tail probabilities; cf., for example, de
Haan and Ferreira \cite{hf06}, Remark~4.4.3, or Drees \textit{et al}. \cite{dhl06}, Remark~2.2.

From (S2) it follows that the bias is asymptotically negligible,
while (S3) will imply that the part of the set $S$ near the axes
(corresponding to observations where one of the coordinates is much
larger than the other) does not play an important role
asymptotically. Similarly as above, these conditions may also be
substantially weakened at the price of much more complicated
conditions on the behavior of $q$ depending on $\gamma_1,\gamma_2$
and $\eta$.

\subsection{Asymptotic approximation of
the estimator \texorpdfstring{$\hat p_n$}{$hat p_n$}}

Under the above condition, we establish the following approximation to
the estimation error of $\hat p_n$ in terms of the limiting random
variables of the marginal estimators.

%
\begin{theorem} \label{theo:main}
If the conditions \textup{(M1)--(M3), (D1), (D2), (Q1), (Q2)} and \textup{(S1)--(S3)} are
fulfilled,
then
%
\begin{eqnarray}\label{eq:mainapprox}
\hspace*{-25pt}&&k^{1/2}d_n(\hat p_n-p_n)
\nonumber
\\
&&\quad  =  w_n(\gamma_1)\lleft\{ %
\begin{array}{l@{\qquad  }l} - \displaystyle \frac{\Gamma_1}{\gamma_1} \displaystyle \int_{q(\infty)}^\infty
q^\leftarrow (v)\eta\bigl(q^\leftarrow(v),v\bigr) \,\mathrm{d}v, &
\gamma_1>0,
\\
\biggl(\displaystyle \frac{\alpha_1}{\gamma_1}-\beta_1-\displaystyle \frac{\Gamma_1}{\gamma
_1^2} \biggr) \displaystyle \int
_{q(\infty)}^\infty \bigl(q^\leftarrow(v)
\bigr)^{1-\gamma_1} \eta\bigl(q^\leftarrow(v),v\bigr)\, \mathrm{d}v, & \gamma
_1<0,
\\
- \Gamma_1 \displaystyle \int_{q(\infty)}^\infty
q^\leftarrow(v)\eta \bigl(q^\leftarrow(v),v\bigr) \,\mathrm{d}v, &
\gamma_1=0, \end{array} %
\rright.
\nonumber
\\[-8pt]\\[-8pt]
& & \qquad { } + w_n(\gamma_2) \lleft\{ %
\begin{array} {l@{\qquad  }l} - \displaystyle \frac{\Gamma_2}{\gamma_2} \displaystyle \int_{x_l}^\infty
q(u)\eta\bigl(u,q(u)\bigr) \,\mathrm{d}u, & \gamma_2>0,
\\
\biggl(\displaystyle \frac{\alpha_2}{\gamma_2}-\beta_2-\displaystyle \frac{\Gamma_2}{\gamma
_2^2} \biggr) \displaystyle \int
_{x_l}^\infty \bigl(q(u)\bigr)^{1-\gamma_2} \eta
\bigl(u,q(u)\bigr) \,\mathrm{d}u, & \gamma_2<0,
\\
- \Gamma_2\displaystyle \int_{x_l}^\infty q(u)\eta
\bigl(u,q(u)\bigr) \,\mathrm{d}u, & \gamma_2=0, \end{array} %
\rright.
\nonumber
\\
& &\qquad  { } + \mathrm{o}_P \bigl(w_n(\gamma_1)
\vee w_n(\gamma_2) \bigr).\nonumber
\end{eqnarray}
\end{theorem}

Since $p_nd_n\to\nu(S)$, Theorem~\ref{theo:main} remains true when
the left-hand side of \eqref{eq:mainapprox} is replaced with
$k^{1/2}\nu(S)(\hat p_n/p_n-1)$.

The weights $w_n(\gamma_1)$ and
$w_n(\gamma_2)$ on the right-hand side of \eqref{eq:mainapprox} may
be different, and then they converge to $\infty$ at different rates.
More precisely, $w_n(\gamma)$ is a non-increasing function of
$\gamma$, and it is strictly decreasing on $(-\infty,0]$. Therefore,
the smaller of both marginal extreme value indices $\gamma_1$ and
$\gamma_2$ determines the rate of convergence of $\hat p_n$ towards
$p_n$. If at least one of the indices is non-positive and the
indices are not equal, then the summand corresponding to the larger
index is negligible. (In that case, it may happen that one cannot
prove asymptotic normality using Theorem~\ref{theo:main}, because
the limiting random variables $\alpha_i,\beta_i$ and $\Gamma_i$
pertaining to the smaller extreme value index are equal to 0; cf. the
above discussion of condition (M3).)

If both extreme value indices are positive, then both main terms on
the right-hand side of \eqref{eq:mainapprox} are of the same order.
In that case, $(k^{1/2}d_n/\log c_n)(\hat p_n-p_n)$ converge to a
limit distribution which typically will be non-degenerate if at
least one of the limiting random variables $\Gamma_1$ and $\Gamma_2$
in (M3) is non-degenerate. If they are jointly normal, then we may
derive the asymptotic normality of the estimator for the failure
probability~$p_n$.

However, if the fit of the marginal tails by GPDs is much more accurate
than the approximation of the dependence structure by the extreme value
dependence structure described by the exponent measure, then all
limiting random variables in condition (M3) may be equal to 0, because
the marginal estimators are based on the largest $k_i\gg k$ order
statistics and converge at the rate $k_i^{-1/2}=\mathrm{o}(k^{-1/2})$. In that
case, Theorem~\ref{theo:main} merely specifies an upper bound on the
estimation error but not which of the terms $\mathit{I}$--$\mathit{VI}$ dominates the others.

\subsection{Asymptotic confidence intervals} \label{subsect:confint}

Theorem~\ref{theo:main} can be used to construct asymptotic
confidence intervals. To this end, it is advisable to reformulate
the assertion as a convergence result on $k^{1/2} e_n(\hat
p_n-p_n)$, because $d_n$ is unknown. Then one needs consistent
estimators for the variance of the random variables occurring on the
right-hand side of \eqref{eq:mainapprox} which usually are
asymptotically normal, and consistent estimators for $e_n/d_n$ times
the integral there.

We will outline how to estimate the term $I_2:=(e_n/d_n)
\int_{x_l}^\infty q(u)\eta(u,q(u)) \,\mathrm{d}u$, that is needed in the
case $\gamma_2 \ge0$. To avoid the estimation of the density $\eta$ of
$\nu$, we approximate the integral by the $\nu$-measure of a
shrinking set as follows. Because $\eta$ is continuous, for small
$\ell_n$ one has
\begin{eqnarray*}
\frac{e_n}{d_n}\int_{x_l}^\infty q(u)\eta
\bigl(u,q(u)\bigr)\, \mathrm{d}u &\approx& \frac{e_n}{d_n} \int_{x_l}^\infty
\frac{1}{2\ell_n}\int_{(1-\ell_n)q(u)}^{(1+\ell_n)q(u)} \eta(u,v)\, \mathrm{d}v \,\mathrm{d}u \\
&=&
\frac{1}{2\ell_n} \bigl(\nu\bigl(S_{n,2}^-\bigr)-\nu
\bigl(S_{n,2}^+\bigr)\bigr)
\end{eqnarray*}
with
\[
S_{n,2}^\pm:= \biggl\{ \frac{d_n}{e_n}\bigl(u,(1\pm
\ell_n)v\bigr)\Bigm|(u,v)\in S \biggr\}.
\]
Now one can proceed similarly as in \eqref{eq:pnapprox1} (using
\eqref{eq:nuapprox} and \eqref{eq:Sdef}) to construct an estimator
of $(e_n/d_n) \int_{x_l}^\infty q(u)\eta(u,q(u)) \,\mathrm{d}u$:

%
\begin{corollary} \label{cor:asvarest}
Let
\[
\hat S_{n,2}^\pm:= \biggl\{ \bigl(u,(1\pm
\ell_n)v\bigr)\Bigm|(u,v)\in\frac{n}{ke_n}\hat T_n^\leftarrow(D_n)
\biggr\}
\]
for some sequence $\ell_n\downarrow0$ such that
$k^{-1/2}(w_n(\gamma_1)\vee w_n(\gamma_2))=\mathrm{o}(\ell_n)$.
Suppose that all conditions of Theorem~\ref{theo:main} are
fulfilled and, in addition, that an analog to condition \textup{(D1)} holds
where $c_n$ is replaced with $c_n/(1\pm\ell_n)$.
Then
\[
\hat I_{n,2} := \frac{\hat\nu_n(\hat S_{n,2}^-)-\hat\nu_n(\hat
S_{n,2}^+)}{2\ell_n} = \frac{e_n}{d_n} \int
_{x_l}^\infty q(u)\eta\bigl(u,q(u)\bigr) \,\mathrm{d}u \bigl(1+
\mathrm{o}_P(1)\bigr).
\]
\end{corollary}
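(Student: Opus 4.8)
\emph{Proof strategy.} The plan is to reduce the corollary to a single relative-error estimate and then to finish by the elementary computation already sketched in the heuristic preceding Corollary~\ref{cor:asvarest}. Write $S_{\ell_n}^{\pm}:=\{(x,y)\mid y\ge(1\pm\ell_n)q(x)\}$ and $S_{n,2}^{\pm}:=(d_n/e_n)S_{\ell_n}^{\pm}$, so that $S_{n,2}^{\pm}=\{(d_n/e_n)(u,(1\pm\ell_n)v)\mid(u,v)\in S\}$ as before. The key claim to establish is
$$\hat\nu_n(\hat S_{n,2}^{\pm})\;=\;\nu(S_{n,2}^{\pm})\,\bigl(1+O_P\bigl(k^{-1/2}(w_n(\gamma_1)\vee w_n(\gamma_2))\bigr)\bigr).$$
Granting it, the corollary follows: by the homogeneity \eqref{eq:homog} and the existence of the density $\eta$ (condition (D2)),
$$\nu(S_{n,2}^{-})-\nu(S_{n,2}^{+})\;=\;\frac{e_n}{d_n}\bigl(\nu(S_{\ell_n}^{-})-\nu(S_{\ell_n}^{+})\bigr)\;=\;\frac{e_n}{d_n}\int_{x_l}^{\infty}\int_{(1-\ell_n)q(u)}^{(1+\ell_n)q(u)}\eta(u,v)\,dv\,du,$$
and since $\eta$ is continuous and, for all small $\ell_n$, the inner average $\frac1{2\ell_n}\int_{(1-\ell_n)q(u)}^{(1+\ell_n)q(u)}\eta(u,v)\,dv$ has the integrable majorant $q(u)\sup_{|v-q(u)|\le q(u)/2}\eta(u,v)$ (integrable over $[x_l,\infty)$ by (Q2) and the explicit form \eqref{eq:etanurel} of $\eta$), dominated convergence gives $\frac1{2\ell_n}(\nu(S_{n,2}^{-})-\nu(S_{n,2}^{+}))=(e_n/d_n)\int_{x_l}^{\infty}q(u)\eta(u,q(u))\,du\,(1+o(1))$. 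On the other hand $\nu(S_{n,2}^{\pm})=(e_n/d_n)\nu(S_{\ell_n}^{\pm})\asymp e_n/d_n$, so subtracting the two instances of the relative-error estimate and dividing by $2\ell_n$ yields $\hat I_{n,2}=(e_n/d_n)\int_{x_l}^{\infty}q(u)\eta(u,q(u))\,du\,(1+o(1))+O_P\bigl(k^{-1/2}(w_n(\gamma_1)\vee w_n(\gamma_2))(e_n/d_n)/\ell_n\bigr)$; the hypothesis $k^{-1/2}(w_n(\gamma_1),w_n(\gamma_2))=o(\ell_n)$ turns the last term into $o_P(e_n/d_n)=o_P\bigl((e_n/d_n)\int_{x_l}^{\infty}q(u)\eta(u,q(u))\,du\bigr)$ because the integral is a finite positive constant, which is the assertion.

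For the relative-error estimate one re-runs the proof of Theorem~\ref{theo:main}; in fact an inspection of that proof shows it establishes the following more general fact: for any sequence of failure sets $U(d_n\tilde S_n)\cap\R^2$ with $\tilde S_n=\{(x,y)\mid y\ge\tilde q_n(x)\}$, the $\tilde q_n$ being monotone and right-continuous and satisfying (Q1) and (Q2) with constants uniform in $n$, the estimator of the form \eqref{eq:pnhatdef} obeys the analogue of the decomposition \eqref{eq:esterror} and the approximation \eqref{eq:mainapprox}. The point is that $\hat\nu_n(\hat S_{n,2}^{\pm})$ is exactly $e_n$ times such an estimator, with $\tilde S_n=S_{\ell_n}^{\pm}$ (which is $S$ with $q$ replaced by the monotone right-continuous function $(1\pm\ell_n)q$, so (Q1), (Q2) hold with constants uniform in $n$ since $\ell_n\to0$) and with the bias term $VI$ of \eqref{eq:esterror} absent, since here one compares directly with $\nu(S_{\ell_n}^{\pm})$ rather than with a failure probability. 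The only new feature is that the coordinatewise rescaling $g^{\pm}(u,v)=(u,(1\pm\ell_n)v)$ commutes with multiplication by $n/(ke_n)$, so $\hat S_{n,2}^{\pm}=\frac n{ke_n}g^{\pm}(\hat T_n^{\leftarrow}(D_n))$ and the relevant random transformation is the map $H_n^{\pm}$ obtained from \eqref{eq:Hndef} by inserting the factor $g^{\pm}$, equivalently by replacing $c_n$ from \eqref{eq:cndef} with $c_n/(1\pm\ell_n)$ in the second coordinate; as in the check following \eqref{eq:hatTncdef} (cf.\ Lemma~\ref{lemma:marginalapprox}) this gives $H_n^{\pm}\approx g^{\pm}$, the dominant term $IV$ becomes $\frac1{d_n}(\nu(H_n^{\pm}(S_n^{*}))-\nu(g^{\pm}(S_n^{*})))$ and is still $O_P(k^{-1/2}d_n^{-1}(w_n(\gamma_1)\vee w_n(\gamma_2)))$, while $\nu(g^{\pm}(S))=\nu(S_{\ell_n}^{\pm})$ holds exactly so the decomposition still telescopes to $\nu(S_{\ell_n}^{\pm})/d_n$. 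The substitution $c_n\rightsquigarrow c_n/(1\pm\ell_n)$ is precisely why the corollary postulates the analogue of (D1) with $c_n$ replaced by $c_n/(1\pm\ell_n)$: this is the condition that controls the bias term $III$ for the perturbed sets; the remaining estimates (terms $I$ and $V$: the near-axes truncation, via (S3); term $II$: the empirical-process bound, via $d_n\asymp e_n$; overall bias: via (S2)) go through unchanged because neither $d_n\asymp e_n$ nor $k=k_n$ is affected.

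The main obstacle is this second step, namely checking that \emph{every} estimate in the long proof of Theorem~\ref{theo:main} is robust under (a) the mild shape change of $S_{\ell_n}^{\pm}$ with $n$ and (b) the substitution $c_n\rightsquigarrow c_n/(1\pm\ell_n)$, and --- more importantly --- that what one obtains really is a \emph{relative} error of order $k^{-1/2}(w_n(\gamma_1)\vee w_n(\gamma_2))$ and not an additive error of larger order: only then does dividing the difference of two such approximations by the small quantity $2\ell_n$ leave the estimate intact, and this is exactly where the bandwidth condition $k^{-1/2}(w_n(\gamma_1),w_n(\gamma_2))=o(\ell_n)$ is used. A minor additional point is the dominated-convergence step of the first paragraph, which needs, besides the continuity of $\eta$, an integrable majorant for the normalized $\nu$-increments across the thin strip between the graphs of $(1-\ell_n)q$ and $(1+\ell_n)q$; this is supplied by (Q2) together with \eqref{eq:etanurel}.
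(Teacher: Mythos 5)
Your proposal is correct and follows essentially the same route as the paper: rewrite $\hat\nu_n(\hat S_{n,2}^{\pm})$ as $\nu_n\big((d_n/e_n)H_n^{\pm}(S)\big)$ with the random transform $H_{n,2}^{\pm}$ corresponding to the substitution $c_n\rightsquigarrow c_n/(1\pm\ell_n)$, apply Lemma~\ref{lemma:marginalapprox} with $e_n$ replaced by $e_n/(1\pm\ell_n)$ and re-run the Theorem~\ref{theo:main} machinery (minus the absent bias term~$VI$) to obtain $\hat\nu_n(\hat S_{n,2}^{\pm})=\nu(S_{n,2}^{\pm})+O_P\bigl(k^{-1/2}(w_n(\gamma_1)\vee w_n(\gamma_2))\bigr)$, then divide by $2\ell_n$, invoke the bandwidth condition, and finish with dominated convergence exactly as you and the paper both do. Your insistence on a \emph{relative} rather than additive error bound is an unnecessary refinement, since $\nu(S_{n,2}^{\pm})\asymp e_n/d_n\asymp 1$ under (S1) makes the two forms interchangeable here — but it is harmless and the argument is otherwise identical in substance.
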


In a completely analogous way one can estimate $I_1:=(e_n/d_n)
\int_{q(\infty)}^\infty q^\leftarrow(v)\eta(q^\leftarrow(v),v)
\,\mathrm{d}v$ by $\hat I_{n,1} := (\hat\nu_n(\hat S_{n,1}^-)-\hat\nu_n(\hat
S_{n,1}^+))/(2\ell_n)$ with
\[
\hat S_{n,1}^\pm:= \biggl\{ \bigl((1\pm
\ell_n)u, v\bigr)\Bigm|(u,v)\in\frac{n}{ke_n}\hat
T_n^\leftarrow(D_n) \biggr\}.
\]

Now suppose that both extreme value indices $\gamma_i$ are positive
and that we estimate them by the Hill estimator, that is, $\hat\gamma_1
= k_1^{-1} \sum_{i=1}^{k_1} \log(X_{n-i+1:n}/X_{n-k_1:n})$ with
$X_{n-i+1:n}$ denoting the $i$th largest order statistic among
$X_1,\ldots,X_n$, and likewise $\hat\gamma_2 = k_2^{-1}
\sum_{i=1}^{k_2} \log(Y_{n-i+1:n}/Y_{n-k_2:n})$. It is well known
that
$k_i^{1/2}(\hat\gamma_i-\gamma_i)\to\mathcal{N}_{(0,\gamma_i^2)}$ if
condition (M2) holds and $k_i^{1/2} A_i(n/k_i)\to0$. In particular,
$\Gamma_i=0$ if $k=\mathrm{o}(k_i)$. However, if $k_i/k\to
\kappa_i\in(0,\infty)$ for both $i=1$ and $i=2$, then the joint
distribution of $\Gamma_1$ and $\Gamma_2$ is needed for the
construction of confidence intervals.

In the case $k_1=k_2=k$, de Haan and Resnick \cite{hr93} derived a
representation of $\Gamma_i$ in terms of a Gaussian process under
slightly different conditions than used in the present paper. One
may mimic their approach to show that under our conditions,
$(\Gamma_i/\gamma_i)_{i\in\{1,2\}}$ has the same distribution as
$ (  (\int_1^\infty t^{-1} W_i(t/\kappa_i)
\,\mathrm{d}t-W_i(1/\kappa_i) )/\kappa_i )_{i\in\{1,2\}}$ where
$(W_1,W_2)$ is a bivariate centered Gaussian process with covariance
function given by $\Cov(W_1(s),W_1(t))=\nu ((s\vee t,\infty)\times
(0,\infty) )$, $\Cov(W_2(s),W_2(t))=\nu ((0,\infty)\times
(s\vee
t,\infty)  )$ and
$\Cov(W_1(s),W_2(t))=\nu ((s,\infty)\times(t,\infty)  )$.
Direct calculations show that thus
$(\Gamma_i/\gamma_i)_{i\in\{1,2\}}$ is a centered Gaussian vector
with marginal variances $1/\kappa_i$ and covariance
$\nu ((\kappa_2,\infty)\times(\kappa_1,\infty) )$.
Hence, with $z_{1-\alpha/2}$
denoting the standard normal $(1-\alpha/2)$-quantile and
$\hat\sigma^2:= \hat I_{n,1}^2/\kappa_1+\hat I_{n,2}^2/\kappa_2+
2\hat\nu_n ((\kappa_2,\infty)\times(\kappa_1,\infty)
)\hat
I_{n,1}\hat I_{n,2}$,
%
\begin{equation}
\label{eq:confint} \bigl[ \hat p_n-k^{-1/2}e_n^{-1}
\log c_n \hat\sigma z_{1-\alpha/2}, \hat p_n+k^{-1/2}e_n^{-1}
\log c_n \hat\sigma z_{1-\alpha/2} \bigr]
\end{equation}
is a two-sided confidence interval for $p_n$ with asymptotic confidence
level $1-\alpha$. (This formula is also applicable if one of the
$\kappa_i$ equals $\infty$.)

As an alternative to the above approach, one may estimate the
density of the spectral measure $\Phi$ (cf. Cai \textit{et al}. \cite{ceh11}) and
construct both an estimator for the integrals and for the joint
distribution of the limiting random variables on the right-hand side
of \eqref{eq:mainapprox} from it.

\subsection{Choice of the blow-up factor} \label{subsect:ken}

Our estimation procedure consists of two steps. First the marginal
parameters are estimated using a certain fraction of largest order
statistics, and both the observations and the failure set are
marginally standardized accordingly. In the second step the transformed
failure set is blown up by a factor chosen by the statistician, and the
failure probability is estimated by a suitable fraction of the
empirical probability of the inflated set. As the choice of a suitable
sample fraction used in the marginal fitting has been extensively
discussed in literature (see, e.g., Beirlant \textit{et al}. \cite{bgstwf04},
Section~5.8), here we discuss how to choose the blow-up factor in the second
step. For simplicity, in the concrete calculations we focus on the case
that both extreme value indices are positive, but the general remarks
apply to the other cases as well.

The estimation of the marginal parameters $\gamma_i, a_i(n/k)$ and
$b_i(n/k)$ yield approximations of the marginal distribution functions
of the type
%
\begin{equation}
\label{eq:marginalfits} \hat F_i(x) := 1- \biggl( 1+\hat
\gamma_i\frac{x-\hat\mu_i}{\hat\sigma_i} \biggr)^{-1/\hat
\gamma_i},\qquad  i=1,2,
\end{equation}
which are sufficiently accurate for $x$ satisfying $1-F_i(x)\le k_i/n$.
The corresponding estimator $\hat U_i^\leftarrow:= 1/(1-\hat F_i)$ can
also be
interpreted as an estimator $(n/k)T_{n,i}^\leftarrow$ for different
values of $k$. However, if one starts with a given approximation of the marginal
tails as in \eqref{eq:marginalfits}, then the number $k$ does not have
any operational meaning. In that case it seems more
natural to reformulate our estimator $\hat p_n$, the main result
\eqref{eq:mainapprox} and the resulting confidence interval
\eqref{eq:confint} in terms of $\hat U_i^\leftarrow$.

For a fixed estimator $\hat U$ of $U$, the estimator of the failure probability
\[
\hat p_n = \frac{1}{e_n} \hat\nu_n \biggl(
\frac{n}{ke_n} \hat T_n^\leftarrow(D_n) \biggr) =
\frac{1}{ke_n} \sum_{i=1}^n
\eps_{\hat
U^\leftarrow(X_i,Y_i)} \biggl( \frac{n}{ke_n} \hat U^\leftarrow
(D_n) \biggr)
\]
depends on the constants $k$ and $e_n$ only via their product $ke_n$. At
first glance, this seems peculiar, because in Theorem~\ref{theo:main} the estimation error seemingly depends on $k$ and
$e_n$ in completely different ways. However, according to
the discussion in Section~\ref{subsect:confint}, for
$\gamma_1,\gamma_2>0$, approximation \eqref{eq:mainapprox} can be
rewritten as
%
\begin{equation}
\label{eq:errorapprox} \hat p_n-p_n = (ke_n)^{-1/2}
\log\frac{ke_n}n N \bigl(1+\mathrm{o}_P(1)\bigr)
\end{equation}
for a centered Gaussian random variable $N$ with variance
\begin{eqnarray*}
\sigma_N^2 & = & \frac{1}{e_n} \biggl(
\frac{k}{k_1} I_1^2 + \frac{k}{k_2}
I_2^2 + 2\nu \biggl( \biggl(\frac{k_2}k,\infty
\biggr)\times \biggl(\frac{k_1}k,\infty \biggr) \biggr)I_1
I_2 \biggr)
\\
& = & \frac{ke_n}{k_1} \biggl(\frac{I_1}{e_n} \biggr)^2 +
\frac{ke_n}{k_2} \biggl(\frac{I_2}{e_n} \biggr)^2 + 2\nu \biggl(
\biggl(\frac
{k_2}{ke_n},\infty \biggr)\times \biggl(\frac{k_1}{ke_n},\infty
\biggr) \biggr)\frac{I_1}{e_n} \frac{I_2}{e_n},
\end{eqnarray*}
where $I_1:= (e_n/d_n)\int_{q(\infty)}^\infty q^\leftarrow
(v)\eta (q^\leftarrow(v),v )\, \mathrm{d}v$ and $I_2:=
(e_n/d_n)\int_{x_l}^\infty q(u)\eta (u,q(u) ) \,\mathrm{d}u$. Thus
$I_i/e_n$ does not depend on $e_n$, and the distribution of the
approximating Gaussian random variable on the right-hand side of
\eqref{eq:errorapprox} depends on $k$ and $e_n$ only via their
product.

Moreover, also the estimators
\begin{eqnarray*}
\frac{\hat I_{n,1}}{e_n} & = & \frac{\hat\nu_n(\hat S_{n,1}^-)-\hat
\nu_n(\hat
S_{n,1}^+)}{2\ell_n e_n}
\\
& = & \frac{1}{2\ell_n}\cdot\frac{1}{ke_n} \sum
_{i=1}^n \eps_{\hat
U^\leftarrow(X_i,Y_i)} \biggl( \biggl\{
\bigl((1-\ell_n)u,v\bigr)\mid(u,v)\in \frac{n}{ke_n} \hat
U^\leftarrow(D_n) \biggr\} \\
&&\hphantom{\frac{1}{2\ell_n}\cdot\frac{1}{ke_n} \sum
_{i=1}^n \eps_{\hat
U^\leftarrow(X_i,Y_i)} \biggl(}{}\Bigm\backslash
 \biggl\{ \bigl((1+\ell_n)u,v\bigr)\mid(u,v)\in \frac{n}{ke_n}
\hat U^\leftarrow(D_n) \biggr\} \biggr)
\end{eqnarray*}
and likewise $\hat I_{n,2}/e_n$ depend on the product $ke_n$ only.
Finally, the covariance term $\nu (( k_2/(ke_n),\allowbreak \infty)\times
(k_1/(ke_n),\infty) )
= k^2 e_n/(\lambda k_1 k_2) \nu ((k/(\lambda k_1),\infty)\times
(k/(\lambda k_2),\infty) )$
can be estimated by
\begin{eqnarray*}
&& \frac{k^2 e_n}{\lambda k_1 k_2} \hat\nu_n \biggl( \biggl(\frac{k}{\lambda k_1},
\infty \biggr)\times \biggl(\frac{k}{\lambda
k_2},\infty \biggr) \biggr)
\\
&&\quad =
\frac{k e_n}{\lambda k_1 k_2} \sum_{i=1}^n
\eps_{\hat U^\leftarrow(X_i,Y_i)} \biggl( \biggl(\frac{n}{\lambda
k_1},\infty \biggr)\times
\biggl(\frac{n}{\lambda k_2},\infty \biggr) \biggr).
\end{eqnarray*}
Here the choice $\lambda\in(0,1]$ ensures that $\hat U^\leftarrow$
is used only on the range where it is a sufficiently accurate
estimator of the true function $U^\leftarrow$.

To sum up, all estimates only depend on $ke_n$, but not on the
numbers $k$ and $e_n$ separately. This product should be chosen as
large as possible under the constraints that both marginal
approximations of $U_i^\leftarrow$ by $\hat U_i^\leftarrow$ and the
approximation of the joint distribution of the standardized vector
(cf. \eqref{eq:nudef}) are reliable. To ensure the former
constraint, for the vast majority of the observations $(X_i,Y_i)$,
the indicator of the set $\{\hat U^\leftarrow(X_i,Y_i)\in n/(ke_n)
\hat U^\leftarrow(D_n)\}$ should not depend on the particular
values of $\hat U_1^\leftarrow(X_i)$ or $\hat U_2^\leftarrow(Y_i)$
if these are smaller than $n/k_1$ or $n/k_2$ (either because the
other component of the vector is so large that the observations lie
in the failure set anyway, or because the other component is so
small so that the indicator is 0 even if the maximal value $n/k_i$
is attained). For instance, if we consider failure sets of the type
$D_n:=\{(x,y)\mid\alpha_1 x+\alpha_2 y>R\}$, then $k
e_n$ should be smaller than $\min_{i=1,2} k_i\hat U_i^\leftarrow
(R/\alpha_i)$, because otherwise for sure $\hat U^\leftarrow(x,y)\in
(n/ke_n)\hat U^\leftarrow(D_n)$ for some values $(x,y)$ for which
$\hat U^\leftarrow(x,y)$ is not a reliable estimate of
$U^\leftarrow(x,y)$.

However, the above crude upper bound for $ke_n$ is not sufficient to
ensure that $\hat p_n$ is a reliable estimate of $p_n$, because the
dependence structure must be accurately described by the exponent
measure $\nu$, too. To determine a range of reasonable values for
$ke_n$, we propose (in analogy to the well-known Hill
plot used for selecting a reasonable sample fraction in the marginal
fitting), to plot $\hat p_n$ versus $ke_n$ and then to choose $ke_n$ in
a range where this curve seems stable. In the data example discussed in
Section~\ref{sect:data}, this approach seems to work pretty well.
(Motivated by the discussion by Drees \textit{et al}. \cite{dhr00}, it might also be
worthwhile to use a log-scale for $k e_n$ in order to get a clearer
picture about a good choice for this factor, but (unlike for the
so-called AltHill plot) a sound theoretical justification for this
modification is yet lacking.)

\subsection{Generalization to higher dimensions} \label{subsect:multi}
We conclude this section by indicating how to generalize the main
result to $\R^d$-valued vectors
$\mathbf{X}_i=(X_{i,1},\ldots,X_{i,d})$ of arbitrary dimension $d\ge
2$, albeit a detailed discussion is beyond the scope of this paper.
An inspection of the proof of Lemma~\ref{lemma:doubleint} reveals
that the generalized inverse $q^\leftarrow$ of the function $q$
is
used to describe the boundary of the set $S$ as a function of the
second coordinate. If $d>2$ (and hence the generalized inverse is
not defined), then an analogous description is needed for all
coordinates, that is, we need $d$ different representations of the set
$S$ of the form
%
\begin{equation}
\label{eq:Srep} S = \bigl\{ \mathbf{x}\in[0,\infty)^d\mid
x_i\ge q_i(\mathbf {x}_{-i}) \bigr\},\qquad  1\le
i\le d,
\end{equation}
where $\mathbf{x}_{-i}\in[0,\infty)^{d-1}$ denotes the vector
$\mathbf{x}$ with $i$th coordinate removed and $q_i$ are suitable
$[0,\infty]$-valued functions that are decreasing in each argument.
Then one may proceed as in the case $d=2$ by separately examining
the influence of the transformation of each marginal on the
$\nu$-measure of the (suitably restricted) set $S$. Under suitable
integrability conditions on the functions $q_i$ and obvious
generalizations of the conditions (M1)--(M3), (D1), (D2) and
(S1)--(S3), it can be shown that
%
\begin{eqnarray}\label{eq:mainmulti}
\hspace*{-15pt}&&k^{1/2}d_n(\hat p_n-p_n)
\nonumber
\\
&&\quad  =  \sum_{i=1}^d w_n(
\gamma_i)\lleft\{ %
\begin{array} {l@{ \qquad }l} -
\displaystyle \frac{\Gamma_i}{\gamma_i} \displaystyle \int q_i(v)\eta\bigl(\tilde q_i(v)
\bigr) 1_{(0,\infty)}\bigl(q_i(v)\bigr) \llambda^{d-1}(\mathrm{d}v),
& \gamma_i>0,
\\
\biggl(\displaystyle \frac{\alpha_i}{\gamma_i}-\beta_-\displaystyle \frac{\Gamma_i}{\gamma
_i^2} \biggr) \\
\quad {}\times\displaystyle \int
\bigl(q_i(v)\bigr)^{1-\gamma_1} \eta\bigl(\tilde
q_i(v)\bigr) 1_{(0,\infty)}\bigl(q_i(v)\bigr)
\llambda^{d-1}(\mathrm{d}v), & \gamma_i<0,
\\
- \Gamma_i \displaystyle \int q_i(v)\eta\bigl(\tilde
q_i(v)\bigr) 1_{(0,\infty)}\bigl(q_i(v)\bigr)
\llambda^{d-1}(\mathrm{d}v), & \gamma_i=0 \end{array} %
\rright.
\\
& &\qquad {} +\mathrm{o}_P\bigl(w_n(\gamma_i)
\bigr). \nonumber
\end{eqnarray}
Here $\llambda^{d-1}$ denotes the Lebesgue measure on
$[0,\infty)^{d-1}$ and $\tilde q_i(v)$ is the vector in
$[0,\infty)^d$ whose $i$th coordinate equals $q_i(v)$ and the other
$d-1$ coordinates are those of $v$.

If the boundary $\partial S$ of the set $S$ is sufficiently smooth,
then the integrals on the right-hand side of \eqref{eq:mainmulti}
can be represented more naturally as integrals w.r.t. certain
differential forms (see, e.g., Schreiber \cite{s77}, for an informal
introduction to differential forms). More precisely, assume that
there exists a set $D\subset[0,\infty)^{d-1}$ and a continuously
differentiable function $q\dvtx D\to[0,\infty)$, such that $\partial
S=\{ \Psi(u):=(u,q(u))\mid u\in D\}$.
Then the right-hand side of \eqref{eq:mainmulti} equals
\begin{eqnarray*}
&& \sum_{i=1}^d w_n(
\gamma_i)\lleft\{ %
\begin{array} {l@{\qquad  }l} -
\displaystyle \frac{\Gamma_i}{\gamma_i} \int_\Psi \mathit{pr}_i\cdot\eta \,\mathrm{d}
x_1\wedge\cdots\wedge \,\mathrm{d}x_{i-1}\wedge \,\mathrm{d}x_{i+1}
\wedge\cdots\wedge \,\mathrm{d}x_d, & \gamma_i>0,
\nonumber
\\
\biggl(\displaystyle \frac{\alpha_i}{\gamma_i}-\beta_-\displaystyle \frac{\Gamma_i}{\gamma
_i^2} \biggr)\\
\quad {}\times \displaystyle \int
_\Psi(\mathit{pr}_i)^{1-\gamma_i}\cdot\eta \,\mathrm{d}
x_1\wedge\cdots\wedge \,\mathrm{d}x_{i-1}\wedge \,\mathrm{d}x_{i+1}
\wedge\cdots\wedge \,\mathrm{d}x_d, & \gamma_i<0,
\\
- \Gamma_i \displaystyle \int_\Psi \mathit{pr}_i\cdot
\eta \,\mathrm{d} x_1\wedge\cdots\wedge \,\mathrm{d}x_{i-1}\wedge
\,\mathrm{d}x_{i+1}\wedge\cdots\wedge\, \mathrm{d}x_d, & \gamma_i=0
\end{array} %
\rright.
\\
&&\quad {} +\mathrm{o}_P\bigl(w_n(\gamma_i)\bigr)
\end{eqnarray*}
with $\textit{pr}_i$ denoting the projection to the $i$th coordinate, which is
the integral of a $(d-1)$-form over $\delta S$. This
representation reflects most clearly the fact that the $i$th term
results from the change of the boundary surface of $S$ by the
marginal transformation $H_{n,i}$. Such
a representation can be derived for more general differentiable
manifolds $\partial S$.

\section{Analysis of insurance claims} \label{sect:data}

In this section, we discuss issues arising in the analysis of a
well-known data set of claims to Danish
fire insurances. The data set contains losses to building(s), losses
to contents and losses to profits (caused by the same fire) observed
in the period 01/1980--12/2002, discounted to 07/1985. The claims
are recorded only if the sum of all components exceeds 1 million
Danish Kroner (DKK). Due to this recording method, there
is an artificial negative dependence between the components, since
if one component is smaller than 1 million DKK, the sum of the
others must be accordingly larger. To avoid this effect, we
therefore consider only those claims for which at least one
component exceeds 1 million DKK, which leads to a sample of 3976
claims. Moreover, we focus on the losses to buildings, denoted by
$X_i$ as a multiple of one million DKK, and the losses to contents
$Y_i$. A more detailed description of the data can be found in
M\"{u}ller \cite{m08} and Drees and M\"{u}ller \cite{dm08}.

%
\begin{figure}[b]

\includegraphics{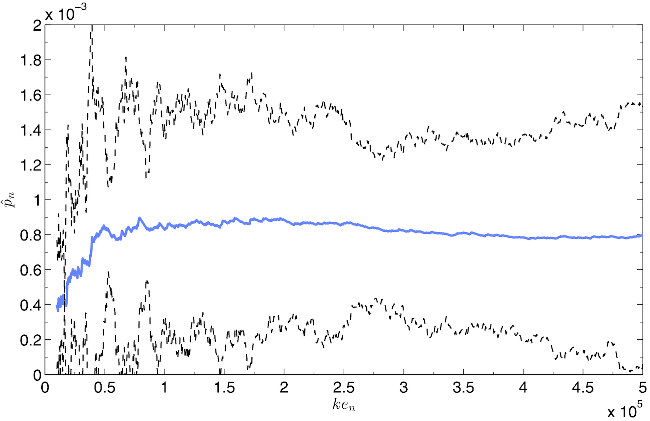}

\caption{$\hat p_n$ (solid  line) and confidence intervals (dashed line) versus $ke_n$ for
Danish fire insurance claims.} \label{fig:probabplot}
\end{figure}

As described in the \hyperref[intro]{Introduction}, we assume that a quota reinsurance
pays $(1-\alpha_X) X_i$ for each loss $X_i$ to the building and
$(1-\alpha_Y)Y_i$ for each loss $Y_i$ of content, while an
XL-reinsurance pays if the remaining costs $\alpha_X X_i+\alpha_Y
Y_i$ exceed a retention level $R$. We want to estimate the
probability $p_n:=P(D_n)$ with $D_n:=\{\alpha_X X_i+\alpha_Y
Y_i>R\}$ that a fire results in a claim to the XL-reinsurance for
$\alpha_X=1$, $\alpha_Y=0.5$ and $R=100$. (More
precisely, we estimate the conditional probability given that
$\max(X_i,Y_i)>1$.)

M\"{u}ller \cite{m08}, Section~5.1.2, fitted GPD's to the
marginal distributions using the Hill estimators based on the
$k_1=900$ and $k_2=600$ largest observations to obtain a tail
approximation of the type \eqref{eq:marginalfits}
with parameters $\hat\gamma_1=0.57$, $\hat\sigma_1=0.54$,
$\hat\mu_1=0.91$, $\hat\gamma_2=0.72$, $\hat\sigma_2=0.47$ and
$\hat\mu_2=0.15$. Moreover, he showed that the
components of the claim vector are apparently asymptotically
dependent.

As suggested in Section~\ref{subsect:ken}, in Figure~\ref{fig:probabplot}
we plot the estimate of the failure probability versus $ke_n$ for
values of $ke_n$
ranging from $10^4$ to $5\cdot10^5$. In the present situation, the
crude upper bound on $ke_n$ discussed in Section~\ref{subsect:ken}
is about $1.7\cdot10^6$, but the curve
of probability estimates shows a clear downward trend for
$ke_n>2\cdot10^5$, which is most likely due to a deviation of the
dependence structure from its limit. On the other hand, for values
smaller than $5\cdot10^4$ the curve is very unstable, too, because
the random error is too large as just a few observations fall
into the inflated failure set (e.g., about 25 if $ke_n\approx3\cdot
10^4$). This lower bound on $ke_n$ reflects the condition in the
asymptotic framework that $n$ is of smaller order than $k e_n$
(see condition (S1)). In view of this plot, the choice $ke_n=2\cdot
10^5$ seems reasonable.

In addition, Figure~\ref{fig:probabplot}
shows a two-sided
confidence interval with nominal size $0.95$, again as a function of
$ke_n$. Here we have chosen $\ell_n=0.1$ and $\lambda=1$ in the
estimator of the variance $\sigma^2_N$ described above; other values
of $\lambda$ between $1/2$ and 1 yield approximately the same
estimates, while smaller values of $\ell_n$ lead to larger
fluctuations in the confidence bounds, that however are still of a
similar size.

For $ke_n=2\cdot10^5$ one obtains a point estimate for $p_n$ of
about $8.8\cdot10^{-4}$ and a confidence interval $[2.2\cdot
10^{-4}, 1.54\cdot10^{-3}]$. At first glance, this confidence
interval seems rather wide. However, here we estimate the probability
of a very rare event which has
occurred only twice in the observational period of more than 20
years. Indeed the empirical probability of the event is about
$5\cdot10^{-4}$, and the Clopper--Pearson confidence interval
$[6\cdot10^{-5}, 1.8\cdot10^{-3}]$ (again with nominal size
$0.95$) is even wider. It is worth mentioning that both the
empirical point estimate and the Clopper--Pearson confidence interval
are exactly the same if one wants to estimate the probability that a
claim occurs to the XL-reinsurance for any retention level $R$
between 77 and 145 million DKK! Moreover, for retention level above
152 million DKK the point estimate would be 0 and thus useless for
purposes of risk management.

\section{Simulation study} \label{sect:simus}

In a small simulation study, we examine the finite sample behavior of
our estimator of a failure probability. In particular, we want to
compare its performance with that of the estimator proposed by
de Haan and Sinha \cite{hs99}. Moreover, we demonstrate that often
the fit of the
marginal distributions is the main source of the random error, as
indicated by the main Theorem~\ref{theo:main}.

We consider two different models for the dependence structure of $(X,Y)$.
\begin{itemize}
\item For the first model class we assume that $(X,Y)$ has a Gumbel
copula, that is, for some $\th\in(1,\infty)$
\[
F\bigl(F_1^\leftarrow(u),F_2^\leftarrow(v)
\bigr) = C_\th^{\mathrm{Gum}}(u,v) = \exp \bigl(- \bigl(|\log
u|^\th+|\log v|^\th \bigr)^{1/\th} \bigr),\qquad  0< u,v
\le1.
\]
Note that $C_\th^{\mathrm{Gum}}$ is the Copula of a bivariate extreme value
distribution. The corresponding exponent measure is given by
\[
\nu_\th^{\mathrm{Gum}} \bigl((u,\infty)\times(v,\infty) \bigr) =
\frac{1}u+\frac{1}v + \bigl(u^{-\th}+v^{-\th}
\bigr)^{1/\th}, \qquad u,v>0,
\]
and the spectral density by
\[
\varphi_\th^{\mathrm{Gum}}(\arctan t) = (\th-1) \bigl(1+t^2
\bigr)^{3/2} t^{-(\th+1)} \bigl(1+t^{-\th}
\bigr)^{1/\th-2}.
\]
Hence,
\[
\varphi_\th^{\mathrm{Gum}}(t) \sim(\th-1)t^{\th-2} \quad \mbox{and} \quad \varphi _\th ^{\mathrm{Gum}} (\uppi/2-t ) \sim(\th-1)t^{\th-2}\qquad
\mbox{as } t\downarrow0,
\]
and condition (D2) is obviously fulfilled for $\th\ge2$, whereas for
$\th\in(1,2)$ the spectral density is not bounded and thus (D2) is
not satisfied.

In the simulation results presented below, the Gumbel copula with $\th
=5$ is combined with generalized extreme value (GEV) marginals
$F_i(x)=\exp (-(1+\gamma_i x)^{-1/\gamma_i} )$ for $1+\gamma
_i x>0$ with extreme value index $\gamma_1=\gamma_2\in\{-1/4,0,1/4\}
$. (Simulations with GPD marginals yielded very similar results which
are thus not presented here.)

\item The second model class is related to the model suggested by
Schlather \cite{s02}. Similarly as in Example~3.8 of Segers \cite
{s12}, we define
$(X,Y)=\sqrt{2\uppi} (SZ,TZ)$ where $Z$ is a unit Fr\'{e}chet random
variable (i.e., $P\{Z\le x\}=\exp(-1/x)$ for $x>0$) independent from
the vector $(S,T)$ which has the distribution of a centered normal
vector with variances 1 and covariance $\varrho\in(-1,1)$ conditioned
on both coordinates being positive. By Lemma~3.1 of Segers
\cite{s12}, the
distribution of $(X,Y)$ belongs to the max domain of attraction of an
extreme value distribution with unit Fr\'{e}chet marginals. Direct
calculations show that the stable tail dependence function is given by
\begin{eqnarray*}
\ell(x,y)&:=&\nu \biggl( \biggl(\frac{1}x,\infty \biggr) \times \biggl(
\frac{1}y,\infty \biggr) \biggr)\\
& = &\frac{1}{1+\rho} \bigl(\rho (x+y)+
\bigl(x^2+y^2-2\rho xy\bigr)^{1/2} \bigr),\qquad  x,y
\ge0.
\end{eqnarray*}
Hence, the exponent measure has the density
\[
\eta(u,v)=\frac{1-\varrho^2}{1+\rho} \bigl(u^2+v^2-2\varrho uv
\bigr)^{-3/2},
\]
and the pertaining spectral density
\[
\varphi(t) = \frac{1-\varrho^2}{1+\rho} \biggl(1-2\varrho\frac
{\tan t}{1+\tan^2 t} \biggr)
\]
is strictly positive and continuous on $[0,\uppi/2]$, so that condition
(D2) is fulfilled. We have simulated data sets with correlation
$\varrho\in\{-0.8, 0.2, 0.8\}$, but for briefness sake will report
results only for the last value, because they look similar in all three cases.

Direct calculations show that the marginal distributions are symmetric with
\[
1-F_i(x)=1/2-\exp \bigl(\uppi x^{-2} \bigr) \bigl(1-\Phi
\bigl(\sqrt{2\uppi }x^{-1} \bigr) \bigr),\qquad  x>0.
\]
\end{itemize}

We want to estimate failure probabilities of linear half spaces of the
form $D=\{(x,y)\mid x+y/2>R\}$ where $R$ has been chosen such that the
failure probability $p_n$ (which is determined by simulations) lies
between $2\cdot10^{-4}$ and $5\cdot10^{-4}$ for each of the above models.

For each setting, we have simulated 1000 data sets of size $n=500$. As
$n p_n\ll1$ (with a value of about 0.1 in most settings), the
estimation of $p_n$ is a challenging task. In particular, one cannot
use empirical probabilities, because in most simulations the failure
set will not contain any data point.

To make a comparison with the estimator $\hat p_n^{(\mathrm{HS})}$ proposed by
de Haan and Sinha \cite{hs99} easier, we use their marginal estimators:
\begin{eqnarray*}
\hat\gamma_1 & := & M_1(X)+1-\frac{1}2 \bigl/
\biggl(1-\frac
{M_1^2(X)}{M_2(X)} \biggr),
\\
\hat a_1(n/k) & := & X_{n-k:n} \bigl(3M_1^2(X)-M_2(X)
\bigr)^{1/2}
\\
& &{} \times \biggl(\frac{3}{(1-\hat\gamma_1\vee0)^2}- \frac{2}{(1-\hat
\gamma_1\vee0)(1-2\hat\gamma_1\vee0)} \biggr)^{-1/2},
\\
\hat b_1(n/k) & := & X_{n-k:n}
\end{eqnarray*}
with
\[
M_r(X)=\frac{1}k \sum_{i=1}^k
\biggl(\log\frac
{X_{n-i+1:n}}{X_{n-k:n}} \biggr)^r.
\]
For small values of $k$ it happens (in at most a few percents of the
simulations) that the scale estimate $\hat a_1(n/k)$ is not defined. In
these simulations, instead we use the moment estimator
\[
\hat a_1(n/k) := \frac{1}2 X_{n-k:n}
M_1(X) \bigl/ \biggl(1-\frac
{M_1^2(X)}{M_2(X)} \biggr)
\]
proposed by Dekkers \textit{et al}. \cite{deh89}. The estimators of the parameters of
the second marginal distribution are defined likewise with $X_i$
replaced with $Y_i$.

We now discuss our findings for the model with Gumbel copula
$C_5^{\mathrm{Gum}}$ and Gumbel margins (i.e., GEV margins with $\gamma=0$) in
detail. The results for the other distributions of $(X,Y)$ are then
presented more briefly. In this model, the true failure probability
$p_n$ is about $2.25\cdot10^{-4}$ for $R=12$.

Figure~\ref{fig:RMSEGEV0} displays the empirical root mean squared
error (RMSE) of our estimator $\hat p_n$ as a function of the number
$k$ of largest order statistics used for marginal fitting and the
product $ke_n$ which determines the blow-up factor. As expected, if
either of these values is small, then the RMSE is high due to a large
standard error, while for too large a value the bias leads to a large
RMSE; in particular the first effect is much more pronounced for the
blow-up factor. It can be most clearly seen from the contour lines
shown in the left plot of Figure~\ref{fig:contGEV0} that values of $k$
in the range 50--150 and values of $ke_n$ around $1.5\cdot10^6$ yield
most accurate estimates.
%
\begin{figure}

\includegraphics{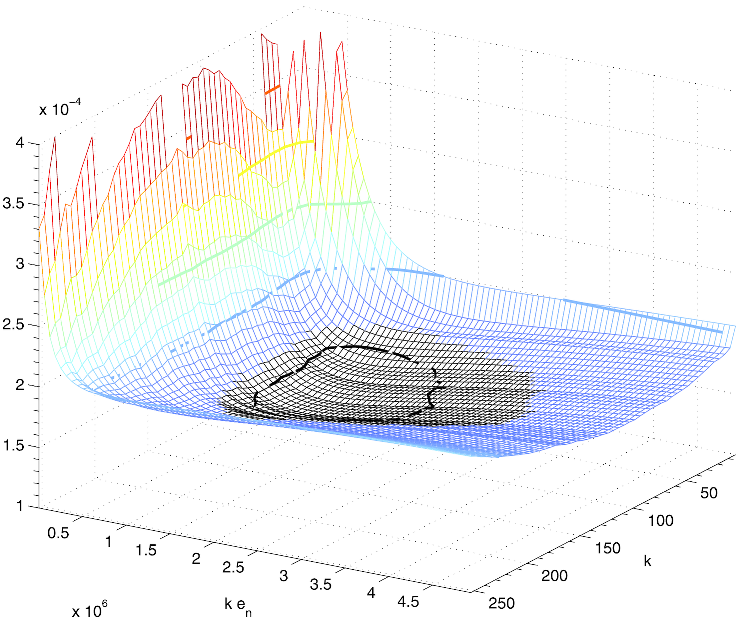}

\caption{Empirical RMSE of $\hat p_n$ for $(X,Y)$ with Gumbel $\th=5$
copula and Gumbel marginals.} \label{fig:RMSEGEV0}
\end{figure}

Next we compare the performance of $\hat p_n$ with that of the
estimator $\hat p_n^{(\mathrm{HS})}$ suggested by de Haan and Sinha \cite{hs99}.
Recall from Section~\ref{subsect:alternatives} that the main
difference between these estimators is that the latter uses a
data-driven value $\hat c_n$ instead of $ke_n/n$, which is constructed
as an estimator of the (quite arbitrarily fixed) factor $d_n$; see
formula (1.6) of de Haan and Sinha \cite{hs99}. The graph on the right-hand
side of Figure~\ref{fig:contGEV0} shows a contour plot of the ratio of
the RMSE of $\hat p_n^{(\mathrm{HS})}$ and of the RMSE of $\hat p_n$ again as a
function of $k$ and $ke_n$. Obviously, the RMSE of $\hat p_n^{(\mathrm{HS})}$ is
much larger than that of our estimator for almost all values of $k$ and
$e_n$. Indeed, for the most reasonable choices of $k$ the former is
usually as least double as large as the latter.
%
\begin{figure}

\includegraphics{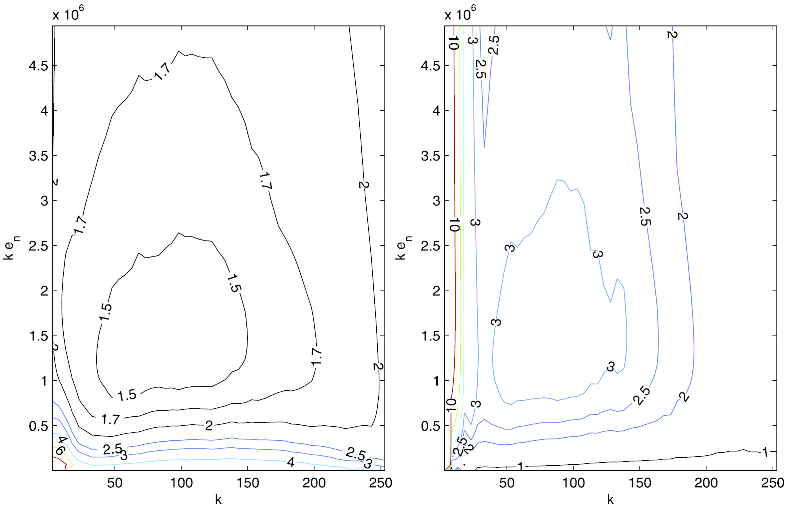}

\caption{Contour lines of $10^4\times$RMSE of $\hat p_n$ (left) and
of the ratio between the RMSE of $\hat p_n^{(\mathrm{HS})}$ and $\hat p_n$
(right).}\vspace*{15pt} \label{fig:contGEV0}
%
%

\includegraphics{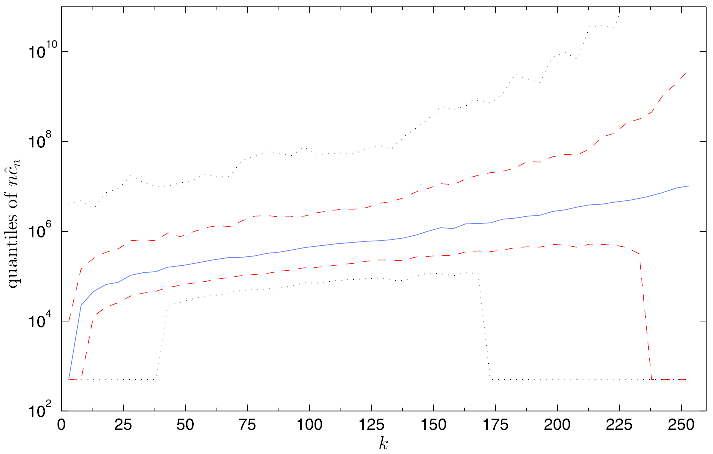}

\caption{Empirical quantiles of $n\hat c_n$ to the levels 0.1, 0.25,
0.5, 0.75 and 0.9.} \label{fig:cnhquantsGEV0}
\end{figure}

The reason for the inferiority of $\hat p_n^{(\mathrm{HS})}$ can be seen from
Figure~\ref{fig:cnhquantsGEV0} which shows empirical quantiles of
$n\hat c_n$ (corresponding to $ke_n$) as a function of $k$ for the
levels 0.1, 0.25, 0.5, 0.75 and 0.9. For $k<200$, in more than half of
the simulations $n\hat c_n$ is clearly smaller than $10^6$, whereas a
good choice of $k e_n$ would be in the range $10^6-2\cdot10^6$.
Moreover, the variation of $n\hat c_n$ is huge with more than 10\% of
the estimates exceeding $10^7$ for $k\ge50$. This figure indicates
that the estimator employed is rather inaccurate. However, even if the
theoretical value of $c_n=kd_n/n$ was known, in general the resulting
estimator of the failure probability would not perform well, because
$d_n$ is arbitrarily defined by the requirement that ($1,1$) lies on the
boundary of $S$. In particular, this choice of $d_n$ is not at all
related to the accuracy of the approximation \eqref{eq:nuapprox} which
strongly influences the part of the bias not resulting from the fitting
of the marginal distributions. For example, in the present situation
$d_n=U^\leftarrow(R/1.5)=\e^{R/1.5}\approx 3000$, which leads to much
too small values of $ke_n$ if one chooses $e_n=d_n$.

In the situation of Theorem~\ref{theo:main}, asymptotically the main
source of \emph{random} error is the fitting of the marginal
distributions. To check whether this bears out for moderate sample
sizes, we have calculated an analog to our estimator of the failure
probability where the marginal estimator $T_{n,i}^\leftarrow$ is
replaced with the true function $(k/n)U^\leftarrow$:
\[
\hat p_n^{(tm)} := \frac{1}{ke_n} \sum
_{i=1}^n \eps_{U^\leftarrow
(X_i,Y_i)} \biggl(
\frac{n}{ke_n} U^\leftarrow(D) \biggr).
\]
The left-hand plot in Figure~\ref{fig:GEV0cont00} displays contour
lines of the ratio between the standard errors of $\hat p_n$ and of
$\hat p_n^{(tm)}$ as a function of $k$ and $ke_n$. For almost all
values of these tuning parameters, the standard error of the estimator
$\hat p_n$ with estimated marginals is at least 5 times larger than
that of $\hat p_n^{(tm)}$, and it is more than 100 times larger if
$ke_n>1.5\cdot10^6$. However, if one considers the RMSE, then the
picture is less dramatic. While the total error of $\hat p_n$ is still
at least about 25\% higher than that of $\hat p_n^{(tm)}$, for the most
reasonable choices of $k$ and $e_n$ it is less than three times as
large (middle plot in Figure~\ref{fig:GEV0cont00}). The reason for
this different behavior of the RMSE is of course the bias, which is at
least 5 times higher in absolute value than the standard error for
$\hat p_n^{(tm)}$ with $ke_n>0.5\cdot10^6$ and is thus the dominating
part of the total error. At first glance one might expect that using
the true marginal distributions should also reduce the bias. However,
in general this is not true, since the bias resulting from the marginal
fit may partly cancel out with the bias caused by the error in
approximation \eqref{eq:nuapprox}. Indeed, for moderate values of
$ke_n$ the absolute bias of $\hat p_n$ is often smaller than that of
$\hat p_n^{(tm)}$ as it can be seen from the contour plot of the ratio
of these values in the right-hand plot of Figure~\ref{fig:GEV0cont00}.
%
\begin{figure}

\includegraphics{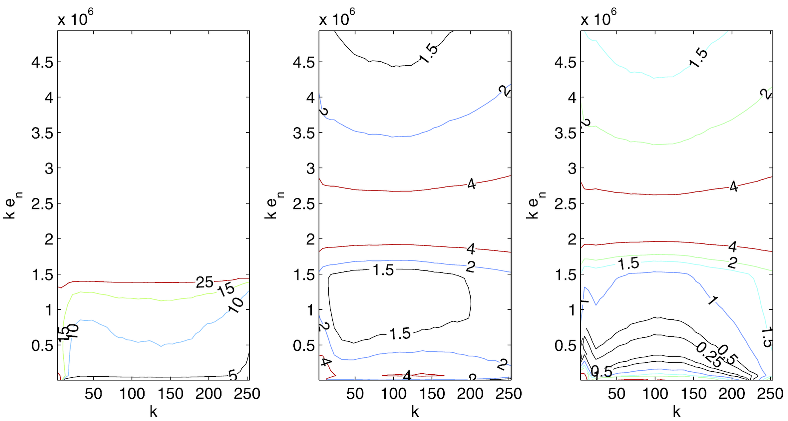}

\caption{Contour plots of the ratio of the standard errors of $\hat
p_n$ and of $\hat p_n^{(tm)}$ (left), of the corresponding ratios of
RMSE (middle) and of the ratio of absolute bias (right).} \label{fig:GEV0cont00}
\end{figure}

Table~\ref{tab:distrib} lists the true failure probabilities for the
four remaining settings not discussed so far.
%
\begin{table}[b]
\tablewidth=250pt
\tabcolsep=0pt
\caption{Failure probabilities for different models and thresholds
$R$} \label{tab:distrib}
\begin{tabular*}{250pt}{@{\extracolsep{\fill}}llll@{}}
\hline
\multicolumn{1}{l}{\multirow{2}{60pt}[-12pt]{Model}} & \multicolumn{2}{l}{Copula $C^{\mathrm{Gum}}_5$} & \multicolumn{1}{l}{\multirow{2}{60pt}[-6pt]{Schlather type
model $\varrho=0.8$}}\\[-5pt]
 & \multicolumn{2}{l}{\hrulefill} & \\
& \multicolumn{1}{l}{$\gamma=0.25$} & \multicolumn{1}{l}{$\gamma=-0.25$} & \\ \hline
$R$ & $40$ & $5$ & 5000 \\
$10^4\times$failure probab. & $\hphantom{4}{2.21}$ & $5.00$ & 3.40\\
\hline
\end{tabular*}
\vspace*{-4pt}
\end{table}

Figure~\ref{fig:plots} shows contour lines of the RMSE of $\hat p_n$
and of the ratio of the RMSE of $\hat p_n^{(\mathrm{HS})}$ and $\hat p_n$ in the
two left-hand columns (corresponding to the plots in Figure~\ref
{fig:contGEV0}) and the ratios of the standard error and the RMSE of
$\hat p_n$ and $\hat p_n^{(tm)}$ (corresponding to the first two plots
in Figure~\ref{fig:GEV0cont00}) in the two right-hand columns. The
four models are displayed in the four rows in the same order as in
Table~\ref{tab:distrib}.

While in most cases the results are qualitatively the same as for the
Gumbel model discussed above, there are two remarkable exceptions.
First, in the model with Gumbel copula and GEV marginals with $\gamma
=-0.25$ the estimator suggested by de Haan and Sinha \cite{hs99} works
pretty well with an RMSE which is at most 25\% larger than that of
$\hat p_n$ for most combinations of $k$ and $e_n$. (For some not very
reasonable combinations it is even smaller.)

Second, in the Schlather type model with $\varrho=0.8$, due to its
rather large bias, the RMSE of the estimator $\hat p_n^{(tm)}$ which
uses the true marginal distributions is clearly larger than the one of
$\hat p_n$ if $k\ge50$ and $ke_n$ is about $8\cdot10^5$. However, as
the first plot for this distribution shows, these combinations of
values for $k$ and $ke_n$ are not good choices for $\hat p_n$, because
its RMSE is more than 4 times as large as the minimal value. Indeed,
for all simulated distributions the bias of $\hat p_n^{(tm)}$ was more
stable than that of $\hat p_n$ (in particular for small value of $k$),
thus indicating again that one should be particularly careful with the
estimation of the marginal distributions.
%
\begin{figure}

\includegraphics[scale=0.95]{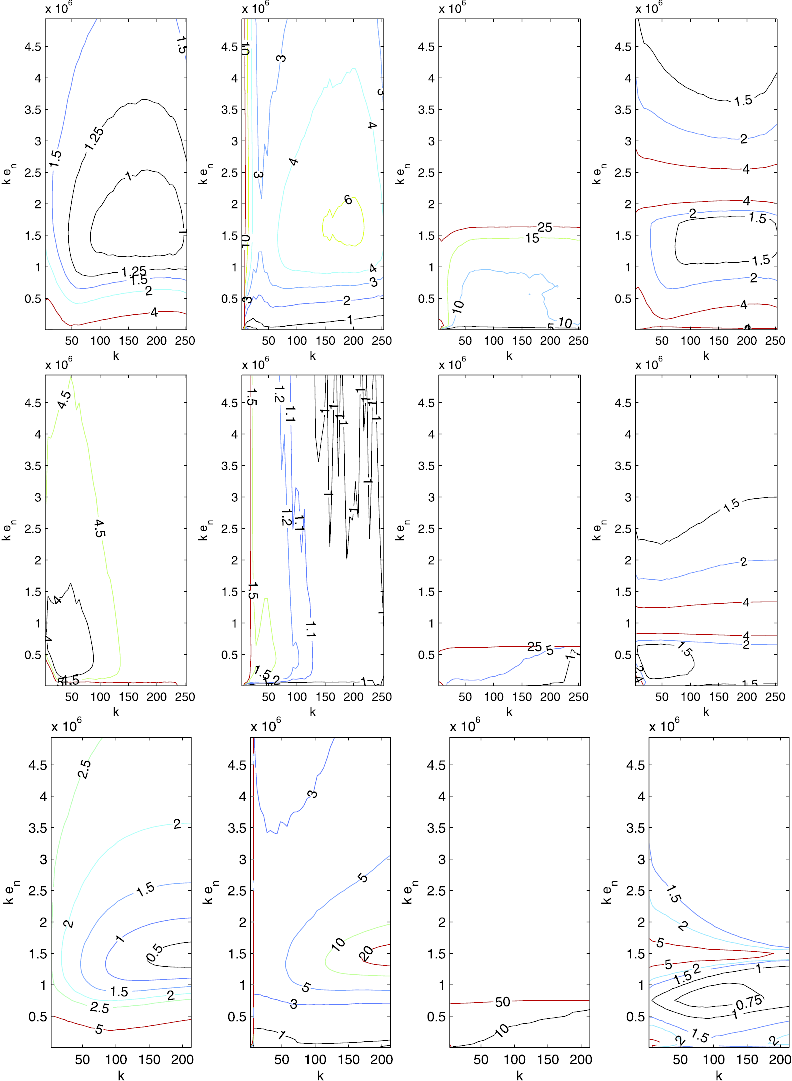}

\caption{From left to right: contour plots of $10^4\times$RMSE, the
ratio of RMSE of $\hat p_n^{(\mathrm{HS})}$ and $\hat p_n$, the ratio of
standard errors of $\hat p_n$ and $\hat p_n^{(tm)}$, and the ratio of
RMSE of $\hat p_n$ and $\hat p_n^{(tm)}$; from top to bottom: Gumbel
$\th=5$ copula with GEV marginals with $\gamma=0.25$ resp. $\gamma
=-0.25$, Schlather type model with $\varrho=0.8$.}
\label{fig:plots}
\end{figure}

\section{Proofs} \label{sect:proofs}

The proof of Theorem~\ref{theo:main} is based on the decomposition
\eqref{eq:esterror} of the estimation error. The asymptotic behavior
of the leading term $\mathit{IV}+V$ is established in Corollary~\ref{cor:termIVVapprox}. As a preparation for this result, first we
establish an approximation of the random transformation of the
marginals defined in \eqref{eq:Hndef}. Thereby we must restrict
ourselves to arguments that are neither too small nor too large,
which defines a certain subset $S_n^*$ of $S$. In Lemma~\ref{lemma:Snstarapprox} an upper bound on the difference between
the $\nu$-measures of $S$ and $S_n^*$ is derived, while the Lemmas
\ref{lemma:doubleint} and \ref{lemma:singleintapprox} analyze the
influence of the marginal transformations on the $\nu$-measure of
$S_n^*$.

In Lemma~\ref{lemma:empprocess} an upper bound on the term $\mathit{II}$ of
decomposition \eqref{eq:esterror} is proved using empirical process
theory. Finally, Lemma~\ref{lemma:truncInegl} establishes upper
bounds on the terms $I$ and $\mathit{III}$, while Lemma~\ref{lemma:termVI}
takes care of $\mathit{VI}$.

%
\begin{lemma} \label{lemma:marginalapprox}
Assume that the conditions \textup{(M1)--(M3)} and \textup{(S1)} are fulfilled.
For $i\in\{1,2\}$, let $ \lambda_{n,i}> 0$ be a decreasing and $\tau
_{n,i}<\infty$ an increasing sequence,
such that the following conditions are met:
\begin{enumerate}[(iii)]
\item[(i)]$ A_i(n/k)
(\lambda_{n,i}d_nk/n)^{\rho_i\pm\eps}=\mathrm{o} (k^{-1/2}
w_n(\gamma_i) )$ for some $\eps>0$.
\item[(ii)] If $\gamma_i>0$, then
$k^{-1/2}=\mathrm{o} ((\lambda_{n,i}d_n/e_n)^{\gamma_i} )$.
\item[(iii)] If $\gamma_i<0$, then $k^{-1/2}=\mathrm{o} ((\tau_{n,i}d_n
k/n)^{\gamma_i} )$ and
$\log(d_n/e_n)=\mathrm{o} ((d_nk/n)^{-\gamma_i} )$.
\item[(iv)] If $\gamma_i=0$, then $k^{-1/2}\log\tau_{n,i}\to
0$ and $\log(d_n/e_n)=\mathrm{o}(\log c_n)$.
\end{enumerate}
Then, for $i\in\{1,2\}$,
\begin{eqnarray*}
\frac{d_n}{e_n} H_{n,i}(x) &=& T_{n,i}^\leftarrow\circ
\hat T_{n,i} \circ \hat T_{n,i}^{(c)\leftarrow}\circ
U_i(d_n x)
\\
& = &\frac{d_n}{e_n} x \left(\vphantom{\begin{array} {l@{\qquad  }l} -k^{-1/2} \log
c_n \biggl(\displaystyle \frac{\Gamma_i}{\gamma_i} +\mathrm{o}_P(1)\biggr)+
\mathrm{O}_P\bigl(k^{-1/2}(xd_n/e_n)^{-\gamma_i}
\bigr), & \gamma_i>0,
\\
k^{-1/2} (d_nk/n)^{-\gamma_i} \bigl(\bigl(
\alpha_i/\gamma_i-\beta_i-
\Gamma_i/\gamma _i^2+\mathrm{o}_P(1)
\bigr)x^{-\gamma_i}+\mathrm{o}_P(1) \bigr), &
\gamma_i<0,
\\
-k^{-1/2} \log^2 c_n \bigl(\Gamma_i/2+
\mathrm{o}_P(1)\bigr)+ \mathrm{O}_P
\bigl(k^{-1/2} \log c_n\log x\bigr), & \gamma_i=0,
\end{array}} 1\right.
\\
& &\hphantom{\frac{d_n}{e_n} x \left(\right.}{} +\left.\lleft\{ %
\begin{array} {l@{\qquad  }l} -k^{-1/2} \log
c_n \biggl(\displaystyle \frac{\Gamma_i}{\gamma_i} +\mathrm{o}_P(1)\biggr)+
\mathrm{O}_P\bigl(k^{-1/2}(xd_n/e_n)^{-\gamma_i}
\bigr), & \gamma_i>0,
\\
k^{-1/2} (d_nk/n)^{-\gamma_i} \\
\quad {}\times\bigl(\bigl(
\alpha_i/\gamma_i-\beta_i-
\Gamma_i/\gamma _i^2+\mathrm{o}_P(1)
\bigr)x^{-\gamma_i}+\mathrm{o}_P(1) \bigr), &
\gamma_i<0,
\\   \noalign{\vspace*{2pt}}
-k^{-1/2} \log^2 c_n \bigl(\Gamma_i/2+
\mathrm{o}_P(1)\bigr)+ \mathrm{O}_P
\bigl(k^{-1/2} \log c_n\log x\bigr), & \gamma_i=0
\end{array} %
\rright. \hspace*{-3pt}\right)
\end{eqnarray*}
uniformly for $x\in[\lambda_{n,i},\tau_{n,i}]$.
\end{lemma}

\begin{pf}
For notational simplicity, we omit all indices and arguments of the marginal
parameters and normalizing functions and their estimators; for example,
we use $\hat a$ as a short form
of $\hat a_i(n/k)$. Moreover, we drop all indices referring to the
$i$th marginal, that is,
we write $U$ instead of $U_i$, $T_n$ instead of $T_{n,i}$ and so on.

By \eqref{eq:secordunif2}, for all $0<\eps<|\rho|$,
%
\begin{eqnarray}\label{eq:Delta1}
\Delta_1(x) & :=& \frac{U(d_nx)-b}a -\frac{(xd_nk/n)^\gamma-1}\gamma
\nonumber
\\[-8pt]\\[-8pt]
& =& \mathrm{O} \bigl(A(n/k) (xd_nk/n)^{\gamma+\rho\pm\eps} \bigr) =
\mathrm{o} \bigl(k^{-1/2} w_n(\gamma) (xd_nk/n)^\gamma
\bigr) \nonumber
\end{eqnarray}
uniformly for all $x\ge\lambda_n$, where in the last step we have
used condition (i).
Now one can conclude that $U(d_nx)\in\hat T_n((0,\infty))$ for all
$x\in[\lambda_n,\tau_n]$ with probability tending to 1. For
example, if $\gamma>0$, then we have to show that $U(d_nx)>\hat
b-\hat a/\hat\gamma$ for all $x\ge\lambda_n$ or, equivalently,
(using (M3)) that $\Delta_1(\lambda_n) $ is larger than
\[
\frac{\hat b-b}a-\frac{\hat
a}{a\hat\gamma} - \frac{(\lambda_nd_nk/n)^\gamma-1}\gamma= -
\frac{1}\gamma \biggl(\frac{d_nk}n\lambda_n
\biggr)^\gamma+\mathrm{O}\bigl(k^{-1/2}\bigr)
\]
which follows immediately from \eqref{eq:Delta1}, (S1) and (ii).

Hence
\begin{eqnarray*}
&&T_{n}^\leftarrow\circ\hat T_{n} \circ \hat
T_n^{(c)\leftarrow}\circ U(d_n x)\\
&&\quad  = \biggl[1+\frac\gamma a
\biggl(\hat a \frac{  (c_n^{-1}  (
1+\hat\gamma(\vfrac{U(d_nx)-\hat b}{\hat
a}) )^{1/\hat\gamma} )^{\hat\gamma}-1}{\hat\gamma} +\hat b-b \biggr) \biggr]^{1/\gamma}
 =:
\tilde H(x)
\end{eqnarray*}
if the expression in brackets is strictly positive, which
will indeed follow from the calculations below.

Now direct calculations show that
%
\begin{equation}
\label{eq:tildeHrep} \tilde H(x) = \biggl[1+\gamma \biggl( c_n^{-\hat\gamma}
\frac{U(d_nx)-b}a + \frac{c_n^{-\hat\gamma} -1}{\hat\gamma} \biggl(\frac{\hat a}a -
\frac{\hat b-b}a\hat\gamma \biggr) \biggr) \biggr]^{1/\gamma}.
\end{equation}
By assumption (M3)
%
\begin{equation}
\label{eq:Delta2} \Delta_2 := \frac{\hat a}a -\frac{\hat b-b}a
\hat\gamma-1 = k^{-1/2}\bigl(\alpha-\gamma\beta+\mathrm{o}_P(1)
\bigr).
\end{equation}

If $\gamma> 0$, then the Taylor expansion
\[
c_n^{-\hat\gamma/\gamma} = c_n^{-1}
\biggl(1-k^{-1/2}\frac\Gamma\gamma\log c_n+
\mathrm{o}_P\bigl(k^{-1/2}\log c_n\bigr) \biggr)
\]
together with \eqref{eq:Delta1}, \eqref{eq:Delta2} and (S1) implies
\begin{eqnarray*}
\tilde H(x) & = & \biggl[ c_n^{-\hat\gamma} \biggl( \biggl(
\frac{d_n k}{n} x \biggr)^\gamma-1 + \gamma\Delta_1(x) +
\frac{\gamma}{\hat\gamma}(1+\Delta_2) \biggr) +1-\frac{\gamma}{\hat\gamma} -
\frac{\gamma}{\hat\gamma}\Delta_2 \biggr]^{1/\gamma}
\\
& = & c_n^{-\hat\gamma/\gamma} \frac{d_nk}nx
\\
& & {}\times \biggl[1+\mathrm{O}_P \biggl( \bigl( \bigl|
\Delta_1(x)\bigr|+k^{-1/2} \bigr) \biggl(\frac{d_nk}n x
\biggr)^{-\gamma} \biggr)+ \mathrm{O}_P \biggl(k^{-1/2}
c_n^{\hat\gamma} \biggl(\frac{d_nk}n x \biggr)^{-\gamma}
\biggr) \biggr]^{1/\gamma}
\\
& = & \frac{d_n}{e_n} x \biggl(1-k^{-1/2}\frac\Gamma\gamma\log
c_n+\mathrm{o}_P\bigl(k^{-1/2}\log
c_n\bigr) \biggr)
\\
& & {}\times \biggl[ 1+\mathrm{o}_P\bigl(k^{-1/2}\log
c_n\bigr) +\mathrm{O}_P \biggl(k^{-1/2} \biggl(
\frac{d_n}{e_n} x \biggr)^{-\gamma} \biggr) \biggr],
\end{eqnarray*}
from which the assertion follows readily.

If $\gamma<0$, then similar arguments prove
\begin{eqnarray*}
\tilde H(x) & = & \frac{d_n}{e_n} x \bigl(1+\mathrm{O}_P
\bigl(k^{-1/2}\log c_n\bigr) \bigr)
\\
& &{} \times \biggl[ 1+ k^{-1/2} \frac{1}\gamma \biggl(\alpha-
\gamma\beta-\frac \Gamma\gamma+\mathrm{o}_P(1) \biggr) \biggl(
\frac{d_nk}n x \biggr)^{-\gamma}+\mathrm{o}_P
\bigl(k^{-1/2} w_n(\gamma)\bigr) \biggr]
\end{eqnarray*}
and hence the assertion, because the assumption (iii) ensures that
$\log c_n=\mathrm{o}(w_n(\gamma))$.

Finally, for $\gamma=0$, the Taylor expansion
\[
c_n^{-\hat\gamma} = 1-\hat\gamma\log c_n +
\tfrac{1}2\hat\gamma^2\log^2c_n +
\mathrm{O}_P\bigl(\hat\gamma^3\log^3
c_n\bigr)
\]
yields
\begin{eqnarray*}
\tilde H(x) & = & \exp \biggl[ \bigl(1-\hat\gamma\log c_n +
\mathrm{O}_P\bigl(k^{-1}\log^2 c_n
\bigr) \bigr) \biggl(\log \biggl(\frac{d_nk}nx \biggr)+\Delta_1(x)
\biggr)
\\
& & \hphantom{\exp \biggl[}{ } + \biggl(-\log c_n+\frac{1}2\hat\gamma
\log^2 c_n+\mathrm{O}_P\bigl(k^{-1}
\log^3 c_n\bigr) \biggr) (1+\Delta_2) \biggr]
\\
& = & \frac{d_nk}{c_nn}x \exp \biggl[ -\hat\gamma\log c_n \biggl(
\log c_n+\log \biggl(\frac{d_n}{e_n}x \biggr) \biggr)+
\mathrm{o}_P\bigl(k^{-1/2}\log^2 c_n
\bigr)+\frac{1}2\hat\gamma\log^2 c_n \biggr]
\\
& = & \frac{d_n}{e_n} x \biggl[1-\frac{1}2 \bigl(\Gamma+
\mathrm{o}_P(1)\bigr)k^{-1/2}\log^2
c_n + \mathrm{O}_P\bigl(k^{-1/2}\log
c_n \log x\bigr) \biggr],
\end{eqnarray*}
which concludes the proof.
\end{pf}

In what follows we denote by $\lambda_{n,1}\searrow x_l$,
$\lambda_{n,2}\searrow q(\infty):=\lim_{x\to\infty} q(x)$ and
$\tau_{n,i}\uparrow\infty$, $i\in\{1,2\}$, sequences which satisfy
the conditions of Lemma~\ref{lemma:marginalapprox}. (These sequences
will be specified in the proof of Corollary~\ref{cor:termIVVapprox}.) Note that in particular constant sequences
$\lambda_{n,i}, \tau_{n,i}\in(0,\infty)$ satisfy the conditions of
Lemma~\ref{lemma:marginalapprox}, provided
%
\begin{equation}
\label{Aicond} A_i(n/k) c_n^{\rho_i+\eps} =
\mathrm{o} \bigl(k_n^{-1/2} w_n(
\gamma_i) \bigr) \qquad \mbox{for } i\in\{1,2\} \mbox{ and some } \eps>0
\end{equation}
and (S1) holds. Therefore, we may and will choose
%
\begin{eqnarray}
\label{lambdatauchoice} %
\begin{array}{rcl@{\qquad}l}
\lambda_{n,1} &=&x_l &\mbox{if
} x_l:=\inf\bigl\{ x\ge0\mid q(x)<\infty\bigr\} >0,
\\
\lambda_{n,2} &=&q(\infty) &\mbox{if } q(\infty)>0.
\end{array} %
\end{eqnarray}

We want to apply the approximations just established to points
$(x,y)$ on the boundary of $S$. To ensure that
$x\in[\lambda_{n,1},\tau_{n,1}]$ and
$y\in[\lambda_{n,2},\tau_{n,2}]$, we consider a subset $S_n^*$ of
$S$ that is bounded away from the coordinate axes.
More precisely, we define
\[
S_n^* := S\cap \bigl(\bigl[u_n^*,\infty\bigr)
\times\bigl[v_n^*,\infty\bigr) \bigr)
\]
with
\[
u_n^* := \lambda_{n,1} \vee q^\leftarrow(
\tau_{n,2}),\qquad  v_n^* := \lambda_{n,2} \vee q (
\tau_{n,1}).
\]
The following lemma implies that the $\nu$-measure of the set
$S\setminus S_n^*$ is asymptotically negligible.

\begin{lemma} \label{lemma:Snstarapprox}
\[
\nu(S)-\nu\bigl(S_n^*\bigr) = \mathrm{O} \biggl( \frac{\lambda_{n,1}-x_l}{q^2(\lambda
_{n,1}-)}
+ \frac{q(\tau_{n,1})-q(\infty)}{\tau_{n,1}^2}+ \frac{\lambda_{n,2}-q(\infty)}{(q^\leftarrow(\lambda_{n,2}))^2} +\frac{q^\leftarrow(\tau_{n,2})-x_l}{\tau_{n,2}^2} \biggr)
\]
with $q(x-):=\lim_{t\uparrow x} q(t)$.
\end{lemma}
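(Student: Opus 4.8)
The plan is to write $\nu(S)-\nu(S_n^*)=\nu(S\setminus S_n^*)$ --- which is legitimate because $S_n^*\subseteq S$ and $\nu(S)<\infty$, the latter holding since $q(0)>0$ together with the monotonicity and right-continuity of $q$ forces $S$ to be bounded away from the origin --- and then to cover $S\setminus S_n^*$ by four strips, on each of which one coordinate is confined to a short interval while the other stays bounded away from $0$. Because $u_n^*=\lambda_{n,1}\vee q^\leftarrow(\tau_{n,2})$ and $v_n^*=\lambda_{n,2}\vee q(\tau_{n,1})$, setting $A_{1,1}:=S\cap\{x<\lambda_{n,1}\}$, $A_{1,2}:=S\cap\{x<q^\leftarrow(\tau_{n,2})\}$, $A_{2,1}:=S\cap\{y<\lambda_{n,2}\}$ and $A_{2,2}:=S\cap\{y<q(\tau_{n,1})\}$, one has $S\setminus S_n^*=A_{1,1}\cup A_{1,2}\cup A_{2,1}\cup A_{2,2}$, so it suffices to bound $\nu(A_{1,1}),\nu(A_{1,2}),\nu(A_{2,1}),\nu(A_{2,2})$ by the first, fourth, third and second summand, respectively. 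The only analytic input is the pointwise estimate
$$ \eta(x,y)\le C(x^2+y^2)^{-3/2}\le C\,(x^{-3}\wedge y^{-3}) $$
for some finite constant $C$, which follows from \eqref{eq:etanurel} and the boundedness of the continuous function $\varphi$ on the compact interval $[0,\pi/2]$.

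For $A_{1,1}$ and $A_{1,2}$ I would integrate in the vertical direction using the description $S=\{(x,y)\mid y\ge q(x)\}$ from (Q1). On $A_{1,1}$ one has $x\in[x_l,\lambda_{n,1})$, hence $q(x)\ge q(\lambda_{n,1}-)$, so
$$ \nu(A_{1,1})=\int_{x_l}^{\lambda_{n,1}}\int_{q(x)}^\infty\eta(x,y)\,dy\,dx\le\int_{x_l}^{\lambda_{n,1}}\frac{C}{2q^2(x)}\,dx\le\frac{C}{2}\cdot\frac{\lambda_{n,1}-x_l}{q^2(\lambda_{n,1}-)}; $$
on $A_{1,2}$ the definition of $q^\leftarrow$ gives $q(x)>\tau_{n,2}$ whenever $x<q^\leftarrow(\tau_{n,2})$, and the same computation yields $\nu(A_{1,2})\le\frac{C}{2}(q^\leftarrow(\tau_{n,2})-x_l)/\tau_{n,2}^2$. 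For $A_{2,1}$ and $A_{2,2}$ I would integrate in the horizontal direction using the equivalent description $S=\{(x,y)\mid x\ge q^\leftarrow(y)\}$, which follows from (Q1) and the right-continuity of $q$: on $S$ one has $y\ge q(\infty)$, on $A_{2,1}$ in addition $y<\lambda_{n,2}$ and hence $q^\leftarrow(y)\ge q^\leftarrow(\lambda_{n,2})$, and on $A_{2,2}$ in addition $y<q(\tau_{n,1})$, which forces $q^\leftarrow(y)\ge\tau_{n,1}$; integrating $\int_{q^\leftarrow(y)}^\infty\eta(x,y)\,dx\le C/(2(q^\leftarrow(y))^2)$ over these short $y$-intervals produces the third summand $\frac{C}{2}(\lambda_{n,2}-q(\infty))/(q^\leftarrow(\lambda_{n,2}))^2$ and the second summand $\frac{C}{2}(q(\tau_{n,1})-q(\infty))/\tau_{n,1}^2$. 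Adding the four bounds gives the assertion.

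The calculus is entirely routine, so I do not expect a genuine obstacle; what needs a little care is the bookkeeping with $q$, $q^\leftarrow$ and one-sided limits (why $x<q^\leftarrow(\tau_{n,2})$ implies $q(x)>\tau_{n,2}$, why $y<q(\tau_{n,1})$ implies $q^\leftarrow(y)\ge\tau_{n,1}$ on $S$, and the equivalence of the two descriptions of $S$), together with checking that the denominators in the four bounds are eventually strictly positive. The latter is automatic: if $x_l>0$ then $\lambda_{n,1}=x_l$ by \eqref{lambdatauchoice}, so $A_{1,1}=\emptyset$ and the first summand is $0$ (consistently, $q(\lambda_{n,1}-)=q(x_l-)=\infty$), whereas if $x_l=0$ then $\lambda_{n,1}\searrow0$ and $q(\lambda_{n,1}-)\to q(0)>0$ by right-continuity of $q$ at $0$; the analogous dichotomy for $q(\infty)$ versus $\lambda_{n,2}$ disposes of the third summand.
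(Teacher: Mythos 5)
Your proof is correct and follows essentially the same strategy as the paper's: decompose $S\setminus S_n^*$ into strips (you use four directly, the paper peels off the $x<u_n^*$ part first and then the $y<v_n^*$ part, each covered by two rectangles), bound $\eta(x,y)\le C(x^2+y^2)^{-3/2}\le C(x^{-3}\wedge y^{-3})$ via the boundedness of $\varphi$ and \eqref{eq:etanurel}, and integrate. The bookkeeping you flag (why $x<q^\leftarrow(\tau_{n,2})$ forces $q(x)>\tau_{n,2}$, the two equivalent descriptions of $S$, the degenerate cases $\lambda_{n,1}=x_l$ and $\lambda_{n,2}=q(\infty)$) is carried out correctly.
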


\begin{pf}
First, note that $S\subset[x_l,\infty)\times[q(\infty),\infty)$
implies
\begin{eqnarray*}
\nu(S)-\nu \bigl(S\cap\bigl(\bigl[0,u_n^*\bigr)\times[0,\infty)\bigr) \bigr)
&\le& \nu \bigl([x_l,\lambda_{n,1})\times\bigl[q(
\lambda_{n,1}-),\infty\bigr) \bigr)\\
&&{} + \nu \bigl(\bigl[x_l,q^\leftarrow(
\tau_{n,2})\bigr)\times[\tau_{n,2},\infty ) \bigr).
\end{eqnarray*}
The spectral density $\varphi$ is assumed continuous and hence it
is bounded.
From \eqref{eq:etanurel}, we conclude
that for arbitrary $0\le u_0\le u_1$ and $v_0>0$
\[
\nu \bigl([u_0,u_1)\times[v_0,\infty)
\bigr) = \mathrm{O} \biggl( \int_{u_0}^{u_1} \int
_{v_0}^\infty\bigl(u^2+v^2
\bigr)^{-3/2} \,\mathrm{d}v \,\mathrm{d}u \biggr) = \mathrm{O} \biggl(\frac{u_1-u_0}{v_0^2} \biggr)
\]
and thus
\[
\nu(S)-\nu \bigl(S\cap\bigl(\bigl[0,u_n^*\bigr)\times[0,\infty)\bigr) \bigr) =
\mathrm{O} \biggl( \frac{\lambda_{n,1}-x_l}{q^2(\lambda_{n,1}-)} + \frac{q^\leftarrow(\tau_{n,2})-x_l}{\tau_{n,2}^2} \biggr).
\]
Likewise, one can show that
\[
\nu \bigl(S\cap\bigl(\bigl[0,u_n^*\bigr)\times[0,\infty)\bigr) \bigr)-\nu
\bigl(S_n^*\bigr) = \mathrm{O} \biggl( \frac{q(\tau_{n,1})-q(\infty)}{\tau_{n,1}^2}+
\frac{\lambda_{n,2}-q(\infty)}{(q^\leftarrow(\lambda
_{n,2}))^2} \biggr).
\]
A combination of these two bounds yields the assertion.
\end{pf}

On the set $S_n^*$ we can now use the approximation from Lemma~\ref{lemma:marginalapprox} to first examine the influence of the
transformation $H_{n,2}$ of the second coordinate on the
$\nu$-measure of $S_n^*$. In a second step, we then similarly
determine how the $\nu$-measure of this transformed set is altered
by the transformation $H_{n,1}$ of the first coordinate. Hereby, note
that by Lemma~\ref{lemma:marginalapprox} the marginal
transformations are invertible with probability tending to 1.
The sets which are relevant for the comparison after the first marginal transformation are depicted in Figure~\ref{sketch}.

\begin{lemma} \label{lemma:doubleint}
Let $H_n(x,y) := (H_{n,1}(x),H_{n,2}(y)) := \frac{e_n}{d_n}
T_{n}^\leftarrow\circ\hat T_{n} \circ
\hat T_n^{(c)\leftarrow}\circ U(d_nx,d_ny)$. Suppose that the
conditions \textup{(D2)} and \textup{(Q1)} are met.

Then one has with $q_n(u) := q(u)\vee v_n^*$ and $\tilde
q_n^\leftarrow(v):= q^\leftarrow(H_{n,2}^\leftarrow(v))\vee u_n^*$
%
\begin{eqnarray}
\label{eq:diffmeasapprox} &&\biggl| \nu\bigl(H_n\bigl(S_n^*\bigr)\bigr) -
\nu\bigl(S_n^*\bigr) +\int_{u_n^*}^\infty
\bigl(H_{n,2}\bigl(q_n(u)\bigr)-q_n(u) \bigr)
\eta\bigl(u,q_n(u)\bigr)\, \mathrm{d}u
\nonumber
\\
& &\hphantom{\biggl|}  { } + \int_{H_{n,2}(v_n^*)}^\infty \bigl(H_{n,1}
\bigl(\tilde q_n^\leftarrow(v)\bigr)- \tilde q_n^\leftarrow(v)
\bigr) \eta\bigl(\tilde q_n^\leftarrow(v),v\bigr) \,\mathrm{d}v \biggr|
\nonumber
\\[-8pt]\\[-8pt]
& &\quad =  \mathrm{o} \biggl(\int_{u_n^*}^\infty
\bigl|H_{n,2}\bigl(q_n(u)\bigr)-q_n(u) \bigr| \eta
\bigl(u,q_n(u)\bigr)\,\mathrm{d}u
\nonumber
\\
& &\hphantom{\quad =  \mathrm{o} \biggl(}{} + \int_{H_{n,2}(v_n^*)}^\infty \bigl|H_{n,1}\bigl(
\tilde q_n^\leftarrow (v)\bigr)- \tilde q_n^\leftarrow(v)
\bigr| \eta\bigl(\tilde q_n^\leftarrow(v),v\bigr) \,\mathrm{d}v \biggr)
\nonumber
\end{eqnarray}
with probability tending to $1$.
\end{lemma}

\begin{pf}
According to the proof of Lemma~\ref{lemma:marginalapprox}, for all
$\delta\in(0,1)$, on the set
$[\lambda_{n,i}(1-\delta),\tau_{n,i}(1+\delta)]$ the transformation
$H_{n,i}$ is continuous and strictly increasing and
$H_{n,i}(x)=x(1+\mathrm{o}(1))$ with probability tending to 1.
%
\begin{figure}

\includegraphics{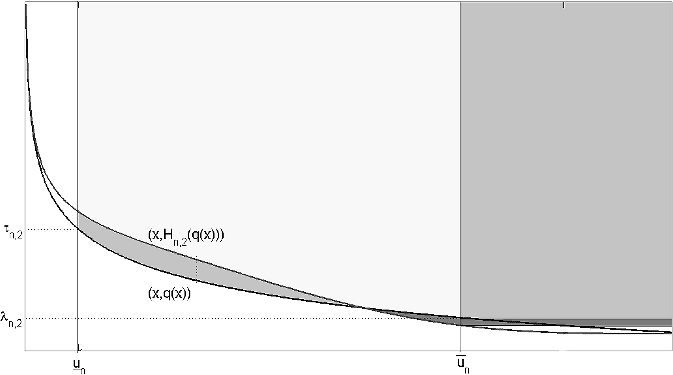}

\caption{The light and mid grey regions show the approximation
$S_n^*$ of the set $S$,
the mid and the dark grey regions the symmetric difference
between $\{(x,H_{n,2}(y))\mid(x,y)\in S_n^*\}$ and $S_n^*$, where the
dark grey region
is counted with a positive sign, the mid grey
region with a negative sign. (Here it is assumed that
$u_n^*=q^\leftarrow(\tau_{n,2})$.)} \label{sketch}
\end{figure}

We first quantify the influence of the
transformation of the second coordinate. Note that
\begin{eqnarray*}
\nu\bigl(S_n^*\bigr) & = & \int_{u_n^*}^\infty
\int_{q_n(u)}^\infty\eta(u,v) \,\mathrm{d}v \,\mathrm{d}u,
\\
\nu\bigl\{ \bigl(x,H_{n,2}(y)\bigr) \mid(x,y)\in S_n^*
\bigr\} & = & \int_{u_n^*}^\infty \int
_{H_{n,2}(q_n(u))}^\infty\eta(u,v) \,\mathrm{d}v \,\mathrm{d}u
\end{eqnarray*}
and hence
%
\begin{equation}
\label{eq:diffbound1} \nu\bigl\{ \bigl(x,H_{n,2}(y)\bigr) \mid(x,y)\in
S_n^*\bigr\}-\nu\bigl(S_n^*\bigr) =- \int
_{u_n^*}^\infty\int_{q_n(u)}^{H_{n,2}(q_n(u))}
\eta(u,v)\,\mathrm{d}v \,\mathrm{d}u.
\end{equation}

The inner integral equals
%
\begin{eqnarray}
\label{eq:intsimple}&& \int_{q_n(u)}^{H_{n,2}(q_n(u))} \eta(u,v) \,\mathrm{d}v
\nonumber
\\
&&\quad  =  \bigl(H_{n,2}\bigl(q_n(u)\bigr)-q_n(u)
\bigr) \eta\bigl(u,q_n(u)\bigr)
\\
& & \qquad {}+ \int_1^{H_{n,2}(q_n(u))/q_n(u)} \biggl(\frac{\eta(u,q_n(u)w)}{\eta(u,q_n(u))}-1
\biggr) \,\mathrm{d}w \eta\bigl(u,q_n(u)\bigr) q_n(u).
\nonumber
\end{eqnarray}
By the assumptions and Lemma~\ref{lemma:marginalapprox},
$H_{n,2}(q_n(u))/q_n(u)\to1$ uniformly for $u\in(u_n^*,\infty)$ as
$q_n(u)\in[\lambda_{n,2},\tau_{n,2}]$ for $u>u_n^*$.

Next note that for all $0<\delta<\uppi/4$
\[
\sup_{0< t\le\tan\delta} \biggl|\frac{\arctan(tw)}{\arctan
t}-1 \biggr| \le\sup
_{0<t\le\tan\delta} \frac{t|w-1|}{\arctan t} \le\bigl(1+\tan^2\delta
\bigr)|w-1|\to0
\]
as $w\to1$, and likewise by symmetry
\[
\sup_{t\ge\tan(\uppi/2-\delta)} \biggl|\frac{\uppi/2-\arctan
(tw)}{\uppi/2-\arctan t}-1 \biggr| =\sup_{0<t\le\tan\delta}
\biggl|\frac{\arctan(t/w)}{\arctan t}-1 \biggr| \to0.
\]
Therefore, by condition \eqref{eq:phivar}, to each $\eps>0$ there
exists $\delta>0$ such that for all $t\in(0,\tan\delta]\cup[\tan
(\uppi/2-\delta),\infty)$
\[
\biggl|\frac{\varphi(\arctan(tw))}{\varphi(\arctan t)}-1 \biggr|<\eps .
\]
Moreover, by the uniform continuity of $\varphi\circ\arctan$ on
$[\tan\delta,\tan(\uppi/2-\delta)]$
\begin{eqnarray*}
&& \sup_{\tan\delta\le t\le\tan(\uppi/2-\delta)}\biggl |\frac
{\varphi(\arctan(tw))}{\varphi(\arctan t)}-1 \biggr|
\\
&&\quad  \le \frac{\sup_{\tan\delta\le t\le\tan(\uppi/2-\delta)} |\varphi
(\arctan(tw))-\varphi(\arctan t)|}{\inf_{\delta\le u\le\uppi
/2-\delta}\varphi(u)} \to0
\end{eqnarray*}
as $w\to1$. Using \eqref{eq:etanurel} and
\[
\frac{1+t^2}{1+t^2w^2} = 1+ \frac{1-w^2}{t^{-2}+w^2} \to1
\]
as $w\to1$ uniformly for $t>0$, we conclude that
\[
\frac{\eta(u,vw)}{\eta(u,v)} = \biggl( \frac{1+(v/u)^2}{1+(v/u)^2w^2} \biggr)^{3/2}
\frac{\varphi
(\arctan
\vfrac{vw}u )}{\varphi (\arctan\sfrac{v}u )} \to1
\]
as $w\to1$ uniformly for $u,v>0$.
Thus,
\[
\int_1^{H_{n,2}(q_n(u))/q_n(u)} \biggl(\frac{\eta(u,q_n(u)w)}{\eta
(u,q_n(u))}-1
\biggr)\, \mathrm{d}w = \mathrm{o} \bigl(H_{n,2}\bigl(q_n(u)
\bigr)/q_n(u)-1 \bigr)
\]
which, combined with \eqref{eq:diffbound1} and \eqref{eq:intsimple},
yields
%
\begin{eqnarray}\label{eq:doubleintapprox1}
&&\nu\bigl\{ \bigl(x,H_{n,2}(y)\bigr) \mid(x,y)\in S_n^*
\bigr\}-\nu\bigl(S_n^*\bigr) +\int_{u_n^*}^\infty
\bigl(H_{n,2}\bigl(q_n(u)\bigr)-q_n(u) \bigr)
\eta\bigl(u,q_n(u)\bigr) \,\mathrm{d}u
\nonumber
\\[-8pt]\\[-8pt]
&&\quad  =  \mathrm{o} \biggl(\int_{u_n^*}^\infty
\bigl|H_{n,2}\bigl(q_n(u)\bigr)-q_n(u) \bigr| \eta
\bigl(u,q_n(u)\bigr) \,\mathrm{d}u \biggr). \nonumber
\end{eqnarray}

One can derive an analogous approximation of the difference between
$\nu\{ (x,H_{n,2}(y)) \mid(x,y)\in S\}$ and $\nu\{
(H_{n,1}(x),H_{n,2}(y)) \mid(x,y)\in S\}$ by similar arguments if
one interchanges the order of integration:
%
\begin{eqnarray}\label{eq:secapprox}
&&\biggl| \nu\bigl(H_n\bigl(S_n^*\bigr)\bigr) - \nu\bigl\{
\bigl(x,H_{n,2}(y)\bigr) \mid(x,y)\in S_n^*\bigr\}
\nonumber
\\
& &\hphantom{\biggl| } {} + \int_{H_{n,2}(v_n^*)}^\infty \bigl(H_{n,1}
\bigl(\tilde q_n^\leftarrow(v)\bigr)- \tilde q_n^\leftarrow(v)
\bigr) \eta\bigl(\tilde q_n^\leftarrow(v),v\bigr) \,\mathrm{d}v \biggr|
\\
&&\quad  =  \mathrm{o} \biggl(\int_{H_{n,2}(v_n^*)}^\infty
\bigl|H_{n,1}\bigl(\tilde q_n^\leftarrow(v)\bigr)- \tilde
q_n^\leftarrow(v) \bigr| \eta\bigl(\tilde q_n^\leftarrow(v),v
\bigr) \,\mathrm{d}v \biggr). \nonumber
\end{eqnarray}
Summing up \eqref{eq:doubleintapprox1} and \eqref{eq:secapprox}, we
arrive at the assertion.
\end{pf}

In the next lemma, we calculate the limits of the integrals arising
in Lemma~\ref{lemma:doubleint} using the approximation established
in Lemma~\ref{lemma:marginalapprox}.

\begin{lemma} \label{lemma:singleintapprox}
Suppose that the conditions of Lemma~\ref{lemma:doubleint} and, in
addition, the following conditions are
fulfilled for some $x_0\in(x_l,q^\leftarrow(q(\infty))), y_0\in
(q(\infty),q(x_l))$:
%
\begin{eqnarray}
\label
{Q2.1}\int_{y_0}^{\infty} \bigl(q^\leftarrow(v)
\bigr)^{1-\gamma_1} v^{-3} \,\mathrm{d}v <\infty\quad \mbox{or}\quad
\lambda_{n,1}^{1-\gamma_1}=\mathrm{o}(\log c_n),
\\
\label{Q2.2}\int_{x_0}^{\infty} \bigl(q(u)\bigr)^{1-\gamma_2}
u^{-3} \,\mathrm{d}u<\infty \quad \mbox{or}\quad  \lambda_{n,2}^{1-\gamma_2}=
\mathrm{o}(\log c_n).
\end{eqnarray}

Then the following approximations hold true:
\begin{enumerate}[(ii)]
\item[(i)]
\begin{eqnarray*}
&&\frac{k^{1/2}}{w_n(\gamma_2)}\int_{u_n^*}^\infty
\bigl(H_{n,2}\bigl(q_n(u)\bigr)-q_n(u) \bigr)
\eta\bigl(u,q_n(u)\bigr)\, \mathrm{d}u
\\
& &\quad \to \lleft\{ %
\begin{array} {l@{\qquad  }l} - \displaystyle \frac{\Gamma_2}{\gamma_2} \displaystyle \int
_{x_l}^\infty q(u)\eta\bigl(u,q(u)\bigr) \,\mathrm{d}u, &
\gamma_2>0,
\\\noalign{\vspace*{-1pt}}
\biggl(\displaystyle \frac{\alpha_2}{\gamma_2}-\beta_2-\displaystyle \frac{\Gamma_2}{\gamma
_2^2} \biggr) \int
_{x_l}^\infty \bigl(q(u)\bigr)^{1-\gamma_2} \eta
\bigl(u,q(u)\bigr) \,\mathrm{d}u, & \gamma_2<0,
\\\noalign{\vspace*{-1pt}}
- \Gamma_2\displaystyle \int_{x_l}^\infty q(u)\eta
\bigl(u,q(u)\bigr) \,\mathrm{d}u, & \gamma_2=0. \end{array} %
\rright.
\end{eqnarray*}
Moreover,
\[
\int_{u_n^*}^\infty \bigl|H_{n,2}
\bigl(q_n(u)\bigr)-q_n(u) \bigr| \eta\bigl(u,q_n(u)
\bigr) \,\mathrm{d}u = \mathrm{O}\bigl(k^{-1/2}w_n(\gamma_2)
\bigr).
\]
\item[(ii)]
\begin{eqnarray*}
&&\frac{k^{1/2}}{w_n(\gamma_1)}\int_{H_{n,2}(v_n^*)}^\infty
\bigl(H_{n,1}\bigl(\tilde q_n^\leftarrow(v)\bigr)-
\tilde q_n^\leftarrow(v) \bigr) \eta\bigl(\tilde
q_n^\leftarrow(v),v\bigr) \,\mathrm{d}v
\\
&&\quad  \to \lleft\{ %
\begin{array} {l@{\qquad  }l} - \displaystyle \frac{\Gamma_1}{\gamma_1} \displaystyle \int
_{q(\infty)}^\infty q^\leftarrow (v)\eta
\bigl(q^\leftarrow(v),v\bigr) \,\mathrm{d}v, & \gamma_1>0,
\\
\biggl(\displaystyle \frac{\alpha_1}{\gamma_1}-\beta_1-\displaystyle \frac{\Gamma_1}{\gamma
_1^2} \biggr) \displaystyle \int
_{q(\infty)}^\infty \bigl(q^\leftarrow(v)
\bigr)^{1-\gamma_1} \eta\bigl(q^\leftarrow(v),v\bigr) \,\mathrm{d}v, & \gamma
_1<0,
\\
- \Gamma_1 \displaystyle \int_{q(\infty)}^\infty
q^\leftarrow(v)\eta \bigl(q^\leftarrow(v),v\bigr)\, \mathrm{d}v, &
\gamma_1=0. \end{array} %
\rright.
\end{eqnarray*}
Furthermore,
\[
\int_{H_{n,2}(v_n^*)}^\infty \bigl|H_{n,1}\bigl(\tilde
q_n^\leftarrow(v)\bigr)- \tilde q_n^\leftarrow(v)
\bigr| \eta\bigl(\tilde q_n^\leftarrow(v),v\bigr) \,\mathrm{d}v = \mathrm{O}
\bigl(k^{-1/2}w_n(\gamma_1)\bigr).
\]
\end{enumerate}
\end{lemma}

\begin{pf}
ad (i): Because the spectral density $\varphi$ is
bounded, there exists a constant $K>0$ such that
%
\begin{equation}
\label{eq:etabound} \eta\bigl(u,q(u)\bigr) \le K \bigl(u^2+\bigl(q(u)
\bigr)^2 \bigr)^{-3/2} \le K \bigl(u^{-3}\wedge
\bigl(q(u)\bigr)^{-3} \bigr)\qquad  \forall u>0.
\end{equation}
Hence, $q_n(u)\eta(u,q_n(u))\le K(q(u))^{-2}$ for $u\in[x_l,x_0]$
and $n$ sufficiently large and $q_n(u)$ $\eta(u,q_n(u))\le K
q(x_0)u^{-3}$ for $u>x_0$. Therefore,
%
\begin{equation}
\label{eq:etaintbound} \lim_{n\to\infty} \int_{u_n^*}^\infty
q_n(u)\eta\bigl(u,q_n(u)\bigr) \,\mathrm{d}u =\int
_{x_l}^\infty q(u)\eta\bigl(u,q(u)\bigr) \,\mathrm{d}u<\infty
\end{equation}
by the dominated convergence
theorem and $u_n^*\downarrow x_l$.

Now, we distinguish three cases.

If $ \gamma_2>0$, then by Lemma~\ref{lemma:marginalapprox} and $d_n\asymp e_n$
\begin{eqnarray*}
&&\hspace*{-5pt}\frac{k^{1/2}}{\log c_n}\int_{u_n^*}^\infty
\bigl(H_{n,2}\bigl(q_n(u)\bigr)-q_n(u) \bigr)
\eta\bigl(u,q_n(u)\bigr) \,\mathrm{d}u
\\
&&\hspace*{-5pt}\quad  =  - \biggl(\frac{\Gamma_2}{\gamma_2}+\mathrm{o}_P(1) \biggr) \int
_{u_n^*}^\infty q_n(u)\eta
\bigl(u,q_n(u)\bigr) \,\mathrm{d}u
 + \mathrm{O}_P \biggl(\frac{1}{\log c_n} \int
_{u_n^*}^\infty \bigl(q_n(u)
\bigr)^{1-\gamma_2} \eta\bigl(u,q_n(u)\bigr) \,\mathrm{d}u \biggr).
\end{eqnarray*}
Because of \eqref{eq:etabound} and \eqref{Q2.2}
%
\begin{eqnarray}\label{eq:etaintbound2}
&&\int_{u_n^*}^\infty\bigl(q_n(u)
\bigr)^{1-\gamma_2}\eta\bigl(u,q_n(u)\bigr) \,\mathrm{d}u
\nonumber
\\
& &\quad \le K \bigl(q(x_0)\bigr)^{-2-\gamma_2}(x_0-x_l)
+ K\int_{x_0}^\infty\bigl(q(u)\vee
\lambda_{n,2}\bigr)^{1-\gamma_2}u^{-3} \,\mathrm{d}u
\\
&&\quad  =  \mathrm{o}(\log c_n).\nonumber
\end{eqnarray}
Hence, in view of \eqref{eq:etaintbound}, we have
\begin{eqnarray*}
&&\int_{u_n^*}^\infty \bigl(H_{n,2}
\bigl(q_n(u)\bigr)-q_n(u) \bigr)\eta\bigl(u,q_n(u)
\bigr) \,\mathrm{d}u
\\
&&\quad  =  -k^{-1/2} \log c_n \frac{\Gamma_2}{\gamma_2} \int
_{x_l}^\infty q(u)\eta\bigl(u,q(u)\bigr) \,\mathrm{d}u +
\mathrm{o}_P\bigl(k^{-1/2}\log c_n\bigr).
\end{eqnarray*}

If $ \gamma_2<0$, then the assertion follows similarly
from Lemma~\ref{lemma:marginalapprox} and \eqref{eq:etabound}.

Finally, in the case $ \gamma_2=0$
%
\begin{eqnarray}\label{eq:etaintbound3}
&&\int_{x_l}^\infty q_n(u)\bigl|\log
q_n(u)\bigr|\eta\bigl(u,q_n(u)\bigr) \,\mathrm{d}u
\nonumber
\\[-8pt]\\[-8pt]
& &\quad \le K \sup_{x\le x_0} \frac{|\log q(x)|}{(q(x))^2} + K \sup
_{x\ge x_0} q(x)\bigl|\log q(x)\bigr|\int_{x_0}^\infty
u^{-3} \,\mathrm{d}u <\infty. \nonumber
\end{eqnarray}
Hence, similarly as in the first case, we may conclude the assertion
from Lemma~\ref{lemma:marginalapprox}.

ad (ii): The second assertion can be proved in a very similar
fashion using $q(H_{n,2}(v_n^*))\to q(\infty)$ and the fact that
$\tilde q_n^\leftarrow(u)\to q^\leftarrow(u)$ for Lebesgue-almost
all $u>q(\infty)$, because of Lemma~\ref{lemma:marginalapprox} and
the Lebesgue-almost surely continuity of $q^\leftarrow$. For that
reason, we only give the analog to the bound
\eqref{eq:etaintbound2} for the integral
under consideration in the case $\gamma_1>0$.

For $y_0\in(q(\infty),q(x_l))$ and all sufficiently large $n$, we
have
\[
\int_{H_{n,2}(v_n^*)}^{y_0} \bigl(\tilde q_n^\leftarrow(v)
\bigr)^{1-\gamma_1} \eta \bigl(\tilde q_n^\leftarrow(v),v \bigr)
\,\mathrm{d}v \le K\bigl(\tilde q_n^\leftarrow(y_0)
\bigr)^{-2-\gamma_1}\bigl(y_0-q(\infty)\bigr)=\mathrm{O}(1).
\]
If $\gamma_1\le1$, then
\[
\int_{y_0}^\infty \bigl(\tilde q_n^\leftarrow(v)
\bigr)^{1-\gamma_1} \eta \bigl(\tilde q_n^\leftarrow(v),v \bigr)
\,\mathrm{d}v \le K \bigl(\tilde q_n^\leftarrow(y_0)
\bigr)^{1-\gamma_1}\int_{y_0}^\infty
v^{-3} \,\mathrm{d}v = \mathrm{O}(1).
\]
Finally, if $\gamma_1>1$, then by the monotonicity of $q^\leftarrow$
and the asymptotic behavior of $H_{n,2}$ we have for all $\delta>0$
and sufficiently large $n$
\begin{eqnarray*}
&&\int_{y_0}^\infty \bigl(\tilde q_n^\leftarrow(v)
\bigr)^{1-\gamma_1} \eta \bigl(\tilde q_n^\leftarrow(v),v \bigr)
\,\mathrm{d}v
\\
&&\quad  \le K \int_{y_0}^\infty \bigl( \bigl(
q^\leftarrow\bigl(v(1+\delta)\bigr)\bigr)^{1-\gamma_1} \wedge
\lambda_{n,1}^{1-\gamma_1} \bigr) v^{-3} \,\mathrm{d}v
\\
& &\quad =  \mathrm{O} \biggl(\int_{y_0(1+\delta)}^\infty
\bigl(q^\leftarrow(v)\bigr)^{1-\gamma
_1}v^{-3} \,\mathrm{d}v \wedge
\lambda_{n,1}^{1-\gamma_1} \biggr) = \mathrm{o}(\log c_n)
\end{eqnarray*}
by condition \eqref{Q2.1}.
\end{pf}

The following result gives sufficient conditions such that the
difference between the $\nu$-measure of $S$ and of the truncated set
after the marginal transformations (i.e., $d_n(\mathit{IV}+V)$ in
\eqref{eq:esterror}) can be approximated by the limiting terms in
Lemma~\ref{lemma:singleintapprox}. For the sake of simplicity, we
assume that $d_n$ and $e_n$ are of the same order, but it is not
difficult to prove similar results under weaker conditions on
$d_n/e_n$. Moreover, one can weaken the condition (S2) and the
assumptions (Q2) could be replaced with rather strong conditions on
the rate at which $k$ tends to $\infty$.

%
\begin{corollary} \label{cor:termIVVapprox}
If the conditions \textup{(M1)--(M3), (D2), (Q1), (Q2)} and \textup{(S1)--(S3)} are
fulfilled,
then
%
\begin{eqnarray}\label{eq:nuHnapprox}
&&\hspace*{-25pt}\nu\bigl(H_n\bigl(S_n^*\bigr)\bigr) - \nu(S)
\nonumber
\\
&&\hspace*{-25pt}\quad  =  k^{-1/2} w_n(\gamma_1)\lleft\{
\begin{array} {l@{\qquad  }l} -\displaystyle  \frac{\Gamma_1}{\gamma_1} \displaystyle \int_{q(\infty)}^\infty
q^\leftarrow (v)\eta\bigl(q^\leftarrow(v),v\bigr) \,\mathrm{d}v, &
\gamma_1>0,
\\
\biggl(\displaystyle \frac{\alpha_1}{\gamma_1}-\beta_1-\displaystyle \frac{\Gamma_1}{\gamma
_1^2} \biggr) \displaystyle \int
_{q(\infty)}^\infty \bigl(q^\leftarrow(v)
\bigr)^{1-\gamma_1} \eta\bigl(q^\leftarrow(v),v\bigr) \,\mathrm{d}v, & \gamma
_1<0,
\\
- \Gamma_1 \displaystyle \int_{q(\infty)}^\infty
q^\leftarrow(v)\eta \bigl(q^\leftarrow(v),v\bigr) \,\mathrm{d}v, &
\gamma_1=0, \end{array} %
\rright.\nonumber
\\[-8pt]\\[-8pt]
& &\hspace*{-25pt}\qquad  { } +k^{-1/2} w_n(\gamma_2) \lleft\{
\begin{array} {l@{\qquad  }l} - \displaystyle \frac{\Gamma_2}{\gamma_2} \displaystyle \int_{x_l}^\infty
q(u)\eta\bigl(u,q(u)\bigr) \,\mathrm{d}u, & \gamma_2>0 ,
\\
\biggl(\displaystyle \frac{\alpha_2}{\gamma_2}-\beta_2-\displaystyle \frac{\Gamma_2}{\gamma
_2^2} \biggr) \displaystyle \int
_{x_l}^\infty \bigl(q(u)\bigr)^{1-\gamma_2} \eta
\bigl(u,q(u)\bigr) \,\mathrm{d}u, & \gamma_2<0,
\\
- \Gamma_2\displaystyle \int_{x_l}^\infty q(u)\eta
\bigl(u,q(u)\bigr) \,\mathrm{d}u, & \gamma_2=0, \end{array} %
\rright.
\nonumber
\\
& &\hspace*{-25pt}\qquad  { } + \mathrm{o}_P \bigl(k^{-1/2}\bigl(w_n(
\gamma_1)\vee w_n(\gamma_2)\bigr) \bigr).\nonumber
\end{eqnarray}
\end{corollary}

\begin{pf}
In view of the Lemmas \ref{lemma:Snstarapprox}--\ref{lemma:singleintapprox}, it suffices to define sequences
$\lambda_{n,i}$ and $\tau_{n,i}$, $i\in\{1,2\}$, such that the
conditions (i)--(iv) of Lemma~\ref{lemma:marginalapprox} and
\eqref{Q2.1} and \eqref{Q2.2} are fulfilled and
\begin{eqnarray*}
\frac{\lambda_{n,1}-x_l}{q^2(\lambda_{n,1}-)} + \frac{q(\tau_{n,1})-q(\infty)}{\tau_{n,1}^2} & =&\mathrm{o} \bigl(k^{-1/2}
w_n(\gamma_1) \bigr),
\\
\frac{\lambda_{n,2}-q(\infty)}{(q^\leftarrow(\lambda_{n,2}))^2} +\frac{q^\leftarrow(\tau_{n,2})-x_l}{\tau_{n,2}^2}& =&\mathrm{o} \bigl(k^{-1/2}
w_n(\gamma_2) \bigr).
\end{eqnarray*}
Note that we can check these conditions for $i=1$ and $i=2$
separately. We focus on the sequences $\lambda_{n,1}$ and
$\tau_{n,1}$, since the case $i=2$ can be treated analogously if
$x_l$ is replaced with $q(\infty)$ and $q$ with $q^\leftarrow$.
Again we distinguish three cases depending on the sign of
$\gamma_1$.

If $\gamma_1>0$, then $\tau_{n,1}$ must only satisfy
$(q(\tau_{n,1})-q(\infty))/ \tau_{n,1}^2 = \mathrm{o}(k^{-1/2}\log c_n)$,
which can easily be fulfilled by letting $\tau_{n,1}$ tend to
$\infty$ sufficiently fast.

The sequence $\lambda_{n,1}$ has to satisfy the conditions (i) and
(ii) of Lemma~\ref{lemma:marginalapprox}, \eqref{Q2.1} and
$(\lambda_{n,1}-x_l)/q^2(\lambda_{n,1})=\mathrm{o}(k^{-1/2}\log c_n)$. If
$x_l>0$, then $\lambda_{n,1}=x_l$ does the job, because condition
(i) of Lemma~\ref{lemma:marginalapprox} is implied by (S2).

If $x_l=0$ and $\gamma_1\le1$, then the integrability condition of
\eqref{Q2.1} is trivial. Moreover, $\lambda_{n,1}:= k^{-1/2}(\log
c_n)^{1/2}\to0$ obviously fulfills Lemma \ref{lemma:marginalapprox}(ii)
and
$(\lambda_{n,1}-x_l)/q^2(\lambda_{n,1})=\mathrm{O}(\lambda
_{n,1})=\mathrm{o}(k^{-1/2}\log
c_n)$. Condition \ref{lemma:marginalapprox}(i) follows from (S2)
and (S3), which implies $c_n\lambda_{n,1}\to\infty$.

Finally, if $x_l=0$ and $\gamma_1>1$, then $\lambda_{n,1}:=
(k^{-1/2}\log c_n)^{1/\gamma_1}$ fulfills Lemma \ref{lemma:marginalapprox}(ii),
Lem\-ma~\ref{lemma:marginalapprox}(i) follows from (S2) and (S3) as
above, and (Q2) implies
\[
\frac{\lambda_{n,1}-x_l}{q^2(\lambda_{n,1})} =\mathrm{O} \biggl(\frac{\lambda_{n,1}^{\gamma_1}}{|\log\lambda_{n,1}|^2} \biggr) =\mathrm{O}
\biggl(k^{-1/2}\frac{\log c_n}{|\log(k^{-1/2} \log c_n)|^2} \biggr) = \mathrm{o}
\bigl(k^{-1/2}\log c_n\bigr)
\]
by (S1). Furthermore, the integrability condition of \eqref{Q2.1} is
fulfilled, because (Q2) implies $(v/\log
v)^{2/(1-\gamma_1)}=\mathrm{O}(q^\leftarrow(v))$ as $v\to\infty$.

Next, we consider the case $-1/2<\gamma_1<0$, when the
integrability condition of \eqref{Q2.1} is trivial. If $x_l>0$, then
we can argue as above that $\lambda_{n,1}=x_l$ satisfies all
conditions on $\lambda_{n,1}$. If $x_l=0$, then define
$\lambda_{n,1}=c_n^{-1}\varphi_n$ for some $\varphi_n\to\infty$
sufficiently slowly, so that Lemma \ref{lemma:marginalapprox}(i) follows
from (S2). Further
$(\lambda_{n,1}-x_l)/q^2(\lambda_{n,1})=\mathrm{O}(c_n^{-1}\varphi_n)=\mathrm{o}(k^{-1/2}
c_n^{-\gamma_1})$ follows from assumption (S3).

The conditions on $\tau_{n,1}$ read as
$(q(\tau_{n,1})-q(\infty))/\tau_{n,1}^2=\mathrm{o}(k^{-1/2}c_n^{-\gamma_1})$
and $k^{-1/2}=\linebreak[4] \mathrm{o}((c_n\tau_{n,1})^{\gamma_1})$ in this case, which are
fulfilled by $\tau_{n,1}=k^{1/2} c_n^{\gamma_1}\to\infty$.

In the case $\gamma_1=0$ the integrability condition of
\eqref{Q2.1} is again trivial and $\lambda_{n,1}=x_l$ if $x_l>0$,
and $\lambda_{n,1}=c_n^{-1}\log c_n$ if $x_l=0$ does the job.
Moreover, it is easily checked that $\tau_{n,1}=k^{1/4}$ satisfies
$(q(\tau_{n,1})-q(\infty))/\tau_{n,1}^2=\mathrm{o}(k^{-1/2}\log^2 c_n)$ and
condition of Lemma \ref{lemma:marginalapprox}(iv).
\end{pf}

Observe that we have verified stronger conditions on $\lambda_{n,1}$
and $\tau_{n,1}$ than actually necessary, if
$w_n(\gamma_1)=\mathrm{o}(w_n(\gamma_2))$. A refined analysis would lead to
weaker, but more complex conditions on $q$ and $k$ that depend on
both the values of $\gamma_1$ and $\gamma_2$ at the same time. (Also
the proof would become more lengthy as one had to consider 9 cases
arising from different combinations of signs of $\gamma_1$ and
$\gamma_2$.) Moreover, note that for the above choice of
$\lambda_{n,i}$ one has
%
\begin{equation}
\label{eq:cnlambdandiv} c_n \lambda_{n,i}\to\infty,\qquad  i\in\{1,2\}
\end{equation}
and
%
\begin{equation}
\label{eq:lambdanbound} \lambda_{n,i}^{-\gamma_i}=\mathrm{O}
\bigl(k^{1/2}/\log c_n\bigr)\qquad  \mbox{if }
\gamma_i>0, i\in\{1,2\}.
\end{equation}

Now we use classical empirical process theory to establish a uniform
bound on $\nu_n(B) - E\nu_n(B)$ and thus on term $\mathit{II}$ in
decomposition \eqref{eq:esterror}.

\begin{lemma} \label{lemma:empprocess}
Under the conditions of Theorem~\ref{theo:main}, one has
\[
\nu_n(B) - E\nu_n(B)|_{B=(d_n/e_n)H_n(S_n^*)} =
\mathrm{o}_P \bigl(k^{-1/2}\bigl(w_n(
\gamma_1)\vee w_n(\gamma_2)\bigr) \bigr).
\]
\end{lemma}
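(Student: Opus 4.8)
The plan is to realise this as a uniform bound for a tail empirical process evaluated at a data-dependent point. Writing $Z_{n,i}:=T_n^\leftarrow(X_i,Y_i)$, letting $\mathbb P_n$ be the empirical measure of $Z_{n,1},\dots,Z_{n,n}$ and $P_n$ their common law, one has $\nu_n(A)=(n/k)\mathbb P_n(A)$, $E\nu_n(A)=(n/k)P_n(A)$, hence $k^{1/2}\big(\nu_n(A)-E\nu_n(A)\big)=(n/k)^{1/2}\mathbb G_n(A)$ with $\mathbb G_n:=n^{1/2}(\mathbb P_n-P_n)$. Since $w_n(\gamma_1)\vee w_n(\gamma_2)\to\infty$, it suffices to show this is $O_P(1)$ at $B=(d_n/e_n)H_n(S_n^*)$. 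The structural point is that, by \eqref{eq:Hndef}, $H_n$ — and hence $B$ — is a deterministic function of the six statistics collected in $\vartheta_n:=k^{1/2}\big(\hat a_i(n/k)/a_i(n/k)-1,\,(\hat b_i(n/k)-b_i(n/k))/a_i(n/k),\,\hat\gamma_i-\gamma_i\big)_{i\in\{1,2\}}$. Writing $B=B_{\vartheta_n}$, where $B_\vartheta:=(d_n/e_n)H_{n,\vartheta}(S_n^*)$ is obtained from \eqref{eq:Hndef} by substituting $\hat a_i=a_i(n/k)(1+k^{-1/2}\vartheta_{i,1})$ and so on, the random set ranges over a fixed six-parameter family; and $\vartheta_n=O_P(1)$ by (M3). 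So I would fix $\eps>0$, choose $M$ with $\limsup_nP(\vartheta_n\notin\Theta_M)<\eps$ for $\Theta_M:=[-M,M]^6$, and reduce the assertion to showing, for each fixed $M$, that $\sup_{\vartheta\in\Theta_M}k^{1/2}\big|\nu_n(B_\vartheta)-E\nu_n(B_\vartheta)\big|=o_P(w_n(\gamma_1)\vee w_n(\gamma_2))$; letting $M\to\infty$ then gives the claim.

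Fix $M$ and let $B_0$ be the member of the family at $\vartheta=0$. Matching \eqref{eq_tildeHnidef}, each $B_\vartheta$ with $\vartheta\in\Theta_M$ belongs to a class $\mathcal B_{n/k,M'}$ of the kind occurring in (D1), with $|\th_i|,|\chi_i|,|\xi_i|$ bounded in terms of $M$; in particular (D1) applies to $B_0$. For $\vartheta\in\Theta_M$ I would decompose
\[
k^{1/2}\big(\nu_n(B_\vartheta)-E\nu_n(B_\vartheta)\big)=k^{1/2}\big(\nu_n(B_0)-E\nu_n(B_0)\big)+(n/k)^{1/2}\mathbb G_n\big(\mathbf 1_{B_\vartheta}-\mathbf 1_{B_0}\big).
\]
The first summand does not depend on $\vartheta$; since $E\nu_n(B_0)=(n/k)P_n(B_0)\to\nu(B_0)=O(1)$ (by (D1) and $\nu(S)<\infty$) and $\mathrm{Var}\big(\nu_n(B_0)\big)\le(n/k^2)P_n(B_0)=k^{-1}E\nu_n(B_0)=O(1/k)$, Chebyshev's inequality gives $k^{1/2}\big(\nu_n(B_0)-E\nu_n(B_0)\big)=O_P(1)=o_P(w_n(\gamma_1)\vee w_n(\gamma_2))$. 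It then remains to bound $(n/k)^{1/2}\|\mathbb G_n\|_{\mathcal G_n^M}$ with $\mathcal G_n^M:=\{\mathbf 1_{B_\vartheta}-\mathbf 1_{B_0}:\vartheta\in\Theta_M\}$, which I would do by a maximal inequality.

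Two inputs are needed. First, $\mathcal G_n^M$ should be a VC-type class with index bounded uniformly in $n$: by \eqref{eq:tildeHrep} each $H_{n,\vartheta,i}$ is an explicit, continuous, strictly increasing function of its argument depending monotonically on each coordinate of $\vartheta$, so every $B_\vartheta$ has the form $\{(x,y):x\ge a_{n,\vartheta},\,y\ge g_{n,\vartheta}(x)\}$ with $g_{n,\vartheta}$ decreasing, and a compact, six-dimensionally parametrised family of such monotone-graph sets defined by an explicit formula of $n$-independent structural complexity is VC-subgraph with uniformly bounded index. Second, an envelope: every member of $\mathcal G_n^M$ is supported on the deterministic tube $T_n^M:=\bigcup_{\vartheta\in\Theta_M}(B_\vartheta\triangle B_0)$, a neighbourhood of $\partial B_0$ whose width is controlled, coordinate by coordinate, by $\sup_{\vartheta\in\Theta_M}|H_{n,\vartheta,i}(\,\cdot\,)/\mathrm{id}-1|$; the $\vartheta$-uniform versions of the integral estimates in Lemma \ref{lemma:singleintapprox} (together with Corollary \ref{cor:termIVVapprox}) give $\nu(T_n^M)=O\big(k^{-1/2}(w_n(\gamma_1)\vee w_n(\gamma_2))\big)$, and then (D1) and (S2), applied as for the bias term $III$, yield $P_n(T_n^M)=(k/n)E\nu_n(T_n^M)=O\big((k/n)\,k^{-1/2}(w_n(\gamma_1)\vee w_n(\gamma_2))\big)$. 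Hence $\mathbf 1_{T_n^M}$ is an envelope of $\mathcal G_n^M$ with $\|\mathbf 1_{T_n^M}\|_{L^2(P_n)}^2=P_n(T_n^M)$, and the standard maximal inequality for VC-type classes gives $E\|\mathbb G_n\|_{\mathcal G_n^M}\le C_M\|\mathbf 1_{T_n^M}\|_{L^2(P_n)}=O\big((k^{1/2}(w_n(\gamma_1)\vee w_n(\gamma_2))/n)^{1/2}\big)$, so that
\[
(n/k)^{1/2}E\|\mathbb G_n\|_{\mathcal G_n^M}=O\Big(\big(k^{-1/2}(w_n(\gamma_1)\vee w_n(\gamma_2))\big)^{1/2}\Big)=o\big(w_n(\gamma_1)\vee w_n(\gamma_2)\big),
\]
because $w_n(\gamma_i)=o(k^{1/2})$ by (S1). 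Markov's inequality upgrades this to $o_P$, combining the two summands gives the displayed supremum bound, and letting $M\to\infty$ concludes.

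The step I expect to be the main obstacle is the envelope estimate $\nu(T_n^M)=O\big(k^{-1/2}(w_n(\gamma_1)\vee w_n(\gamma_2))\big)$ in conjunction with the VC property of $\mathcal G_n^M$. A crude count of the transformed observations falling into the tube $T_n^M$ only produces $O_P\big(k^{-1/2}(w_n(\gamma_1)\vee w_n(\gamma_2))\big)$ for the whole second summand — precisely the critical order, that of the leading terms of Theorem \ref{theo:main} — so the extra vanishing factor $\big(k^{-1/2}(w_n(\gamma_1)\vee w_n(\gamma_2))\big)^{1/2}$ has to come from chaining, which replaces $\nu(T_n^M)$ by its square root; this is why one genuinely needs that the random set $B$ lies in an $n$-uniform VC family rather than in the (too large) class of all monotone-graph sets in a thin tube. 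Making Lemma \ref{lemma:marginalapprox} and the bounds of Lemma \ref{lemma:singleintapprox} uniform over the compact set $\Theta_M$ with $O(\cdot)$ in place of $O_P(\cdot)$, and verifying carefully the $n$-uniform VC bound from the explicit form of $H_{n,\vartheta}$, are routine but account for the bulk of the remaining work.
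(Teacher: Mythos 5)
Your overall plan — reduce to a supremum over a compact six--parameter family via (M3), pull out the value at $\vartheta=0$, and control the remainder by a maximal inequality over a class of small indicator functions with an envelope estimated via (D1), Lemma \ref{lemma:marginalapprox} and Lemma \ref{lemma:singleintapprox} made uniform in $\vartheta$ — is a legitimate alternative to the paper's argument, and the arithmetic you carry out (turning $\nu(T_n^M)=O(k^{-1/2}(w_n(\gamma_1)\vee w_n(\gamma_2)))$ into $(n/k)^{1/2}E\|\mathbb G_n\|_{\mathcal G_n^M}=O((k^{-1/2}(w_n(\gamma_1)\vee w_n(\gamma_2)))^{1/2})=o(w_n(\gamma_1)\vee w_n(\gamma_2))$) is correct. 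The envelope computation also correctly uses (S2) to absorb the $A_0(n/k)$ error, so the bias from (D1) does not impede you at the envelope level the way the paper notes it would impede fine bracketing. However, this is \emph{not} the route the paper takes.

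The genuine gap is the uniform VC claim, and you understate how much is riding on it. You justify it by saying that ``a compact, six-dimensionally parametrised family of such monotone-graph sets defined by an explicit formula of $n$-independent structural complexity is VC-subgraph with uniformly bounded index.'' This is not a theorem, and I do not see how to make it one under the paper's assumptions. The sets $B_\vartheta$ are images of $S_n^*$ under coordinatewise maps that compose an explicit parametric part with the \emph{arbitrary} monotone function $q$ (through $S$) and the \emph{arbitrary} regularly-varying functions $U_i$. ``Explicit $n$-independent structure'' only buys you polynomial-degree or Pfaffian-type VC bounds if the building blocks themselves are of bounded algebraic complexity, which $q$ and $U_i$ are not assumed to be. Monotonicity of $\vartheta\mapsto B_\vartheta$ in each of the six parameters --- the property that the paper does use --- also does not yield a VC bound: already in dimension two, a monotone family $\{A_\vartheta\}_{\vartheta\in\mathbb R^2}$ whose sections $\Theta(x)=\{\vartheta:x\in A_\vartheta\}$ are upper sets with, say, interleaving staircase boundaries can shatter arbitrarily large point sets, so a $d$-parameter monotone family can have infinite VC index. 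To rescue your proof one would have to exploit much more structure of $\tilde H^{(n,i)}_{\th,\chi,\xi}$ and $S$ than is available; that is not ``routine''.

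The paper avoids the entropy/VC question entirely. It reduces to the same compact parameter set $[-M,M]^6$, covers it by finitely many $\delta$-cubes, and uses the monotonicity of the sets $E^{(n)}_\theta$ in $\theta$ to bound the oscillation of $Z_n$ within each cube by $\nu_n$ and $E\nu_n$ of the corresponding bracket $E^{(n)}_{(l+1)\delta}\setminus E^{(n)}_{l\delta}$. The uniform analogue of Lemma \ref{lemma:singleintapprox} and (D1), (S2) give $E\nu_n$ of each bracket $\le(\eta/12)k^{-1/2}(w_n(\gamma_1)\vee w_n(\gamma_2))$ for small $\delta$; Chebyshev applied to the Binomial $k\nu_n$ then gives a per-bracket failure probability of order $1/(k^{1/2}w_n)\to 0$, and since the number of cubes depends on $M$ and $\delta$ but not on $n$, the union bound is free. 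Together with pointwise convergence of $Z_n(\theta)$ and van der Vaart--Wellner Theorem 1.5.7, this yields the result. No chaining, no entropy integral, no square-root improvement is required --- because here the centered $\nu_n(B)-E\nu_n(B)$ is already $O_P(k^{-1/2})=o_P(k^{-1/2}w_n)$ for each fixed $B$, the extra factor $w_n\to\infty$ does all the work. (The paper's remark after the lemma explicitly contrasts this with a ``more elegant'' bracketing argument and explains why the latter is blocked by the $O(A_0(n/k))$ error in (D1).) If you want to keep your decomposition, replace the VC/maximal-inequality step by exactly this finite-cover-plus-Chebyshev device; as written, the VC step is unsupported and is where your proof fails.
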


\begin{pf}
Note that by \eqref{eq:tildeHrep} one has
\[
\frac{d_n}{e_n} H_n(x_1,x_2) = \bigl(
\tilde H_{\th_1,\chi_1,\xi_1}^{(n,1)}(x_1),\tilde
H_{\th_2,\chi_2,\xi_2}^{(n,2)}(x_2) \bigr)
\]
for $(x_1,x_2)\in[u_n^*,\infty)\times[v_n^*,\infty)$ with
$ \tilde H_{\th_i,\chi_i,\xi_i}^{(n,i)}$ defined by
\eqref{eq_tildeHnidef} and
\[
-\th_i=\chi_i=k^{1/2}(\hat
\gamma_i-\gamma_i), \qquad \xi_i =
k^{1/2} \biggl(\frac{\hat a_i(n/k)}{a_i(n/k)} -1 - \frac{\hat
b_i(n/k)-b_i(n/k)}{a_i(n/k)}\hat
\gamma_i \biggr).
\]
Since, according to condition (M3), these random variables are
stochastically bounded, it suffices to prove that for all $M>0$
\[
\sup_{\max(|\th_i|,|\chi_i|,|\xi_i|)\le M} \bigl|\nu_n \bigl(E^{(n)}_{(\th_i,\chi_i,\xi_i)_{i=1,2}}
\bigr)-E \nu_n \bigl(E^{(n)}_{(\th_i,\chi_i,\xi_i)_{i=1,2}} \bigr) \bigr| =
\mathrm{o}_P \bigl(k^{-1/2}\bigl(w_n(
\gamma_1)\vee w_n(\gamma_2)\bigr) \bigr),
\]
where
\[
E^{(n)}_{(\th_i,\chi_i,\xi_i)_{i=1,2}} := \bigl\{ \bigl(\tilde H_{\th_1,\chi_1,\xi_1}^{(n,1)}(x_1),
\tilde H_{\th_2,\chi_2,\xi_2}^{(n,2)}(x_2) \bigr)
\mid(x_1,x_2)\in S_n^* \bigr\}.
\]
Letting $\theta:=(\th_i,\chi_i,\xi_i)_{i=1,2}$ and
\[
Z_n(\theta) := \frac{k^{1/2}}{w_n(\gamma_1)\vee w_n(\gamma_2)} \bigl(\nu_n
\bigl(E_\theta ^{(n)}\bigr)-E\nu_n
\bigl(E_\theta^{(n)}\bigr) \bigr), \qquad \theta\in[-M,M]^6,
\]
we have to prove that $Z_n$ tends to 0 in probability uniformly. To
this end, we
establish asymptotic equicontinuity of $Z_n$, that is,
%
\begin{equation}
\label{eq:equicont} \lim_{\delta\downarrow0} \limsup_{n\to\infty} P
\Bigl\{ \sup_{\theta,\psi\in[-M,M]^6,
\|\theta-\psi\|_\infty\le\delta}\bigl|Z_n(\theta)-Z_n(
\psi)\bigr|>\eta \Bigr\}=0\qquad  \forall\eta>0
\end{equation}
and convergence in probability of $Z_n(\theta)$ for all $\theta\in
[-M,M]^6$ (see van der Vaart and Wellner \cite{vw00}, Theorem~1.5.7).

For the proof of asymptotic equicontinuity, it is crucial that the
functions $\tilde
H_{\th_i,\chi_i,\xi_i}^{(n,i)}(x_i)$\vspace*{2pt} are decreasing in all three
parameters for all $(x_1,x_2)\in
[u_n^*,\infty)\times[v_n^*,\infty)$. For $\xi_i$ resp. $\th_i$
this monotonicity is an
immediate consequence of the facts that $(c_n^{-\gamma}-1)/\gamma$
is negative and increasing\nonumber1.5 in $\gamma$ (for $c_n>1$) and that
$(1+\gamma t)^{1/\gamma}$ is increasing in $t$. Because
$c_n^{-\gamma}$ is a decreasing function of $\gamma$, the
monotonicity in $\chi_i$ follows from \eqref{eq:secordunif2},
\eqref{eq:cnlambdandiv} and condition (i) of Lemma~\ref
{lemma:marginalapprox}, which imply
\begin{eqnarray*}
\frac{U_i(d_nx)-b_i(n/k)}{a_i(n/k)} & = & \frac{(x_i
d_nk/n)^{\gamma_i}-1}{\gamma_i} + \mathrm{O} \bigl(A_i(n/k)
(x_i d_nk/n)^{\gamma_i+\rho_i+\eps} \bigr)
\\
& = & \frac{(x_i c_n d_n/ e_n)^{\gamma_i}-1}{\gamma_i} + \mathrm{o} \bigl((c_nx_i)^{\gamma_i}k^{-1/2}w_n(
\gamma_i) \bigr)
\\
&>& 0
\end{eqnarray*}
for sufficiently large $n$.

The monotonicity of $H^{(n,i)}_{\cdot,\cdot,\cdot}(x_i)$ implies
that the sets $ E^{(n)}_{(\th_i,\chi_i,\xi_i)_{i=1,2}}$ are
increasing in all parameters. Hence, for arbitrary $\theta,\psi\in
[-M,M]^6$
\[
\bigl|Z_n(\theta)-Z_n(\psi)\bigr|\le\frac{k^{1/2}}{w_n(\gamma_1)\vee
w_n(\gamma_2)} \bigl(
\nu_n\bigl(E_{\theta\vee\psi}^{(n)}\setminus
E_{\theta\wedge\psi}^{(n)}\bigr) +E\nu_n\bigl(E_{\theta\vee\psi}^{(n)}
\setminus E_{\theta\wedge\psi
}^{(n)}\bigr) \bigr),
\]
where $\theta\vee\psi$ resp. $\theta\wedge\psi$ denote the
coordinatewise maximum resp. minimum of $\theta$ and $\psi$.

To establish asymptotic equicontinuity of $Z_n$, we cover the parameter
space $[-M,M]^6$ with hypercubes $I_l:=\bigtimes_{i=1}^6
[l_i\delta,(l_i+1)\delta]$, $-\lceil M/\delta\rceil\le l_i\le
\lfloor M/\delta\rfloor$, for some small $\delta>0$ (depending on
the value $\eta$ in \eqref{eq:equicont}) to be specified later on.
For $\theta,\psi\in
[-M,M]^6$ with $\|\theta-\psi\|_\infty\le\delta$ and
$l(\theta):= (\lfloor\theta_i/\delta\rfloor )_{1\le i\le
6}$, one has $\|l(\theta)-l(\psi)\|\le1$ and thus
%
\begin{eqnarray}\label{eq:Zndiffapprox}
&&\bigl|Z_n(\theta)-Z_n(\psi)\bigr|
\nonumber
\\
&&\quad  \le \bigl|Z_n(\theta)-Z_n\bigl(l(\theta)\delta\bigr)\bigr| +
\bigl|Z_n(\psi)-Z_n\bigl(l(\psi)\delta\bigr)\bigr| +
\bigl|Z_n\bigl(l(\theta)\delta\bigr)-Z_n\bigl(l(\psi)\delta
\bigr)\bigr|
\nonumber
\\[-8pt]\\[-8pt]
&&\quad  \le 3 \max_{l\in\{-\lceil M/\delta\rceil,\ldots,
\lfloor M/\delta\rfloor\}^6} \sup_{t,u\in I_l}\bigl|Z_n(t)-Z_n(u)\bigr|\nonumber
\\
& &\quad \le 3 \frac{k^{1/2}}{w_n(\gamma_1)\vee w_n(\gamma_2)} \max_{l\in\{-\lceil M/\delta
\rceil,\ldots,
\lfloor M/\delta\rfloor\}^6} \bigl(
\nu_n\bigl(E_{(l+1)\delta}^{(n)}\setminus
E_{l\delta
}^{(n)}\bigr)
 +E \nu_n\bigl(E_{(l+1)\delta}^{(n)}\setminus
E_{l\delta}^{(n)}\bigr) \bigr),\nonumber
\end{eqnarray}
where $(l+1)\delta:= ((l_i+1)\delta)_{1\le i\le6}$. By
(D1), the expectation can be approximated as
follows:
%
\begin{eqnarray}
\label{eq:nuEnapprox} E \nu_n\bigl(E_{(l+1)\delta}^{(n)}
\setminus E_{l\delta}^{(n)}\bigr) &=& \frac{n}k P\bigl
\{T_n^\leftarrow(X,Y)\in E_{(l+1)\delta}^{(n)}\setminus
E_{l\delta}^{(n)}\bigr\}\nonumber \\[-8pt]\\[-8pt]
&=& \nu \bigl(E_{(l+1)\delta}^{(n)}
\setminus E_{l\delta}^{(n)} \bigr) + \mathrm{O}
\bigl(A_0(n/k)\bigr).\nonumber
\end{eqnarray}
To bound the right-hand side, first note that by similar calculations
as in the
proof of Lemma~\ref{lemma:marginalapprox}, one obtains
\begin{eqnarray*}
&&\tilde H_{\th_i,\chi_i,\xi_i}^{(n,i)}(x)
\\
&&\quad  =  \frac{d_n}{e_n} x \left(\vphantom{\begin{array} {l@{\qquad  }l} -k^{-1/2} \log
c_n \biggl(\displaystyle \frac{\chi_i}{\gamma_i} +\mathrm{o}_P(1)\biggr)+
\mathrm{O}_P\bigl(k^{-1/2}(xd_n/e_n)^{-\gamma_i}
\bigr), & \gamma_i>0,
\\
k^{-1/2} (d_nk/n)^{-\gamma_i} \bigl(\bigl(
\xi_i/\gamma_i+\th_i/
\gamma_i^2+\mathrm{o}_P(1)
\bigr)x^{-\gamma
_i}+\mathrm{o}_P(1) \bigr), &
\gamma_i<0,
\\
-k^{-1/2} \log^2 c_n \bigl(\chi_i+
\th_i/2+\mathrm{o}_P(1)\bigr)+ \mathrm{O}_P
\bigl(k^{-1/2} \log c_n\log x\bigr), & \gamma_i=0,
\end{array}} \right.1
\\
& &\left.\hphantom{\quad  =  \frac{d_n}{e_n} x \biggl(}{} + \lleft\{ %
\begin{array} {l@{\qquad  }l} -k^{-1/2} \log
c_n \biggl(\displaystyle \frac{\chi_i}{\gamma_i} +\mathrm{o}_P(1)\biggr)+
\mathrm{O}_P\bigl(k^{-1/2}(xd_n/e_n)^{-\gamma_i}
\bigr), & \gamma_i>0,
\\
k^{-1/2} (d_nk/n)^{-\gamma_i} \bigl(\bigl(
\xi_i/\gamma_i+\th_i/
\gamma_i^2+\mathrm{o}_P(1)
\bigr)x^{-\gamma
_i}+\mathrm{o}_P(1) \bigr), &
\gamma_i<0,
\\
-k^{-1/2} \log^2 c_n \bigl(\chi_i+
\th_i/2+\mathrm{o}_P(1)\bigr)+ \mathrm{O}_P
\bigl(k^{-1/2} \log c_n\log x\bigr), & \gamma_i=0
\end{array} %
\rright. \right)
\end{eqnarray*}
uniformly for $x\in[\lambda_{n,i},\tau_{n,i}]$. That means that
under the same conditions as in Lemma~\ref{lemma:marginalapprox}
one can prove an analogous approximation where $\Gamma_i$ is
replaced with $\chi_i$ if $\gamma_i>0$,
$\alpha_i/\gamma_i-\beta_i-\Gamma_i/\gamma_i^2$ is replaced with
$\xi_i/\gamma_i+\th_i/\gamma_i^2$ if $\gamma_i<0$, and $\Gamma_i$
is replaced with $2\chi_i+\th_i$ in the case $\gamma_i=0$. Hence,
we may also conclude a corresponding analog to Corollary~\ref{cor:termIVVapprox}, that is,
$\nu((e_n/d_n)E^{(n)}_{(\th_i,\chi_i,\xi_i)_{i=1,2}})-\nu(S)$
equals the right-hand side of \eqref{eq:nuHnapprox} with the
above substitutions. Because all integrals are finite, there
exists a constant $K>0$ such that for sufficiently large $n$
\[
\nu\bigl(E_{(l+1)\delta}^{(n)}\bigr)-\nu\bigl(E_{l\delta}^{(n)}
\bigr)\le\frac
{e_n}{d_n} K\delta k^{-1/2} \bigl(w_n(
\gamma_1)\vee w_n(\gamma_2)\bigr)
\]
uniformly for all $l\in\{-\lceil M/\delta\rceil,\ldots,
\lfloor M/\delta\rfloor\}^6$. A combination with
\eqref{eq:nuEnapprox}, $e_n\asymp d_n$ and condition (S2) shows that
to each $\eta>0$ there exists
$\delta>0$ such that for sufficiently large~$n$
%
\begin{equation}
\label{eq:Enunbound} E \nu_n\bigl(E_{(l+1)\delta}^{(n)}
\setminus E_{l\delta}^{(n)}\bigr) \le\frac\eta{12} k^{-1/2}
\bigl(w_n(\gamma_1)\vee w_n(
\gamma_2)\bigr).
\end{equation}
In view of \eqref{eq:Zndiffapprox}, we obtain\vspace*{-2pt}
\begin{eqnarray*}
&&P \Bigl\{ \sup_{\theta,\psi\in[-M,M]^6,
\|\theta-\psi\|_\infty\le\delta}\bigl|Z_n(\theta)-Z_n(
\psi)\bigr|>\eta \Bigr\}
\\
&&\quad  \le P \biggl\{\max_{l\in\{-\lceil M/\delta\rceil,\ldots,
\lfloor M/\delta\rfloor\}^6} \bigl(\nu_n
\bigl(E_{(l+1)\delta}^{(n)}\setminus E_{l\delta}^{(n)}
\bigr)+E \nu _n\bigl(E_{(l+1)\delta}^{(n)}\setminus
E_{l\delta}^{(n)}\bigr) \bigr)
\\
& &\hphantom{\quad  \le P \biggl\{} > \frac\eta3 k^{-1/2} \bigl(w_n(\gamma_1)
\vee w_n(\gamma_2)\bigr) \biggr\}
\\
&&\quad  \le \sum_{l\in\{-\lceil M/\delta\rceil,\ldots,
\lfloor M/\delta\rfloor\}^6} P \biggl\{ \bigl|\nu_n
\bigl(E_{(l+1)\delta
}^{(n)}\setminus E_{l\delta}^{(n)}
\bigr)- E \nu_n\bigl(E_{(l+1)\delta}^{(n)}\setminus
E_{l\delta}^{(n)}\bigr) \bigr|
\\
& &\hphantom{\quad  \le \sum_{l\in\{-\lceil M/\delta\rceil\ldots
\lfloor M/\delta\rfloor\}^6} P \biggl\{} > \frac\eta6 k^{-1/2} \bigl(w_n(\gamma_1)
\vee w_n(\gamma_2)\bigr) \biggr\}.
\end{eqnarray*}
Therefore the asserted asymptotic equicontinuity \eqref{eq:equicont} follows
from \eqref{eq:Enunbound} and Chebyshev's inequality applied to the
binomial random
variables $k \nu_n(E_{(l+1)\delta}^{(n)}\setminus
E_{l\delta}^{(n)})$:\vspace*{-2pt}
\begin{eqnarray*}
&&P \biggl\{ \bigl|\nu_n\bigl(E_{(l+1)\delta}^{(n)}\setminus
E_{l\delta}^{(n)}\bigr)- E \nu_n\bigl(E_{(l+1)\delta}^{(n)}
\setminus E_{l\delta}^{(n)}\bigr) \bigr| > \frac\eta6 k^{-1/2}
\bigl(w_n(\gamma_1)\vee w_n(
\gamma_2)\bigr) \biggr\}
\\
&&\quad  \le \frac{k E \nu_n(E_{(l+1)\delta}^{(n)}\setminus
E_{l\delta}^{(n)})}{(\eta/6)^2
k(w_n(\gamma_1)\vee w_n(\gamma_2))^2} \to0
\end{eqnarray*}
uniformly for all $l\in\{-\lceil M/\delta\rceil,\ldots,
\lfloor M/\delta\rfloor\}^6$.

It remains to prove that $Z_n(\theta)\to0$ in probability for all
$\theta\in[-M,M]^6$. This, however, follows similarly by Chebyshev's
inequality, (D1) and the aforementioned analog to Corollary~\ref
{cor:termIVVapprox}:\vspace*{-2pt}
\begin{eqnarray*}
P\bigl\{\bigl|Z_n(\th)\bigr|>\eta\bigr\}
&  =&  P \bigl\{k \bigl|\nu_n\bigl(E_{\theta}^{(n)}\bigr)-
E \nu_n\bigl(E_{\theta}^{(n)}\bigr) \bigr| > \eta
k^{1/2} \bigl(w_n(\gamma_1)\vee
w_n(\gamma_2)\bigr) \bigr\}
\\
&  \le& \frac{nP\{T_n^\leftarrow(X,Y)\in E_{\theta}^{(n)}\} }{\eta^2
k(w_n(\gamma_1)\vee w_n(\gamma_2))^2}
\\
&  =&  \frac{\nu(E_{\theta}^{(n)})+\mathrm{O}(A_0(n/k))}{\eta^2
(w_n(\gamma_1)\vee w_n(\gamma_2))^2}
\\
&  =&  \frac{\nu(S)+\mathrm{o}(1)}{\eta^2
(w_n(\gamma_1)\vee w_n(\gamma_2))^2}
\\
&  \to& 0.
\end{eqnarray*}
\upqed
\end{pf}

\begin{remark}
Two remarks on this proof are in place. At first glance it seems
peculiar that in the definition of $\tilde
H_{\th_i,\chi_i,\xi_i}^{(n,i)}$ both parameters $-\th_i$ and $\chi
_i$ take over the role of
$k^{1/2}(\hat\gamma_i-\gamma_i)$ in the definition of $\tilde H$.
This, however, is necessary to ensure the crucial monotonicity
property of $\tilde
H_{\th_i,\chi_i,\xi_i}^{(n,i)}$ in the case $\gamma_i>0$.\vspace*{2pt}

Second, we used the (slightly old-fashioned) classical approach
to establish asymptotic equicontinuity instead of the often more elegant
approach via bracketing numbers (see van der Vaart and Wellner \cite{vw00},
Theorem~2.11.9), because the same approximation error of order
$\mathrm{O}(A_0(n/k))$ in (D1)
always enters the upper bound on $E \nu_n(E_{(l+1)\delta
}^{(n)}\setminus
E_{l\delta}^{(n)})$, thus impeding the calculation of bracketing
numbers for radii of smaller order.
\end{remark}

Next, we show that the terms $I$ and $\mathit{III}$ in decomposition
\eqref{eq:esterror} are negligible.

\begin{lemma} \label{lemma:truncInegl}
If the conditions of Theorem~\ref{theo:main} are
fulfilled, then
%
\begin{eqnarray}
\label{eq:termIbound} \hat p_n - \frac{1}{e_n} \nu_n
\biggl(\frac{d_n}{e_n} H_n\bigl(S_n^*\bigr) \biggr) & =
& \mathrm{o}_P \bigl( d_n^{-1}k^{-1/2}
\bigl(w_n(\gamma_1)\vee w_n(
\gamma_2)\bigr) \bigr),
\\
\label{eq:termIIIbound} \frac{1}{e_n} \bigl( E\nu_n(B)-\nu(B)
\bigr)|_{B=(d_n/e_n)H_n(S_n^*)} & = & \mathrm{o}_P \bigl(
d_n^{-1}k^{-1/2} \bigl(w_n(
\gamma_1)\vee w_n(\gamma_2)\bigr) \bigr).
\end{eqnarray}
\end{lemma}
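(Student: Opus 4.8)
The plan is to establish the two assertions \eqref{eq:termIIIbound} and \eqref{eq:termIbound} separately, beginning with the (more direct) bias term $III$. Since $E\nu_n(B)=\tfrac nk P\{T_n^\leftarrow(X,Y)\in B\}$ and $T_n^\leftarrow(X,Y)$ has exactly the form appearing in condition (D1) for $t_n=n/k$, everything reduces to showing that the random plugged-in set $B=\tfrac{d_n}{e_n}H_n(S_n^*)$ lies, with probability close to $1$, in the class $\mathcal B_{n/k,M}$ for a suitable fixed $M$. But by the identity \eqref{eq:tildeHrep} derived in the proof of Lemma \ref{lemma:empprocess}, on $[u_n^*,\infty)\times[v_n^*,\infty)$ one has $\tfrac{d_n}{e_n}H_n(x_1,x_2)=\big(\tilde H^{(n,i)}_{\th_i,\chi_i,\xi_i}(x_i)\big)_{i=1,2}$ with data-dependent parameters $-\th_i=\chi_i=k^{1/2}(\hat\gamma_i-\gamma_i)$ and $\xi_i=k^{1/2}\big(\hat a_i(n/k)/a_i(n/k)-1-(\hat b_i(n/k)-b_i(n/k))\hat\gamma_i/a_i(n/k)\big)$, which are stochastically bounded by (M3). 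Hence, given $\zeta>0$, I would fix $M$ with $P(A_{n,M})\ge1-\zeta$ for all large $n$, where $A_{n,M}:=\{\max_i(|\th_i|,|\chi_i|,|\xi_i|)\le M\}$, and note that on $A_{n,M}$ the set $\tfrac{d_n}{e_n}H_n(S_n^*)$ is of the required form with $C=S\cap[u_n^*,\infty)\times[v_n^*,\infty)$. The uniform bound in (D1) then gives $E\nu_n(B)-\nu(B)=O(A_0(n/k))$ on $A_{n,M}$, so dividing by $e_n$ and using $e_n\asymp d_n$ together with $A_0(n/k)=o\big(k^{-1/2}\max(w_n(\gamma_1),w_n(\gamma_2))\big)$ from (S2), the left-hand side of \eqref{eq:termIIIbound} is $o\big(d_n^{-1}k^{-1/2}(w_n(\gamma_1)\vee w_n(\gamma_2))\big)$ on $A_{n,M}$; letting $\zeta\downarrow0$ finishes this part.

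For \eqref{eq:termIbound} I would first rewrite both terms as empirical point counts: with $n/(ke_n)=c_n^{-1}$ and $\hat T_n^{(c)\leftarrow}=c_n^{-1}\hat T_n^\leftarrow$, $e_n\hat p_n=\tfrac1k\#\{i\mid\hat T_n^\leftarrow(X_i,Y_i)\in\hat T_n^{(c)\leftarrow}(D_n)\}$, while $\nu_n\big(\tfrac{d_n}{e_n}H_n(S_n^*)\big)=\tfrac1k\#\{i\mid T_n^\leftarrow(X_i,Y_i)\in\tfrac{d_n}{e_n}H_n(S_n^*)\}$ with $\tfrac{d_n}{e_n}H_n=T_n^\leftarrow\circ\hat T_n\circ\hat T_n^{(c)\leftarrow}\circ U(d_n\cdot)$. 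By Lemma \ref{lemma:marginalapprox}, with probability tending to $1$ the maps $U_i(d_n\cdot),\hat T_{n,i}^{(c)\leftarrow},\hat T_{n,i},T_{n,i}^\leftarrow$ are genuine, continuous and strictly increasing on the ranges relevant for arguments in $[\lambda_{n,i},\tau_{n,i}]$, with $U_i(d_nx)$ in the genuine domain of $\hat T_{n,i}^\leftarrow$ and $\tfrac{d_n}{e_n}H_{n,i}(x)/x$ uniformly close to $1$ there; since $u_n^*\ge\lambda_{n,1}$ and $v_n^*\ge\lambda_{n,2}$, so that $S_n^*$ is bounded away from the axes, cancelling the outer maps shows that on this event the two index sets coincide — the ill-defined branches of $\hat T_{n,i}^\leftarrow$ contribute to neither, because their image has a vanishing $i$th coordinate whereas $\hat T_{n,i}^{(c)\leftarrow}(U_i(d_nx))>0$ for $x\in S_n^*$. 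Thus, with probability tending to $1$, $\nu_n\big(\tfrac{d_n}{e_n}H_n(S_n^*)\big)=\hat\nu_n\big(\tfrac n{ke_n}\hat T_n^\leftarrow(U(d_nS_n^*))\big)$, and since $D_n\supseteq U(d_nS_n^*)$ (as $U(d_nx)\in\R^2$ for $x\in S_n^*$),
$$ 0\le e_n\Big(\hat p_n-\tfrac1{e_n}\nu_n\big(\tfrac{d_n}{e_n}H_n(S_n^*)\big)\Big)=\hat\nu_n\Big(\tfrac n{ke_n}\hat T_n^\leftarrow(D_n)\setminus\tfrac n{ke_n}\hat T_n^\leftarrow(U(d_nS_n^*))\Big)\le\hat\nu_n\Big(\tfrac n{ke_n}\hat T_n^\leftarrow\big(U(d_n(S\setminus S_n^*))\big)\Big) $$
on the same event.

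The hard part will be bounding this last expression by $o_P\big(d_n^{-1}k^{-1/2}(w_n(\gamma_1)\vee w_n(\gamma_2))\big)$; the delicate point is the bookkeeping showing that only observations whose standardized position lies in a thin boundary layer near the axes (or near $\infty$) can contribute. Concretely, I would apply the change of variables of Lemma \ref{lemma:marginalapprox} on the part of $S\setminus S_n^*$ where both coordinates exceed $\lambda_{n,i}$ (turning that part of the $\hat\nu_n$-count into a $\nu_n$-count of a random subset of $\tfrac{d_n}{e_n}H_n(S\setminus S_n^*)$), and treat the remaining indices — where some $\hat T_{n,i}^\leftarrow$ is degenerate or maps below $\lambda_{n,i}$ — directly, using, exactly as in the case distinctions $x_l>0$ versus $x_l=0$ (resp.\ $q(\infty)>0$ versus $q(\infty)=0$) in the proof of Corollary \ref{cor:termIVVapprox}, that for such an index $T_n^\leftarrow(X_i,Y_i)$ is forced into the part of $[0,\infty)^2$ corresponding to a coordinate of $S$ below $\lambda_{n,i}$. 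This yields, for any fixed $\delta>0$ and with probability at least $1-\zeta$ for large $n$, the bound $\le\tfrac1{e_n}\nu_n(R_n)=\tfrac1{ke_n}\#\{i\mid T_n^\leftarrow(X_i,Y_i)\in R_n\}$, where $R_n$ is a deterministic finite union of near-axis sets in $\mathcal B_{n/k,M}$ obtained from $S\setminus S_n^*$ by the bounded rescaling induced by $\tfrac{d_n}{e_n}H_{n,i}$. The Lemma \ref{lemma:Snstarapprox} estimates — shown in the proof of Corollary \ref{cor:termIVVapprox} to be $o\big(k^{-1/2}w_n(\gamma_i)\big)$ for the sequences $\lambda_{n,i},\tau_{n,i}$ chosen there, and insensitive to a bounded rescaling of them — then give $\nu(R_n)=o\big(k^{-1/2}(w_n(\gamma_1)\vee w_n(\gamma_2))\big)$. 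Finally, since $R_n$ is deterministic, $\#\{i\mid T_n^\leftarrow(X_i,Y_i)\in R_n\}$ is binomially distributed with mean $k\nu(R_n)+O\big(kA_0(n/k)\big)$ by (D1); dividing by $ke_n$ and using $e_n\asymp d_n$ and (S2), Markov's inequality gives $\tfrac1{ke_n}\#\{i\mid T_n^\leftarrow(X_i,Y_i)\in R_n\}=o_P\big(d_n^{-1}k^{-1/2}(w_n(\gamma_1)\vee w_n(\gamma_2))\big)$, and letting $\zeta\downarrow0$ yields \eqref{eq:termIbound}. Apart from this boundary-layer bookkeeping, every step is a routine application of (D1), (S2) and Markov's (resp.\ Chebyshev's) inequality, precisely as for term $III$.
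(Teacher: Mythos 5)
Your proposal is correct and matches the paper's own proof: for \eqref{eq:termIIIbound} you apply (D1) uniformly over the random set after restricting to the stochastic-boundedness event from (M3) and invoke (S2) together with $d_n\asymp e_n$; for \eqref{eq:termIbound} you reduce, via the identity $\hat p_n = \nu_n\big((d_n/e_n)H_n(S)\big)/e_n$, to the non-negative quantity $\nu_n\big((d_n/e_n)H_n(S\setminus S_n^*)\big)/e_n$, control its (conditional) expectation with (D1), Lemma \ref{lemma:Snstarapprox} and the rate bounds established in the proof of Corollary \ref{cor:termIVVapprox}, and finish with Markov's inequality. Your write-up makes explicit the conditioning on the boundedness event and the cancellation of the point-count maps, both of which the paper leaves implicit, but the decomposition and the lemmas invoked are the same.
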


\begin{pf}
As $\hat p_n= \nu_n( (d_n/e_n)H_n(S) )/e_n$, the left-hand
side of \eqref{eq:termIbound} is
non-negative with expectation
\begin{eqnarray*}
&&\frac{n}{ke_n} P \biggl\{T_n^\leftarrow(X,Y)\in
\frac{d_n}{e_n} H_n\bigl(S\setminus S_n^*\bigr) \biggr
\}
\\
&&\quad  \le \frac{n}{ke_n} P \biggl\{T_n^\leftarrow(X,Y)\in
\frac{d_n}{e_n} H_n \bigl( \bigl(0,u_n^*\bigr)\times\bigl[q
\bigl(u_n^*-\bigr),\infty\bigr) \cup \bigl[q^\leftarrow
\bigl(v_n^*\bigr),\infty\bigr)\times \bigl[q(\infty),v_n^*\bigr) \bigl) \biggr
\}
\\
&&\quad  =  \frac{1}{d_n} \bigl( \nu \bigl(H_n \bigl(
\bigl(0,u_n^*\bigr)\times\bigl[q\bigl(u_n^*-\bigr),\infty\bigr)
\cup \bigl[q^\leftarrow\bigl(v_n^*\bigr),\infty\bigr)\times \bigl[q(
\infty),v_n^*\bigr) \bigr)\bigr)
\\
& &\quad\hphantom{\frac{1}{d_n} \bigl( \nu\bigl(}{} + \mathrm{o} \bigl( k^{-1/2} \bigl(w_n(
\gamma_1)\vee w_n(\gamma_2)\bigr) \bigr) \bigr),
\end{eqnarray*}
%
where we have used (D1) and (S2). Now assertion \eqref{eq:termIbound}
follows from
Lemma~\ref{lemma:Snstarapprox} and the proof of Corollary~\ref
{cor:termIVVapprox}.

Likewise, by conditions (D1), (S2) and $d_n\asymp e_n$, the
left-hand side of \eqref{eq:termIIIbound} equals
\begin{eqnarray*}
&& \frac{1}{e_n} \biggl(\frac{n}k P\bigl\{ T_n^\leftarrow(X,Y)
\in B\bigr\}-\nu(B) \biggr) \biggl|_{B=(d_n/e_n)H_n(S_n^*)}
\\
&&\quad   =  \mathrm{O}_P \bigl(e_n^{-1}
A_0(n/k) \bigr)
\\
&&\quad   =  \mathrm{o}_P \bigl( d_n^{-1}k^{-1/2}
\bigl(w_n(\gamma_1)\vee w_n(
\gamma_2)\bigr) \bigr).
\end{eqnarray*}
\upqed
\end{pf}

Finally, we derive a bound on term $\mathit{VI}$ in decomposition
\eqref{eq:esterror}.

\begin{lemma} \label{lemma:termVI}
Under the assumptions of Theorem~\ref{theo:main} one has
\[
\nu(d_nS)-p_n = \mathrm{o} \bigl(d_n^{-1}k^{-1/2}
\bigl(w_n(\gamma_1)\vee w_n(
\gamma_2)\bigr) \bigr).
\]
\end{lemma}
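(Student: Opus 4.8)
The plan is to exploit the homogeneity \eqref{eq:homog} of $\nu$, by which $\nu(d_nS)=d_n^{-1}\nu(S)$, so that the claim is equivalent to
$$ d_n p_n - \nu(S) = o\big(k^{-1/2}(w_n(\gamma_1)\vee w_n(\gamma_2))\big), $$
and to bound the left-hand side along the same lines as the proof of Corollary \ref{cor:termIVVapprox}, but with the random transformation $H_n$ of Lemma \ref{lemma:marginalapprox} replaced by the \emph{deterministic} marginal transformation at level $d_n$, whose distortion of $\nu(S)$ is governed by the deterministic bias $O(A_i(d_n))$ alone.

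Set $V_{n,i}(x):=(1+\gamma_i(x-b_i(d_n))/a_i(d_n))^{1/\gamma_i}$ and $\Psi_{n,i}(z):=V_{n,i}(U_i(d_nz))$, so that $\Psi_{n,i}$ is exactly the coordinate map occurring in the definition of $\mathcal{B}_{d_n,M}$ for $t_n=d_n$. Since by (M1) each $U_i$ is eventually continuous and strictly increasing, $\{(X,Y)\in D_n\}=\{(X,Y)\in U(d_nS)\cap\R^2\}$ agrees, up to a set of negligible probability arising near the coordinate axes, with $\{V_n(X,Y)\in\Psi_n(S)\}$. Writing $S_n^*=S\setminus(A_n\cup B_n)$ with $A_n:=S\cap((0,u_n^*)\times(0,\infty))$ and $B_n:=S\cap((0,\infty)\times(0,v_n^*))$ — both admissible in (D1) at $t_n=d_n$ — and further splitting off the far ``north-east'' pieces $\{X\ge U_1(d_n\tau_{n,1})\}$, $\{Y\ge U_2(d_n\tau_{n,2})\}$ (with probabilities $\le(d_n\tau_{n,1})^{-1}$, $\le(d_n\tau_{n,2})^{-1}$), one obtains from (D1) (applied to $\Psi_n(A_n)$, $\Psi_n(B_n)$ and, via the monotone exhaustion $S=\bigcup_{u\uparrow\infty}S\cap((0,u)\times(0,\infty))$ together with the uniformity in (D1), to $\Psi_n(S)$) that
$$ d_n p_n = \nu\big(\Psi_n(S_n^*)\big) + O(A_0(d_n)) + o\big(k^{-1/2}(w_n(\gamma_1)\vee w_n(\gamma_2))\big); $$
here the $o$-term collects $\nu(\Psi_n(A_n))$, $\nu(\Psi_n(B_n))$ and the far ``north-east'' contributions, which are bounded, for the choices of $\lambda_{n,i},\tau_{n,i}$ made in the proof of Corollary \ref{cor:termIVVapprox}, by the quantities appearing in Lemma \ref{lemma:Snstarapprox} and hence are $o(k^{-1/2}(w_n(\gamma_1)\vee w_n(\gamma_2)))$.

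It then remains to estimate $\nu(\Psi_n(S_n^*))-\nu(S_n^*)$ (recall $\nu(S_n^*)=\nu(S)+o(k^{-1/2}(w_n(\gamma_1)\vee w_n(\gamma_2)))$ by Lemma \ref{lemma:Snstarapprox} and the proof of Corollary \ref{cor:termIVVapprox}). I would rerun Lemmas \ref{lemma:doubleint} and \ref{lemma:singleintapprox} with $H_n$ replaced by $\Psi_n$: by \eqref{eq:secordunif2}, $\Psi_{n,i}(z)=z\big(1+O(A_i(d_n)z^{\rho_i\pm\eps})\big)$ uniformly on $[\lambda_{n,i},\tau_{n,i}]$, so that $\partial S_n^*$ is displaced by $O(A_i(d_n))$ times a polynomially bounded weight, and the induced change of the $\nu$-measure equals $O(A_1(d_n)+A_2(d_n))$ times integrals of the type $\int (q^\leftarrow(v))^{1+\rho_1\pm\eps}\eta(q^\leftarrow(v),v)\,dv$, $\int (q(u))^{1+\rho_2\pm\eps}\eta(u,q(u))\,du$, which are finite by the arguments of Lemma \ref{lemma:singleintapprox} (indeed more easily, since $1+\rho_i\pm\eps<1$). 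Collecting the three error contributions — the axes/tail truncation, the boundary displacement $O(A_1(d_n)+A_2(d_n))$, and the (D1) error $O(A_0(d_n))$ — and recalling that $A_i$ is regularly varying with index $\rho_i<0$ (and, in the applications, so is $A_0$) while $d_n\gg n/k$ because $c_n=e_nk/n\to\infty$ and $d_n\asymp e_n$, so that $A_i(d_n)=O(A_i(n/k))$ and $A_0(d_n)=O(A_0(n/k))$, condition (S2) makes all three $o(k^{-1/2}(w_n(\gamma_1)\vee w_n(\gamma_2)))$. This proves the claim.

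The step I expect to be the main obstacle is the first one: turning $p_n=P\{(X,Y)\in D_n\}$ into a probability of $V_n(X,Y)$ lying in a set that is literally of the form admitted by $\mathcal{B}_{d_n,M}$, while at the same time controlling the parts of $S$ near the coordinate axes — where the map $z\mapsto U_i(d_nz)$ may fail to be one-to-one or to lie in the range in which \eqref{eq:secordunif2} is effective — and the far north-east tail. Once this bookkeeping is in place, the two quantitative inputs — the $O(A_0(d_n))$ bound from (D1) and the $O(A_1(d_n)+A_2(d_n))$ displacement bound from \eqref{eq:secordunif2} — are routine given the estimates already established in Lemmas \ref{lemma:Snstarapprox}--\ref{lemma:singleintapprox} and the proof of Corollary \ref{cor:termIVVapprox}, and (S2) supplies the required rates.
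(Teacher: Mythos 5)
Your proposal is essentially the paper's own argument: you introduce the deterministic marginal map at level $d_n$ (your $\Psi_n$, the paper's $H_n^*$), show it is close to the identity on $[\lambda_{n,i},\tau_{n,i}]$, invoke (D1) at $t_n=d_n$ to pass from $p_n$ to $\nu(\Psi_n(S_n^*))$, and then reuse the displacement machinery of Lemmas \ref{lemma:doubleint}--\ref{lemma:singleintapprox} together with Lemma \ref{lemma:Snstarapprox} to conclude. The decomposition and the sources of error you identify (the $A_0(d_n)$ bias from (D1), the $A_i(d_n)$-sized boundary displacement, and the axis/tail truncation) are exactly those the paper controls.

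The one place where you diverge slightly is the form of the deterministic error bound. You apply \eqref{eq:secordunif2} directly at $t=d_n$ to get $\Psi_{n,i}(x)=x\bigl(1+O(A_i(d_n)\,x^{\rho_i\pm\eps})\bigr)$, whereas the paper writes $a_i(t)$ as a multiple of $t^{\gamma_i}$ and $b_i(t)=U_i(t)+O(a_i(t)A_i(t))$ and passes through $t=n/k$, which reproduces a remainder term with the \emph{same} $x^{-\gamma_i}$-structure as in Lemma \ref{lemma:marginalapprox} (but with all leading terms zero). That choice lets the paper say ``proceed as before'' and reuse the integral bound \eqref{eq:etaintbound2} verbatim. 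Your version leads to integrals with exponent $1+\rho_i\pm\eps$ instead of $1-\gamma_i$; these are also controllable — the factor $A_i(d_n)=O(A_i(n/k)(d_nk/n)^{\rho_i+\eps'})$ more than compensates for any blow-up as $\lambda_{n,i}\downarrow 0$ — but your one-line justification (``finite by the arguments of Lemma \ref{lemma:singleintapprox}, indeed more easily'') is not quite accurate as stated, since $1+\rho_i-\eps$ can be negative and $1-\gamma_i$ and $1+\rho_i\pm\eps$ are not ordered in general; a short extra calculation along the lines of \eqref{eq:etaintbound2} is needed. Similarly, your appeal to regular variation of $A_0$ is a hedge the paper also makes implicitly (both proofs need $A_0(d_n)=O(A_0(n/k))$, which follows once $A_0$ is eventually non-increasing). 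Apart from these minor points, the argument is sound and structurally identical to the paper's.
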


\begin{pf}
With $\lambda_{n,i},\tau_{n,i}$ as in Lemma~\ref{lemma:marginalapprox}, we define for
$x\in[\lambda_{n,i},\tau_{n,i}]$
\[
H_{n,i}^*(x) := \biggl(1+\gamma_i\frac{U_i(d_nx)-b_i(d_n)}{a_i(d_n)}
\biggr)^{1/\gamma_i}.
\]
According to de Haan and Ferreira \cite{hf06}, Theorems 2.3.6 and 2.3.7
one can choose $a_i(t)$ as a multiple of $t^{\gamma_i}$ and
$b_i(t)=U_i(t)+\mathrm{O}(a_i(t)A_i(t))$. Thus, for $\Delta_1(x)$ defined in
the proof of Lemma~\ref{lemma:marginalapprox}
\begin{eqnarray*}
&&\frac{U_i(d_nx)-b_i(d_n)}{a_i(d_n)}
\\
&&\quad  =  \frac{a_i(n/k)}{a_i(d_n)} \biggl(\frac{U_i(d_nx)-b_i(n/k)}{a_i(n/k)}-\frac
{b_i(d_n)-b_i(n/k)}{a_i(n/k)} \biggr)
\\
&&\quad  =  \biggl(\frac{n}{kd_n} \biggr)^{\gamma_i} \biggl(
\frac{(xd_nk/n)^{\gamma_i}-1}{\gamma
_i}+\Delta_1(x)+ \frac{(d_nk/n)^{\gamma_i}-1}{\gamma_i}+
\Delta_1(1) \biggr)+\mathrm{O}\bigl(A_i(d_n)
\bigr)
\\
&&\quad  =  \frac{x^{\gamma_i}-1}{\gamma_i} + \mathrm{O} \biggl(A_i(n/k) \biggl(
\frac{d_nk}n \biggr)^{\rho_i+\eps}\bigl(x^{\gamma
_i+\rho_i+\eps}+1\bigr) \biggr) +
\mathrm{o} \biggl(A_i(n/k) \biggl(\frac{d_nk}n
\biggr)^{\rho_i+\eps} \biggr),
\end{eqnarray*}
where in the last step we have used \eqref{eq:Delta1}, \eqref
{eq:cnlambdandiv} and the Potter bound for the
regularly varying function $A_0$ (de Haan and Ferreira \cite{hf06},
Proposition B.1.9 5.). We conclude that
\begin{eqnarray*}
1+\gamma_i \frac{U_i(d_nx)-b_i(d_n)}{a_i(d_n)} &=& x^{\gamma_i} \biggl(1+
\mathrm{O} \biggl(A_i(n/k) \biggl(\frac{xd_n k}n
\biggr)^{\rho
_i+\eps} \biggr) \\
&&\hphantom{x^{\gamma_i} \biggl(}{}+ \mathrm{O} \biggl(A_i(n/k) \biggl(
\frac{d_n
k}n \biggr)^{\rho_i+\eps}x^{-\gamma_i} \biggr) \biggr).
\end{eqnarray*}
Check that the first remainder term is of smaller order than
$k^{-1/2}w_n(\gamma_i)$
by condition (i) of Lemma~\ref{lemma:marginalapprox}. Moreover,
for $\gamma_i>0$, \eqref{eq:lambdanbound} and again condition (i) of
Lemma~\ref{lemma:marginalapprox} imply
\[
A_i(n/k) \biggl(\frac{d_n k}n \biggr)^{\rho_i+\eps}x^{-\gamma_i}
= \mathrm{O} \biggl(A_i(n/k) \biggl(\frac{d_n k}n
\biggr)^{\rho_i+\eps} k^{1/2}/\log c_n \biggr) \to0,
\]
while for $\gamma_i<0$ this convergence follows from the conditions
(i) and (iii) of Lemma~\ref{lemma:marginalapprox}, and for $\gamma_i$ it is obvious from
condition (i).

This shows that $H_{n,i}^*(x)$ is indeed well defined with
\[
H_{n,i}^*(x) = x \left(\vphantom{\begin{array} {l@{\qquad  }l}
\mathrm{o}\bigl(k^{-1/2} \log c_n\bigr)+\mathrm{O}
\bigl(A_i(n/k) (d_nk/n)^{\rho_i+\eps}x^{-\gamma_i}
\bigr), & \gamma_i>0,
\\\noalign{\vspace*{2pt}}
\mathrm{o} \bigl(k^{-1/2} (d_nk/n)^{-\gamma_i}
\bigl(1+x^{-\gamma_i}\bigr) \bigr), &\gamma_i<0,
\\\noalign{\vspace*{2pt}}
\mathrm{o}\bigl(k^{-1/2} \log^2 c_n\bigr), &
\gamma_i=0 \end{array}} 1+ \lleft\{ %
\begin{array} {l@{\qquad  }l}
\mathrm{o}\bigl(k^{-1/2} \log c_n\bigr)+\mathrm{O}
\bigl(A_i(n/k) (d_nk/n)^{\rho_i+\eps}x^{-\gamma_i}
\bigr), & \gamma_i>0,
\\\noalign{\vspace*{2pt}}
\mathrm{o} \bigl(k^{-1/2} (d_nk/n)^{-\gamma_i}
\bigl(1+x^{-\gamma_i}\bigr) \bigr), &\gamma_i<0,
\\\noalign{\vspace*{2pt}}
\mathrm{o}\bigl(k^{-1/2} \log^2 c_n\bigr), &
\gamma_i=0 \end{array} %
\rright.\right)
\]
uniformly for $x\in[\lambda_{n,i},\tau_{n,i}]$. Notice that this
representation
is of similar type as the approximation derived in Lemma~\ref{lemma:marginalapprox} with all leading terms equal to 0
(though in the case $\gamma_i>0$ the second remainder term has a
slightly different form). Therefore, we may proceed as before to
conclude
\begin{eqnarray*}
&&\nu\bigl(H_n^*\bigl(S_n^*\bigr)\bigr)-\nu(S)
\\
& &\quad  = \mathrm{o} \bigl(k^{-1/2}\bigl(w_n(
\gamma_1)\vee w_n(\gamma_2)\bigr) \bigr) +
\sum_{i=1}^2 \mathrm{O}
\bigl(A_i(n/k) (d_nk/n)^{\rho_i+\eps}
\bigr)1_{\{\gamma_i>0\}}
\\
&&\quad  =  \mathrm{o} \bigl(k^{-1/2}\bigl(w_n(
\gamma_1)\vee w_n(\gamma_2)\bigr) \bigr),
\end{eqnarray*}
where the last equality follows from Lemma~\ref{lemma:marginalapprox}(i) (cf. Corollary~\ref{cor:termIVVapprox}).

To complete the proof, we must show that
\[
p_n-\nu \bigl( d_nH_n^*\bigl(S_n^*
\bigr) \bigr) = \mathrm{o} \bigl(d_n^{-1}k^{-1/2}
\bigl(w_n(\gamma_1)\vee w_n(
\gamma_2)\bigr) \bigr).
\]
This, however, follows from assumption (D1) (with $t=d_n$) in a
similar way as \eqref{eq:termIbound}.
\end{pf}

\begin{pf*}{Proof of Theorem~\ref{theo:main}}
The assertion is a direct consequence of \eqref{eq:esterror},
Corollary~\ref{cor:termIVVapprox} and
of the Lemmas \ref{lemma:truncInegl}, \ref{lemma:empprocess} and
\ref{lemma:termVI}.
\end{pf*}

\begin{pf*}{Proof of Corollary~\ref{cor:asvarest}}
First note that, similarly as for $\hat p_n$, one obtains the
representation
$\hat\nu_n(\hat S_{n,2}^+)= \nu_n (\frac{d_n}{e_n} H_n^+(S)
)$ with
$H_n^+(x,y):= (H_{n,1}(x),H_{n,2}^+(y) )$,
\[
H_{n,2}^+(y) := \frac{e_n}{d_n} T_n^\leftarrow\circ
\hat T_n\circ\bigl(\hat T_n^{(c^+)}
\bigr)^\leftarrow\circ U(d_n y)
\]
and $c^+:= c_n^+ := (1+\ell_n)n/(ke_n)$. Thus, Lemma~\ref{lemma:marginalapprox} (with $e_n$ replaced by
$e_n/(1+\ell_n)$) yields the approximation
\begin{eqnarray*}
&&H_{n,2}^+(y) \\
&&\quad = (1+\ell_n) y \left(\vphantom{\begin{array} {l@{\qquad  }l} -k^{-1/2} \log
c_n \bigl(\Gamma_2/\gamma_2 +
\mathrm{o}_P(1)\bigr)+\mathrm{O}_P
\bigl(k^{-1/2}(yd_n/e_n)^{-\gamma_2}\bigr), &
\gamma_2>0,
\\
k^{-1/2} (d_nk/n)^{-\gamma_2} \bigl(\bigl(
\alpha_2/\gamma_2-\beta_2-
\Gamma_2/\gamma _2^2+\mathrm{o}_P(1)
\bigr)y^{-\gamma_2}+\mathrm{o}_P(1) \bigr), &
\gamma_2<0,
\\
-k^{-1/2} \log^2 c_n \bigl(\Gamma_2/2+
\mathrm{o}_P(1)\bigr)+ \mathrm{O}_P
\bigl(k^{-1/2} \log c_n\log y\bigr), & \gamma_2=0.
\end{array}}\right. 1
\\
& &\left.\hphantom{\quad = (1+\ell_n) y \bigl(}{} + \lleft\{ %
\begin{array} {l@{\qquad  }l} -k^{-1/2} \log
c_n \bigl(\Gamma_2/\gamma_2 +
\mathrm{o}_P(1)\bigr)+\mathrm{O}_P
\bigl(k^{-1/2}(yd_n/e_n)^{-\gamma_2}\bigr), &
\gamma_2>0,
\\
k^{-1/2} (d_nk/n)^{-\gamma_2} \\
\quad {}\times\bigl(\bigl(
\alpha_2/\gamma_2-\beta_2-
\Gamma_2/\gamma _2^2+\mathrm{o}_P(1)
\bigr)y^{-\gamma_2}+\mathrm{o}_P(1) \bigr), &
\gamma_2<0,
\\\noalign{\vspace*{2pt}}
-k^{-1/2} \log^2 c_n \bigl(\Gamma_2/2+
\mathrm{o}_P(1)\bigr)+ \mathrm{O}_P
\bigl(k^{-1/2} \log c_n\log y\bigr), & \gamma_2=0
\end{array} %
\rright. \hspace*{-3pt}\right).
\end{eqnarray*}
Now the very same arguments as used in the analysis of $\hat p_n$
show that
\[
\hat\nu_n\bigl(\hat S_{n,2}^+\bigr) = \nu
\bigl(S_{n,2}^+\bigr)+ \mathrm{O}_P \bigl(k^{-1/2}
\bigl(w_n(\gamma_1)\vee w_n(
\gamma_2)\bigr) \bigr).
\]
\goodbreak
Together with an analogous approximation for $\hat\nu_n(\hat
S_{n,2}^-)$ and our assumption on
$\ell_n$, we may conclude that
\begin{eqnarray*}
\frac{d_n}{e_n} \hat I_{n,2} & = & \frac{d_n}{e_n}
\frac{\nu
(S_{n,2}^-)-\nu(S_{n,2}^+)}{2\ell_n}+\mathrm{o}_P(1)
\\
& = & \int_{x_l}^\infty(2\ell_n)^{-1}
\int_{(1-\ell_n)q(u)}^{(1+\ell_n)q(u)}\eta(u,v) \,\mathrm{d}v \,\mathrm{d}u
\\
& \to& \int_{x_l}^\infty q(u)\eta\bigl(u,q(u)\bigr)
\,\mathrm{d}u.
\end{eqnarray*}
In the last step we have used the fact that, on the range of
integration, $\eta(u,v)$ is
continuous and bounded by a multiple of $u^{-3}\vee(q(u))^{-3}$
(cf. \eqref{eq:etanurel}),
so that the integrand of the outer integral can easily be bounded by
an integrable function and convergence follows by the dominated
convergence theorem.
\end{pf*}

\section*{Acknowledgements}
The research was partially supported by
grant FCT/PTDC/MAT/ 112770/2009 of the Portuguese National Foundation
for Science and
Technology FCT. We thank two anonymous referees and the associate
editor for their thoughtful comments which helped to improve the
presentation significantly.



\printhistory

\end{document}